\newtheorem{theorem}{Theorem}
\newtheorem*{theorem*}{Theorem}
\newtheorem{definition}{Definition}
\newtheorem{lemma}{Lemma}
\newtheorem{lemma*}{Lemma}
\newtheorem{observation}{Observation}
\begin{document}

\title{\textbf{Improved lower and upper bounds on the tile complexity of uniquely self-assembling a thin rectangle non-cooperatively in 3D}}

\author{%
David Furcy\thanks{Computer Science Department, University of Wisconsin Oshkosh, Oshkosh, WI 54901, USA,\protect\url{furcyd@uwosh.edu}.}
\and
Scott M. Summers\thanks{Computer Science Department, University of Wisconsin Oshkosh, Oshkosh, WI 54901, USA,\protect\url{summerss@uwosh.edu}.}
\and
Logan Withers\thanks{Computer Science Department, University of Wisconsin Oshkosh, Oshkosh, WI 54901, USA,\protect\url{withel75@uwosh.edu}.}
}


\date{}
\maketitle

\begin{abstract}
We investigate a fundamental question regarding a benchmark class of shapes in one of the simplest, yet most widely utilized abstract models of algorithmic tile self-assembly.
More specifically, we study the directed tile complexity of a $k \times N$ thin rectangle in Winfree's ubiquitous abstract Tile Assembly Model, assuming that cooperative binding cannot be enforced (temperature-1 self-assembly) and that tiles are allowed to be placed at most one step into the third dimension (just-barely 3D).
While the directed tile complexities of a square and a scaled-up version of any algorithmically specified shape at temperature 1 in just-barely 3D are both asymptotically the same as they are (respectively) at temperature 2 in 2D, the (loose) bounds on the directed tile complexity of a thin rectangle at temperature 2 in 2D are not currently known to hold at temperature 1 in just-barely 3D.
Motivated by this discrepancy, we establish new lower and upper bounds on the directed tile complexity of a thin rectangle at temperature 1 in just-barely 3D.
On our way to proving our lower bound, we develop a new, more powerful type of specialized Window Movie Lemma that lets us derive our lower bound via a counting argument, where we upper bound the number of ``sufficiently similar'' ways to assign glues to a set (rather than a sequence) of fixed locations.
Consequently, our lower bound, $\Omega\left( N^{\frac{1}{k}} \right)$, is an asymptotic improvement over the previous state of the art lower bound and is more aesthetically pleasing since it eliminates the non-constant term $k$ that used to divide $N^{\frac{1}{k}}$.
The proof of our upper bound is based on the construction of a novel, just-barely 3D temperature-1 counter, organized according to ``digit regions'', which affords it roughly fifty percent more digits for the same target rectangle compared to the previous state of the art counter.
This increase in digit density results in an upper bound of $O\left( N^{\frac{1}{\left \lfloor \frac{k}{2} \right \rfloor}}  + \log N \right)$, that is an asymptotic improvement over the previous state of the art upper bound and roughly the square of our lower
bound.
\end{abstract}

\section{Introduction}
\label{sec:intro}

%
%

%
%
%
%
%
A key objective in algorithmic self-assembly is to characterize the extent to which an algorithm can be converted to an efficient self-assembling system comprised of discrete, distributed and disorganized units that, through random encounters with and locally-defined reactions to each other, coalesce into a terminal assembly having a desirable form or function.
In this paper, we study a fundamental question regarding a benchmark class of shapes in one of the simplest yet most popular abstract models of algorithmic self-assembly. 
Ubiquitous throughout the theory of algorithmic self-assembly, Erik Winfree's abstract Tile Assembly Model (aTAM) \cite{Winf98} is a discrete mathematical model of DNA tile self-assembly \cite{Seem82} that augments classical Wang tiling \cite{Wang61} with a mechanism for automatic growth.
In the aTAM, a DNA tile is represented by a unit square (or cube) tile type that may neither rotate, reflect, nor fold. 
Each side of a tile type is decorated with a glue consisting of both a non-negative integer strength and an alpha-numeric label. 
A tile set is a finite set of tile types, from which infinitely many tiles of each type may be instantiated. 
If one tile is positioned at an unoccupied location Manhattan distance 1 away from another tile and their opposing glues are equal, then the two tiles bind with the strength of the opposite glues. 
A special seed tile type is designated and a seed tile, which defines the seed-containing assembly, is placed at some fixed location.
During the process of self-assembly, a sequence of tiles bind to and never detach from the seed-containing assembly, provided that each one, in a non-overlapping fashion, binds to one or more tiles in the seed-containing assembly with total strength at least a certain positive integer value called the temperature.
If the temperature is greater than or equal to 2, then it is possible to enforce cooperative binding, where a tile may be prevented from binding at a certain location until at least two adjacent locations become occupied by tiles. 
Otherwise, only non-cooperative binding is allowed (temperature-1 self-assembly).
A fundamental question regarding a given shape is determining the effect of the value of the temperature on its directed tile complexity, or the size of the smallest tile set that, regardless of the order in which tiles bind to the seed-containing assembly, always self-assembles into a unique terminal assembly of tiles that are placed on and only on points of the shape.
Although temperature-1 self-assembly cannot enforce cooperative binding, there is a striking resemblance of its computational and geometric expressiveness in just-barely 3D, where tiles are allowed to be placed at most one step in the third dimension, to that of temperature-2 self-assembly in 2D, with respect to the directed tile complexity of two benchmark shapes: a square and a scaled-up version of any algorithmically specified shape.
Adleman, Cheng, Goel and Huang \cite{AdlemanCGH01} proved, using optimal base conversion, that the directed tile complexity of an $N \times N$ square at temperature 2 in 2D is $O\left( \frac{\log N}{\log \log N} \right)$, matching a corresponding lower bound for all Kolmogorov-random $N$ and all positive temperature values, set by Rothemund and Winfree \cite{RotWin00}. 
Both of these bounds hold for temperature-1 self-assembly in just-barely 3D.
The lower bound is an easy generalization of the latter and the upper bound was established by Furcy, Micka and Summers \cite{jFurcyMickaSummers} via their discovery of a just-barely 3D, optimal encoding construction at temperature 1.
Just-barely 3D, optimal encoding at temperature 1 was inspired by, achieves the same result as, but is drastically different from the 2D optimal encoding at temperature 2 developed by Soloveichik and Winfree \cite{SolWin07}, who proved that the directed tile complexity of a scaled-up version of any algorithmically specified shape $X$ at temperature 2 is $\Theta\left( \frac{K(X)}{\log K(X)} \right)$, where $K(X)$ is the size of the smallest Turing machine that outputs the list of points in $X$. 
This tight bound for temperature-2 self-assembly in 2D was shown to hold for temperature-1 self-assembly in just-barely 3D by Furcy and Summers \cite{FurcyS18}: they combined just-barely 3D optimal encoding at temperature 1 with a modified version of a just-barely 3D, temperature-1 Turing machine simulation by Cook, Fu and Schweller \cite{CookFuSch11}.
Another benchmark shape is the $k \times N$ rectangle, where $k < \frac{\log N}{\log \log N - \log \log \log N}$, making it  ``thin''.  
A thin rectangle is an interesting testbed because its restricted height creates a limited channel through which tiles may propagate information, for example, the current value of a self-assembling counter. 
In fact, Aggarwal, Cheng, Goldwasser, Kao, Moisset de Espan\'{e}s and Schweller \cite{AGKS05g} used an optimal, base-$\left \lceil N^{\frac{1}{k}}\right \rceil$ counter that uniquely self-assembles within the restricted confines of a thin rectangle to derive an upper bound of $O\left( N^{\frac{1}{k}} + k\right)$ on the directed tile complexity of a $k \times N$ thin rectangle at temperature 2 in 2D. 
They then leveraged the limited bandwidth of a thin rectangle in a counting argument for a corresponding lower bound of $\Omega\left( \frac{N^{\frac{1}{k}}}{k} \right)$. 
The previous theory for a square and an algorithmically specified shape would suggest that these thin rectangle bounds should hold at temperature 1 in just-barely 3D.
Yet, we currently do not know if this is the case.
Thus, the power of temperature-1 self-assembly in just-barely 3D resembles that of temperature-2 self-assembly in 2D, with respect to the directed tile complexities of a square and a scaled-up version of any algorithmically specified shape, but not a thin rectangle.

\begin{table}[t!]

\begin{tabular}{|c||c|c||c|c|}\hline
  & \multicolumn{2}{c||}{2D Temperature 2} & \multicolumn{2}{c|}{Just-barely 3D Temperature 1} \\
  & Lower bound & Upper bound & Lower bound & Upper bound \\ \hline

  $N \times N$ Square
  & \multicolumn{2}{c||}{ $\Theta\left(\frac{\log N}{\log\log N}\right)$} 
  & \multicolumn{2}{c|}{Same as 2D Temperature 2}\\ \hline

  Algorithmically-defined shape $X$
  & \multicolumn{2}{c||}{$\Theta\left(\frac{K(X)}{\log K(X)}\right)$}
  & \multicolumn{2}{c|}{Same as 2D Temperature 2}\\ \hline

    \parbox{2in}{\centering $k \times N$ rectangle}
  & $\Omega\left(\frac{N^{\frac{1}{k}}}{k}\right)$
  & $O\left(N^{\frac{1}{k}}+k\right)$
  & $\Omega\left(\frac{N^{\frac{1}{2k}}}{k}\right)$
    & $O\left(N^{\frac{1}{\left\lfloor\frac{k}{3}\right\rfloor}}+\log N\right)$\\ \hline  
\end{tabular}
\caption{\label{tbl:table-1} State-of-the-art directed tile complexity for the
  self-assembly of benchmark shapes in the aTAM, where $K(X)$ is the
  size of the smallest Turing machine that outputs the list of points
  in $X$.}
\end{table}

\begin{table}[t!]
\begin{tabular}{|c||c|c||c|c|}\hline
  & \multicolumn{2}{c||}{2D Temperature 2} & \multicolumn{2}{c|}{Just-barely 3D Temperature 1} \\
  & Lower bound & Upper bound & Lower bound & Upper bound \\ \hline

  \parbox{2in}{\centering $k \times N$ rectangle}
  & N/A
  & N/A 
  & \cellcolor{gray!40} $\Omega\left(N^{\frac{1}{k}}\right)$
  & \cellcolor{gray!20}$O\left(N^{\frac{1}{\left\lfloor\frac{k}{2}\right\rfloor}}+\log N\right)$\\ \hline  
\end{tabular}
\caption{\label{tbl:table-2} Our improved lower and upper bounds on the directed tile
  complexity of rectangles are highlighted in this table as the
  two main contributions of this paper. Note that, for a thin rectangle, the additive terms in this table and Table~\ref{tbl:table-1} are eliminated.}
\end{table}

Motivated by this theoretical discrepancy, we prove new lower and upper bounds on the directed tile complexity of a thin rectangle at temperature 1 in just-barely 3D. See Tables~\ref{tbl:table-1} and~\ref{tbl:table-2} for a quick summary of our results and how they compare with previous state of the art results. 
Our lower bound is:

\begin{theorem}
The directed tile complexity of a $k \times N$ thin rectangle at temperature 1 in just-barely 3D is $\Omega\left(N^{\frac{1}{k}}\right)$.
\end{theorem}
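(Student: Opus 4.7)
The plan is to follow the classical window-movie paradigm for tile-complexity lower bounds, but with a refined equivalence on window movies that is strong enough to enable a splicing argument while giving a much tighter count. Fix a directed temperature-1 just-barely 3D tile assembly system $\mathcal{T}$ with $g$ tile types that uniquely self-assembles a $k \times N$ rectangle $R$, and consider the vertical cut windows $w_1, w_2, \ldots, w_{N-1}$, where $w_x$ separates column $x$ from column $x+1$. Each $w_x$ crosses at most $O(k)$ candidate glue locations, one per row per $z$-layer.

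I would then associate to each $w_x$ a set $M_x$ of (location, glue) pairs actually realized on $w_x$ in $R$, rather than a time-ordered sequence, and prove a specialized Window Movie Lemma stating that if $M_x = M_{x'}$ for some $x < x'$, then the left portion of $R$ up to $w_x$ can be spliced with the right portion of $R$ from $w_{x'}$ onward to form a terminal assembly strictly narrower than $N$. Because $\mathcal{T}$ is directed, this would contradict the uniqueness of $R$, forcing $x \mapsto M_x$ to be injective on $\{1,\ldots,N-1\}$. The key observation underpinning the lemma is that every tile in a temperature-1 assembly binds via a single glue, so both halves of the splice can be replayed in a topological order using only the identity and location of each cut glue, with no timestamp information required.

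With the lemma in hand, the theorem reduces to upper bounding the number of possible sets $M_x$. Because each of the $O(k)$ candidate locations is either empty or carries one of $g$ glue labels, the count is at most $(g+1)^{O(k)}$; the decisive gain relative to a sequence-based WML is that we avoid the extra combinatorial factors polynomial in $k$ that would have arisen from ordering-sensitive counting, shaving off the stray factor of $k$ present in the previous $\Omega\!\left(N^{1/k}/k\right)$-style bounds. Pigeonhole over the $N-1$ distinct sets then yields $g = \Omega\!\left(N^{1/k}\right)$.

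The principal obstacle is the refined Window Movie Lemma itself: the standard proof replays the time-ordered sequence of tile attachments after the splice, so discarding that order demands a genuinely new splicing argument. The ingredients I would exploit are (i) the non-cooperativity of temperature-1 binding, which lets each tile attach as soon as its single supporting glue is present, and (ii) the restricted $z$-extent of just-barely 3D, which bounds the geometric interference between the two reassembled halves. Showing that every attachment order legal before the splice remains legal after it, with only the unordered set of cut glues preserved, is where the genuinely new technical work must live.
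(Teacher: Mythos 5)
Your strategy --- pigeonhole over the $N-1$ vertical cuts, with a set-based rather than sequence-based window movie so that no factorial-type ordering factors appear --- matches the paper's high-level plan, and your observation that the splicing lemma is where the genuinely new work must live is correct. The counting step, however, cannot yield $\Omega\left(N^{1/k}\right)$ as you claim. Each window separating columns $x$ and $x+1$ of $R^3_{k,N}$ crosses $2k$ candidate glue locations (one per row, in each of $z=0$ and $z=1$), so requiring ``same set of (location, glue) pairs'' bounds the number of distinct sets by $(g+1)^{2k}$, and pigeonhole gives only $g = \Omega\left(N^{1/(2k)}\right)$ --- the prior exponent, with the stray factor of $k$ shaved off. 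The constant in your $O(k)$ exponent is not a matter of bookkeeping: it determines the exponent of $N$ in the final bound, and with all glues required to match it is forced to be $2$.

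The paper closes this gap by \emph{not} requiring all glues to match. It first restricts to a single simple path $s$ from the seed to the far column (itself an important reduction your proposal skips, since splicing two full assembly halves at temperature~1, each of which may be disconnected, is considerably more delicate than replaying a single path), and then classifies each crossing of the window by $s$ as \emph{away from} or \emph{toward} the seed. Two restricted glue submovies are deemed equivalent if they agree only on the set of crossing locations, the direction of each crossing, and the glues at the \emph{away} crossings. The toward-crossing glues need not be matched: in the spliced sequence $\vec{\beta}$, whenever the path re-crosses $w'$ toward the seed, the directedness of $\mathcal{T}$ forces the tile placed at that boundary location to coincide with the tile $\vec{\alpha}$ placed there, so that glue is determined rather than free. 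Since at most $(e+1)/2 \le k$ of the $e < 2k$ crossings are away crossings, the glue-counting term drops from $|G|^{2k}$ to $|G|^{k}$, while the combinatorics of crossing locations and directions contribute only a $k \cdot 16^{k}$ factor, a constant after taking $k$-th roots. This away/toward asymmetry, exploited via directedness inside the splice, is the missing idea; without it, your count remains stuck at $\Omega\left(N^{1/(2k)}\right)$.
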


Theorem~\ref{thm:theorem-1} is an asymptotic improvement over the corresponding previous state of the art lower bound:

\begin{theorem*}
\label{thm:current-lower-bound}
The directed tile complexity of a $k \times N$ thin rectangle at temperature 1 in just-barely 3D is $\Omega\left(\frac{N^{\frac{1}{2k}}}{k}\right)$.
\end{theorem*}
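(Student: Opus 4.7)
The plan is to apply the standard counting-and-splicing recipe for directed tile-complexity lower bounds, instrumented with the usual sequence-based Window Movie Lemma. Fix a directed just-barely 3D, temperature-$1$ tile assembly system with a tile set of size $T$ that uniquely produces a $k \times N$ rectangle. I would cut the rectangle with $\Theta(N)$ equally spaced vertical planes and associate to each cut its window movie: the time-ordered list of (position, glue label) pairs recording the binding events that take place across the cut during assembly. If two cuts share the same movie, the Window Movie Lemma lets one splice the two intervening slabs to produce a terminal assembly of a different width, contradicting directedness.

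Next I would bound the number of distinct movies. Each vertical cut of a $k \times N \times 2$ strip crosses at most $2k$ unit faces, and each face supports at most one binding event in a terminal assembly, so a window movie is a sequence of length at most $2k$ in which each event picks one of at most $2k$ cut positions and one of at most $T$ glue labels. The number of such sequences is therefore at most $\bigl((2k)T\bigr)^{2k} = \Theta\bigl((kT)^{2k}\bigr)$. Pigeonhole then forces $\Theta(N) \le \Theta\bigl((kT)^{2k}\bigr)$, and rearranging gives $T = \Omega\!\left(N^{1/(2k)}/k\right)$, as claimed.

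The principal obstacle is establishing the Window Movie Lemma itself in the just-barely 3D, temperature-$1$ regime. After a splice, one must show that the binding sequence of the substituted slab can be replayed in its new position so that every event still binds with strength at least~$1$ without introducing overlaps or glue mismatches; the delicate case is a tile on the $z = 0$ layer whose eastern neighbor after the splice lies on the $z = 1$ layer of the opposite slab, since the two-layer geometry admits adjacencies that do not arise in strict 2D. Once the lemma is in hand, the counting step and the pigeonhole-plus-splicing conversion are essentially routine bookkeeping. The loss of a factor of $k$ in the denominator and the factor of $2$ in the exponent both trace back to the $(2k)!$ overcount implicit in treating window movies as ordered sequences of crossing events; this is precisely the slack that the present paper's set-based refinement will eliminate.
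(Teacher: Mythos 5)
Your proposal is essentially correct and takes the same approach the paper outlines: the paper never actually proves this ``previous state of the art'' bound, but only remarks that it follows by adapting the sequence-based counting argument of Aggarwal et al.\ to just-barely 3D at temperature~1 via the Window Movie Lemma---exactly the scheme you describe, with the ordering overcount and the $2k$ faces of the cut producing the $k$ denominator and the $1/(2k)$ exponent, and with the need to re-establish the Window Movie Lemma in this regime correctly flagged. One minor slip: your ``delicate case'' of a $z=0$ tile whose eastern neighbor lies in the $z=1$ layer of the opposite slab cannot occur, since $(x,y,0)$ and $(x+1,y,1)$ are not adjacent; the cut is a translate of the $yz$-plane and therefore severs only east--west edges within a single $z$-layer.
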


Technically, the previous lower bound is not explicitly proved (or even stated) and therefore cannot be referenced, but it can be derived via a straightforward adaptation of the counting argument given in the proof of the lower bound for a thin rectangle for general temperature values in 2D. This proof would basically use a counting argument that upper bounds the number of ways to assign glues (of tiles) to a sequence of fixed locations abutting a plane. The idea is that, if two assignments are similar, in that they, respectively, assign the same glues at the fixed locations in the same order, but off by translation, then it is possible to self-assemble something other than the target rectangle (giving a contradiction). 
In such a Pigeonhole counting argument, since $N$ is fixed at the beginning of the proof, a larger lower bound on the number of types of glues is required in order to avoid a contradiction arising from two similar assignments.
On our way to proving Theorem~\ref{thm:theorem-1}, we prove Lemma~\ref{lem:lemma-2}, which is essentially a new, more powerful type of Window Movie Lemma \cite{WindowMovieLemma} for temperature-1 self-assembly within a just-barely 3D, rectangular region of space. 
We establish our lower bound via a counting argument, but unlike the previous example, our new technical machinery lets us merely upper bound the number of ``sufficiently similar'' ways to assign glues to a fixed set (rather than a sequence) of locations abutting a plane. Intuitively, two assignments are sufficiently similar if, up to translation, they respectively agree on: the set of locations to which glues are assigned, the local order in which certain consecutive pairs of glues appear, and the glues that are assigned to a certain set (of roughly half) of the locations. Our lower bound is also aesthetically pleasing because only a hidden constant term divides ``$N^{\frac{1}{k}}$'', making it roughly the square root of our upper bound, which is:

\begin{theorem}
\label{thm:two}
The directed tile complexity of a $k \times N$ rectangle at temperature 1 in just-barely 3D is\\$O\left(N^{\frac{1}{\left \lfloor \frac{k}{2} \right \rfloor}} + \log N \right)$.
\end{theorem}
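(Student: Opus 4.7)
The plan is to build a just-barely 3D, temperature-1 base-$b$ counter whose footprint in the $k \times N$ rectangle is partitioned into horizontal \emph{digit regions}, each exactly 2 rows tall in the ground plane, and to choose $b = \left\lceil N^{1/\lfloor k/2 \rfloor} \right\rceil$ so that $b^{\lfloor k/2 \rfloor} \geq N$. First I would design a single digit gadget: a 2-row-tall, width-$O(b)$ tile subassembly that represents one digit of the counter, accepts a carry signal on its east side, and deterministically either (a) increments its current digit and terminates the carry, or (b) rolls the digit over to $0$ and emits a carry to its west when the current digit is $b-1$. Because temperature-1 2D binding cannot enforce the cooperative ``is this digit equal to $b-1$?'' decision inside two rows, the decision logic is offloaded to tiles placed one step into the third dimension: a carry scaffold grows at $z=1$ above the digit region, reads the current digit off the ground plane via a single-strength glue exchange, and routes itself back down onto the appropriate increment or rollover path. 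Each digit gadget contributes $O(b)$ tile types.

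Next, I would stack $\lfloor k/2 \rfloor$ copies of the digit gadget to form a full counter column of height exactly $2\lfloor k/2 \rfloor$; the at most one remaining ground-plane row of height $k - 2\lfloor k/2 \rfloor \in \{0,1\}$ is covered by a constant-size filler tile set. Horizontal advance by one column corresponds to one counter increment, so after $b^{\lfloor k/2 \rfloor}$ increments the counter has advanced at least $N$ columns; by initializing the counter at a non-zero value and (if needed) capping the east end with a constant-size completion strip, I can arrange the total horizontal extent to equal exactly $N$, producing a unique terminal assembly that tiles the entire $k \times N$ rectangle. The chosen initial value is hard-coded onto a seed strip by an $O(\log N)$-tile binary decoder that translates the $\lceil \log_2 N \rceil$ bits of the starting counter value into the appropriate base-$b$ digits across the $\lfloor k/2 \rfloor$ digit regions, which accounts for the additive $\log N$ term. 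Summing $O(b)$ for the digit gadget, $O(1)$ for the filler row and completion strip, and $O(\log N)$ for the seed decoder gives the claimed $O\left(N^{1/\lfloor k/2 \rfloor} + \log N\right)$ bound.

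The hard part is the digit-gadget construction itself: squeezing carry detection, rollover, and increment into only 2 ground-plane rows using only single-strength, non-cooperative binding, while ensuring that the $z=1$ carry scaffold is itself deterministic so that the terminal assembly is unique and no spurious growth escapes a gadget or interferes with its neighbors above and below. This is precisely where the previous state of the art construction of Furcy, Micka, and Summers needed 3 rows per digit, and the new 2-row design is what drives the exponent improvement from $\lfloor k/3 \rfloor$ to $\lfloor k/2 \rfloor$. The verification step then amounts to checking, for each pair of abutting digit gadgets and for the digit gadget's interface with the filler row and seed, that the glue types are chosen so that every partial assembly has exactly one next tile to place.
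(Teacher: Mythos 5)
Your high-level target matches the paper's: a base-$b$ counter with $b \approx N^{1/\lfloor k/2 \rfloor}$, $\lfloor k/2 \rfloor$ digits using two rows each, and a tile complexity of $O(b + \log N)$. But the sketch you give diverges from the paper's construction in ways that create genuine gaps, and the place where you defer ("the hard part is the digit-gadget construction itself") is precisely where the paper has to do most of its work.

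First, the width accounting is inconsistent. You describe a digit gadget of geometric width $O(b)$, which reads as a unary encoding, yet you then invoke ``one column per increment'' so that $b^{\lfloor k/2\rfloor} \geq N$ increments cover $N$ columns. These cannot both hold: if each counter value is stored in a $2\lfloor k/2\rfloor \times O(b)$ block, then $b^{\lfloor k/2\rfloor}$ increments span $\Theta\!\left(b^{\lfloor k/2\rfloor +1}\right)$ columns, overshooting $N$ by a factor of $b$; and if each increment really advances one column, a width-$O(b)$ gadget makes no sense. The paper avoids the dilemma by using a \emph{binary} encoding of each digit with $m = \lceil \log M \rceil$ bit bumps, and making each counter value occupy a region $l = 9m + 22 = \Theta(\log M)$ columns wide. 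Then $M = \left\lceil (N/31)^{1/w}\right\rceil$ with $w = \lfloor k/2\rfloor$ makes $M^w \cdot l$ land within a fraction of a digit-region width of $N$, and the $O(N \bmod l) = O(l) = O(\log N)$ filler tiles absorb the remainder. Your ``constant-size completion strip'' is not constant for either encoding.

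Second, your claimed $O(b)$ tile types per digit gadget is unjustified for the unary layout you describe. At temperature $1$ the tiles that walk across the digit region cannot silently know their position; if the glues along a length-$\Theta(b)$ path must carry both a digit value (up to $b$ possibilities) and a remaining-distance-to-stop, a naive encoding gives $\Theta(b^2)$. The paper's crucial device — which is absent from your plan — is the {\tt Repeating\_after\_$\ldots$} / {\tt Stopper\_after\_$\ldots$} pair: the propagating path consists of position-oblivious ``repeating'' tiles that carry only the $m$-bit digit string, and the geometry of a pre-placed ``stopper'' (itself placed at a distance known to the gadget that grew it) halts the path. This decouples digit value from position and is what keeps the count linear in $M$. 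Your ``carry scaffold at $z=1$ that reads the digit and routes itself back down'' is a one-sentence gesture at the same problem, but without a concrete mechanism one cannot verify that it yields a directed system with $O(b)$ tiles rather than $O(b \log b)$ or $O(b^2)$.

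Third, the geometry of your digit packing differs from the paper's in a way that matters. You propose independent $2$-row digit gadgets stacked vertically, with carry entering on the east and leaving on the west. The paper instead packs \emph{two} digits into each $4$-row ``digit region'', with the even digit's bit bumps facing south and the odd digit's facing north, so that the shared middle rows and the $z=1$ plane supply the space needed for the read/propagate/write dance — including the {\tt Between\_digit\_regions}, {\tt Between\_digits}, {\tt Z1\_to\_z0} transitions, and the {\tt Reset} path that returns to the least significant digit to start the next increment. Carry propagation there is a zig-zag between the current and eastward-adjacent (``corresponding'') digit regions while climbing through the stacked pairs; it is not a simple east-to-west signal. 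A self-contained $2$-row gadget has no room for these transitions on the ground plane, and whether the $z=1$ plane alone suffices — while still leaving a clear lane for the reset path — is exactly the nontrivial question that the paper's explicit gadget catalog answers and your proposal does not. In short: right exponent, but the proof is not there, and the decisive ideas (binary bit bumps, paired $4$-row digit regions, and the Repeating/Stopper position-forgetting trick) are missing from your plan.
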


Theorem~\ref{thm:two} is an asymptotic improvement over the corresponding previous state of the art upper bound:
\begin{theorem*}[Furcy, Summers and Wendlandt \cite{FurcySummersWendlandtDNA}]
\label{thm:current-upper-bound}
The directed tile complexity of a $k \times N$ rectangle at temperature 1 in just-barely 3D is $O\left(N^{\frac{1}{\left \lfloor \frac{k}{3} \right \rfloor}} + \log N \right)$.
\end{theorem*}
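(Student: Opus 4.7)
The plan is to construct a just-barely 3D, temperature-1 tile set of size $O\!\left(N^{1/\lfloor k/2 \rfloor}+\log N\right)$ that uniquely self-assembles the $k \times N$ rectangle. The construction is a base-$b$ counter with $b=\left\lceil N^{1/\lfloor k/2 \rfloor}\right\rceil$ and $\lfloor k/2 \rfloor$ digits, where the key departure from the previous state of the art is that each \emph{digit region} occupies only two consecutive rows rather than three. Since $b^{\lfloor k/2 \rfloor} \geq N$, the counter has enough states to index every column of the target shape, and the $\lfloor k/2 \rfloor$ digit regions consume at most $k$ rows. An $O(\log N)$-sized seed gadget at the west end, built from the just-barely 3D optimal encoding of Furcy, Micka and Summers, preloads the counter with a starting value chosen so that counting terminates exactly at column $N$. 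Totaling the tile types yields $O(b)+O(\log N)=O\!\left(N^{1/\lfloor k/2 \rfloor}+\log N\right)$.

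Within each 2-row digit region, the digit value is carried horizontally from west to east by tiles in the two base-layer rows, while the increment-and-carry logic is implemented by a fixed-size family of tiles placed one step up into the third dimension. A carry arriving from the digit region directly below lands on the top glue of the upper row of that lower region, is captured by a 3D gadget anchored to a unique pair of base-layer glues, and is either consumed locally (producing an incremented base-layer digit in the next column) or forwarded up to the region above. Because each 3D gadget binds to base-layer glues that only occur in the configuration the logic intends, cooperative binding is never needed. The number of distinct tile types per digit region is $O(b)$ for the digit alphabet and $O(1)$ for the carry machinery, so summing over $\lfloor k/2 \rfloor$ digit regions still gives $O(b)$ tile types for the counter body, with a constant-size east-side terminator gadget finishing the rectangle cleanly.

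The main obstacle will be verifying that the 2-row digit region is simultaneously (i) \emph{geometrically consistent}, meaning its second-layer tiles never collide with those of adjacent digit regions, (ii) \emph{functionally correct}, meaning the digit value propagates east while carries are received from below and, when appropriate, forwarded above according to base-$b$ addition, and (iii) \emph{directed}, meaning that at every reachable location of the $k\times N$ region exactly one tile type can attach. Property (iii) is the tight constraint with 2 rows rather than 3: with less vertical buffering, the second-layer tiles must be labeled with glues whose contexts are unique, so that no spurious attachment is ever possible at temperature 1. I would proceed by explicitly defining the base-layer column patterns of a digit region and the second-layer ``hats'' that sit atop them, enumerating the finite set of reachable columns, and verifying by induction on column index within a region that assembly is deterministic and produces the correct input to the region above. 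Stitching together $\lfloor k/2 \rfloor$ digit regions, the seed gadget, and the terminator then yields a unique $k \times N$ terminal assembly and establishes the claimed upper bound.
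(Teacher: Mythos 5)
You are aiming directly for the stronger bound $O\bigl(N^{1/\lfloor k/2\rfloor}+\log N\bigr)$, which would indeed imply the stated $O\bigl(N^{1/\lfloor k/3\rfloor}+\log N\bigr)$, and your top-level accounting (a counter of base $\Theta\bigl(N^{1/\lfloor k/2\rfloor}\bigr)$ with roughly two rows per digit, an $O(\log N)$-cost hard-coded initial value, total $O(b)+O(\log N)$ tile types) matches the spirit of the known constructions. The gap is in the one step that is the entire difficulty of the problem: how the digit value and the carry are combined at temperature~1. You assert that a fixed-size third-dimension gadget is ``anchored to a unique pair of base-layer glues'' and that therefore ``cooperative binding is never needed.'' At temperature~1 a tile attaches by matching a \emph{single} strength-1 glue; an attachment that is conditioned on two abutting glues is precisely cooperative binding and is disallowed. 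If instead your carry gadget attaches using only one of those glues, then nothing prevents both the ``carry consumed'' and the ``no carry'' continuations from being attachable at the same site, so the system is not directed (or, worse, the wrong branch grows and the rectangle is not the unique terminal shape). The only known way to make one piece of information (the carry, or a bit of a digit) influence a temperature-1 attachment that is driven by another piece of information (the value carried in a path's glue) is \emph{geometric blocking}: the digit must be written as a sequence of $\Theta(\log M)$ bit bumps occupying $\Theta(\log M)$ columns, and a single path must zig-zag over them, alternating between the $z=0$ and $z=1$ planes, so that being blocked or not reads one bit at a time while the partially read value accumulates in the path's glues; stopper and repeating gadgets are then needed so that the writing path can find the next digit region without hard-coding $\Theta(\log M)$ positions into glues. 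Your outline (digit value carried eastward in glues inside a 2-row strip, carry handled by $O(1)$ ``hat'' tiles) contains no such mechanism, and without it the construction either needs temperature~2 or loses directedness.

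A second, related problem is the global architecture. In your layout each digit lives in its own horizontal 2-row strip spanning the whole rectangle, with carries crossing between strips only when a roll-over occurs. But every digit must advance one ``value column'' per increment even when it receives no carry, and at temperature~1 a strip cannot ``wait'' for the strips below it unless it receives an explicit triggering glue each step; so every increment requires a signal that visits all $\lfloor k/2\rfloor$ digit regions anyway. Once you add that, you have essentially rediscovered the organization used in the actual constructions (both the cited three-rows-per-digit counter and the improved one in this paper): the entire current value occupies one vertical block of $\Theta(\log M)$ columns, and a single assembly path reads each digit geometrically, writes the incremented digit into the corresponding region of the next block to the east, and threads through all digit regions before the next increment begins. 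The improvement from three rows to (amortized) two rows per digit is achieved there by packing two digits, with oppositely facing bit bumps, into one four-row digit region and carefully routing the read/write/return paths so they do not collide --- none of which is addressed by your sketch. (Your claim that reusing the same $O(b)$ digit tiles in every region keeps the count at $O(b)$ also needs justification, though even $O(kb)$ would be tolerable for the stated bound.) As written, the proposal defers exactly the part of the argument that constitutes the proof.
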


The previous upper bound is based on the self-assembly of a just-barely 3D counter that uniquely self-assembles at temperature 1, but whose base $M$ depends on the dimensions of the target rectangle. 
Moreover, each digit in the previous counter is represented geometrically and in binary within a just-barely 3D region of space comprised of $\Theta(\log M)$ columns and $3$ rows. 
In any kind of construction like this, the number of rows used to represent each digit affects the base of the counter, which, for a thin rectangle, is directly proportional to and the asymptotically-dominating term in the tile complexity.
For example, in the previous construction, the number of rows per digit is $3$, so the base must be set to $\Theta\left(N^{\frac{1}{\left \lfloor \frac{k}{3} \right \rfloor}}\right)$.
Intuitively, ``squeezing'' more digits into the counter for the same rectangle of height $k$ will result in a decrease in the base and therefore the tile complexity. Our construction for Theorem~\ref{thm:two} is based on the self-assembly of a just-barely 3D counter similar to the previous construction, but the geometric structure of our counter is organized according to digit regions, or just-barely 3D regions of space comprised of $\Theta(\log M)$ columns and $4$ rows in which two digits are represented. So, on average, each digit in our counter is represented within a just-barely 3D region of space comprised of $\Theta(\log M)$ columns, but only $2$ rows, resulting in a roughly fifty percent increase in digit density for the same rectangle of height $k$, compared to the counter for the previous result. This increase in digit density is the main reason why the ``$3$'' from the previous upper bound is replaced by a ``$2$'' in Theorem~\ref{thm:two}.

\section{Formal definition of the abstract Tile Assembly Model}
\label{sec:prelims}

\newcommand{\N}{\mathbb{N}}
\newcommand{\Z}{\mathbb{Z}}

\newcommand{\lab}{{\rm label}}
\newcommand{\colorf}[1]{\color(#1)}
\newcommand{\strength}{{\rm str}}
\newcommand{\strengthf}[1]{\strength(#1)}
\newcommand{\bbval}[1]{\left[\!\left[ #1 \right] \! \right]}
\newcommand{\dom}{{\rm dom} \;}
\newcommand{\asmb}{\mathcal{A}}
\newcommand{\asmbt}[2]{\asmb^{#1}_{#2}}
\newcommand{\asmbtt}{\mathcal{A}^\tau_T}
\newcommand{\ste}[2]{#1 \mapsto #2}
\newcommand{\frontier}[3]{{\partial}^{#1}_{#2}{#3}}
\newcommand{\frontiert}[1]{\partial^{\tau}{#1}}
\newcommand{\frontiertt}[1]{\frontier{\tau}{t}{#1}}
\newcommand{\frontiertx}[2]{{\partial}^{\tau}_{#1}{#2}}
\newcommand{\frontiertau}[1]{{\partial}^{\tau}{#1}}
\newcommand{\arrowstett}[2]{#1 \xrightarrow[\tau,T]{1} #2}
\newcommand{\arrowste}[2]{#1 \stackrel{1}{\To} #2}
\newcommand{\arrowtett}[2]{#1 \xrightarrow[\tau,T]{} #2}
\newcommand{\res}[1]{\textrm{res}(#1)}
\newcommand{\termasm}[1]{\mathcal{A}_{\Box}[\mathcal{#1}]}
\newcommand{\prodasm}[1]{\mathcal{A}[\mathcal{#1}]}
\newcommand{\fgg}[1]{fgg^\#_{#1}}
\newcommand{\ftdepth}[2]{\textrm{ft-depth}_{#1}\left(#2\right)}
\newcommand{\str}[1][*]{\textrm{str}_{#1}}
\newcommand{\col}[1]{\textrm{col}_{#1}}
\newcommand{\tmblank}{\llcorner \negthinspace\lrcorner}
\newcommand{\fullgridgraph}{G^\mathrm{f}}\newcommand{\bindinggraph}{G^\mathrm{b}}

\newcommand{\setr}[2]{\left\{\ #1 \ \left|\ #2 \right. \ \right\}}
\newcommand{\setl}[2]{\left\{\ \left. #1 \ \right|\ #2 \ \right\}}

In this section, we briefly sketch a strictly 3D version of Winfree's abstract Tile Assembly Model. 

All logarithms in this paper are base-$2$. Fix an alphabet $\Sigma$.
$\Sigma^*$ is the set of finite strings over $\Sigma$. Let $\Z$, $\Z^+$, and $\N$ denote the set of integers, positive integers, and nonnegative integers, respectively. 

A \emph{grid graph} is an undirected graph $G=(V,E)$, where $V \subset \Z^3$, such that, for all $\left\{\vec{a},\vec{b}\right\} \in E$, $\vec{a} - \vec{b}$ is a $3$-dimensional unit vector. The \emph{full grid graph} of $V$ is the undirected graph $\fullgridgraph_V=(V,E)$,
such that, for all $\vec{x}, \vec{y}\in V$, $\left\{\vec{x},\vec{y}\right\} \in E \iff \| \vec{x} - \vec{y}\| = 1$, i.e., if and only if $\vec{x}$ and $\vec{y}$ are adjacent in the $3$-dimensional integer Cartesian space.

A $3$-dimensional \emph{tile type} is a tuple $t \in (\Sigma^* \times \N)^{6}$, e.g., a unit cube, with six sides, listed in some standardized order, and each side having a \emph{glue} $g \in \Sigma^* \times \N$ consisting of a finite string \emph{label} and a nonnegative integer \emph{strength}. We assume a finite set of tile types, but an infinite number of copies of each tile type, each copy referred to as a \emph{tile}. A \emph{tile set} is a set of  tile types and is usually denoted as $T$.

A {\em configuration} is a (possibly empty) arrangement of tiles on
the integer lattice $\Z^3$, i.e., a partial function $\alpha:\Z^3
\dashrightarrow T$.  Two adjacent tiles in a configuration \emph{bind},
\emph{interact}, or are \emph{attached}, if the glues on their
abutting sides are equal (in both label and strength) and have
positive strength.  Each configuration $\alpha$ induces a
\emph{binding graph} $\bindinggraph_\alpha$, a grid graph whose
vertices are positions occupied by tiles, according to $\alpha$, with
an edge between two vertices if the tiles at those vertices
bind. 

An \emph{assembly} is a connected, non-empty configuration,
i.e., a partial function $\alpha:\Z^3 \dashrightarrow T$ such that
$\fullgridgraph_{\dom \alpha}$ is connected and $\dom \alpha \neq
\emptyset$. Given $\tau\in\Z^+$, $\alpha$ is \emph{$\tau$-stable} if every cut-set
of~$\bindinggraph_\alpha$ has weight at least $\tau$, where the weight
of an edge is the strength of the glue it represents.\footnote{A
  \emph{cut-set} is a subset of edges in a graph which, when removed
  from the graph, produces two or more disconnected subgraphs. The
  \emph{weight} of a cut-set is the sum of the weights of all of the
  edges in the cut-set.} When $\tau$ is clear from context, we say
$\alpha$ is \emph{stable}.  Given two assemblies $\alpha,\beta$, we
say $\alpha$ is a \emph{subassembly} of $\beta$, and we write $\alpha
\sqsubseteq \beta$, if $\dom\alpha \subseteq \dom\beta$ and, for all
points $\vec{p} \in \dom\alpha$, $\alpha(\vec{p}) = \beta(\vec{p})$.

A $3$-dimensional \emph{tile assembly system} (TAS) is a triple $\mathcal{T} =
(T,\sigma,\tau)$, where $T$ is a tile set, $\sigma:\Z^3
\dashrightarrow T$ is the $\tau$-stable, \emph{seed assembly}, with $| \dom\sigma | = 1$
and $\tau\in\Z^+$ is the \emph{temperature}.

Given two $\tau$-stable assemblies $\alpha,\beta$, we write $\alpha
\to_1^{\mathcal{T}} \beta$ if $\alpha \sqsubseteq \beta$ and
$|\dom\beta \backslash \dom\alpha| = 1$. In this case we say $\alpha$
\emph{$\mathcal{T}$-produces $\beta$ in one step}. If $\alpha
\to_1^{\mathcal{T}} \beta$, $ \dom\beta \backslash
\dom\alpha=\{\vec{p}\}$, and $t=\beta(\vec{p})$, we write $\beta =
\alpha + (\vec{p} \mapsto t)$.  The \emph{$\mathcal{T}$-frontier} of
$\alpha$ is the set $\partial^\mathcal{T} \alpha = \bigcup_{\alpha
  \to_1^\mathcal{T} \beta} (\dom\beta \backslash \dom\alpha$), i.e.,
the set of empty locations at which a tile could stably attach to
$\alpha$. The \emph{$t$-frontier} of $\alpha$, denoted
$\partial^\mathcal{T}_t \alpha$, is the subset of
$\partial^\mathcal{T} \alpha$ defined as
$\setr{\vec{p}\in\partial^\mathcal{T} \alpha}{\alpha \to_1^\mathcal{T}
  \beta \text{ and } \beta(\vec{p})=t}.$

Let $\mathcal{A}^T$ denote the set of all assemblies of tiles from $T$, and let $\mathcal{A}^T_{< \infty}$ denote the set of finite assemblies of tiles from $T$.
A sequence of $k\in\Z^+ \cup \{\infty\}$ assemblies $\vec{\alpha} = \left (\alpha_0,\alpha_1,\ldots \right)$ over $\mathcal{A}^T$ is a \emph{$\mathcal{T}$-assembly sequence} if, for all $1 \leq i < k$, $\alpha_{i-1} \to_1^\mathcal{T} \alpha_{i}$. The {\em result} of an assembly sequence $\vec{\alpha}$, denoted as $\textmd{res}(\vec{\alpha})$, is the unique limiting assembly (for a finite sequence, this is the final assembly in the sequence). We write $\alpha \to^\mathcal{T} \beta$, and we say $\alpha$ \emph{$\mathcal{T}$-produces} $\beta$ (in 0 or more steps), if there is a $\mathcal{T}$-assembly sequence $\alpha_0,\alpha_1,\ldots$ of length $k = |\dom\beta \backslash \dom\alpha| + 1$ such that
(1) $\alpha = \alpha_0$,
(2) $\dom\beta = \bigcup_{0 \leq i < k} \dom{\alpha_i}$, and
(3) for all $0 \leq i < k$, $\alpha_{i} \sqsubseteq \beta$.
If $k$ is finite then it is routine to verify that $\beta = \alpha_{k-1}$.

We say $\alpha$ is \emph{$\mathcal{T}$-producible} if $\sigma \to^\mathcal{T} \alpha$, and we write $\prodasm{\mathcal{T}}$ to denote the set of $\mathcal{T}$-producible assemblies. An assembly $\alpha$ is \emph{$\mathcal{T}$-terminal} if $\alpha$ is $\tau$-stable and $\partial^\mathcal{T} \alpha=\emptyset$. 
We write $\termasm{\mathcal{T}} \subseteq \prodasm{\mathcal{T}}$ to denote the set of $\mathcal{T}$-producible, $\mathcal{T}$-terminal assemblies. If $|\termasm{\mathcal{T}}| = 1$ then  $\mathcal{T}$ is said to be {\em directed}.

In general, a $3$-dimensional shape is a set $X \subseteq \mathbb{Z}^3$. We say that a TAS $\mathcal{T}$ \emph{self-assembles} $X$ if, for all $\alpha \in
\termasm{\mathcal{T}}$, $\dom\alpha = X$, i.e., if every terminal
assembly produced by $\mathcal{T}$ places a tile on every point in $X$
and does not place any tiles on points in $\Z^3 \backslash\, X$. We say that a TAS $\mathcal{T}$ \emph{uniquely 
  self-assembles} $X$ if $\termasm{\mathcal{T}} = \{ \alpha \}$ and $\dom\alpha = X$.

In the spirit of \cite{RotWin00}, we define the \emph{tile complexity} of a shape $X$ at temperature $\tau$, denoted by $K^\tau_{SA}(X)$, as the minimum number of distinct tile types of any TAS in which it self-assembles, i.e., $K^\tau_{SA}(X) = \min \left\{ n   \; \left| \; \mathcal{T} = \left(T,\sigma,\tau\right), \left|T\right|=n \textmd{ and } X \textmd{ self-assembles in } \mathcal{T} \right.\right\}$. The \emph{directed tile complexity} of a shape $X$ at temperature $\tau$, denoted by $K^\tau_{USA}(X)$, is the minimum number of distinct tile types of any TAS in which it uniquely self-assembles, i.e., $K^\tau_{USA}(X) = \min \left\{ n   \; \left| \; \mathcal{T} = \left(T,\sigma,\tau\right), \left|T\right|=n \textmd{ and } X \textmd{ uniquely self-assembles in } \mathcal{T} \right.\right\}$.

\section{The lower bound}
\label{sec:impossibility}

In this section, we prove our main impossibility result, namely Theorem~\ref{thm:theorem-1}. For $k,N \in \mathbb{Z}^+$, we say that $R^3_{k,N} \subseteq \mathbb{Z}^3$ is a 3D $k \times N$ \emph{rectangle} if $\{0,1, \ldots,N-1\} \times \{0,1,\ldots,k-1\} \times \{0\} \subseteq R^3_{k,N} \subseteq \{0,1\ldots,N-1\} \times \{0,1\ldots,k
-1\} \times \{0,1\}$. Then, Theorem~\ref{thm:theorem-1} says that $K^1_{USA}\left(R^3_{k,N}\right) = \Omega\left( N^{\frac{1}{k}} \right)$. Our proof of Theorem~\ref{thm:theorem-1} relies on the following unquestionable observation regarding temperature-1 self-assembly.

\begin{observation}
\label{obs:simple}
If $\mathcal{T} = (T,\sigma,1)$ is a directed TAS, in which some shape $X$ self-assembles and $\alpha$ is the unique element of
$\mathcal{A}_{\Box}[\mathcal{T}]$, then, for each simple path $s$ in $G^{\textmd{b}}_{\alpha}$ from the location of $\sigma$ to some location in $X$, there is a unique assembly sequence $\vec{\alpha}$ that follows $s$ by placing tiles on and only on locations in $s$.
\end{observation}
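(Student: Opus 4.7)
The plan is a direct inductive construction. Write $s = v_0, v_1, \ldots, v_n$, where $v_0$ is the unique location in $\dom \sigma$ and $v_n \in X$ is the endpoint. Propose the assembly sequence $\vec{\alpha} = (\alpha_0, \alpha_1, \ldots, \alpha_n)$ defined by $\alpha_0 = \sigma$ and, for $1 \leq i \leq n$, $\alpha_i = \alpha_{i-1} + (v_i \mapsto \alpha(v_i))$. Each tile type $\alpha(v_i)$ is well defined because every vertex of $s$ lies in $\dom \alpha$, since $s$ is a path in $G^{\textmd{b}}_\alpha$.

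The main step is to verify that each transition $\alpha_{i-1} \to_1^\mathcal{T} \alpha_i$ is a legal one-step production at temperature $1$. First, because $s$ is a simple path, the vertices $v_0, \ldots, v_n$ are pairwise distinct, so $v_i \notin \dom \alpha_{i-1}$ and no overlap occurs. Second, the edge $\{v_{i-1}, v_i\}$ is present in $G^{\textmd{b}}_\alpha$, which by definition means that the abutting sides of $\alpha(v_{i-1})$ and $\alpha(v_i)$ carry matching glues of strength at least $1$. This single bond already gives weight $\geq \tau = 1$ across any cut-set separating $v_i$ from $\dom \alpha_{i-1}$, so $\alpha_i$ is $1$-stable and the transition is valid. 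Induction over $i$ then shows that $\vec{\alpha}$ is a bona fide $\mathcal{T}$-assembly sequence whose result has domain exactly $\{v_0, \ldots, v_n\}$, i.e., exactly the locations in $s$.

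For uniqueness, note that ``follows $s$'' pins down both the order in which locations receive tiles and the tile placed at each location. The order is forced to be $v_0, v_1, \ldots, v_n$ by the specification of following the path starting from the seed. The tile placed at each $v_i$ is forced to be $\alpha(v_i)$ because $\mathcal{T}$ is directed, so the sole terminal assembly $\alpha$ is the unique extension of each intermediate $\alpha_i$, and every subassembly of $\alpha$ must agree with $\alpha$ pointwise on its domain. These two facts together determine $\vec{\alpha}$ uniquely.

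I do not anticipate a genuine obstacle. The only thing the proof really relies on is the temperature-$1$ feature that a single matching glue with positive strength suffices to stably attach a new tile, which converts each edge of $s$ directly into a legal attachment step; the rest is bookkeeping that uses simplicity of $s$ to rule out overlap and directedness of $\mathcal{T}$ to pin down the tile placed at each location.
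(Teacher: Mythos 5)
The paper offers no proof of this statement: it is labeled an ``unquestionable observation'' and invoked without further justification, so there is no author argument against which to compare yours. Your inductive construction is correct and is essentially the canonical way one would justify the claim. For existence, the key point is exactly the one you isolate: an edge $\{v_{i-1},v_i\}$ of $G^{\mathrm{b}}_\alpha$ certifies a matching positive-strength glue, which at temperature $1$ suffices for the attachment of $\alpha(v_i)$ to $\alpha_{i-1}$; and since $G^{\mathrm{b}}_{\alpha_i}$ contains the connected path $v_0,\ldots,v_i$ (with every edge weight at least $1$), each $\alpha_i$ is $1$-stable, not just across the cut isolating $v_i$ but across every cut.

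One point worth flagging on the uniqueness half. You read ``follows $s$'' as fixing the order of placement to be $v_0, v_1, \ldots, v_n$, which makes the order trivially forced and leaves only the tile types to pin down (which you correctly do from directedness, via the standard fact that every producible assembly of a directed TAS is a subassembly of its unique terminal assembly). This is the reading the paper clearly intends—its later machinery (glue window movies, the algorithm for $\vec{\beta}$) treats a path-following sequence as visiting the path in order. But if one instead read ``follows $s$'' as only ``places tiles on all and only the locations of $s$,'' uniqueness could fail when $s$ is not an induced path of $G^{\mathrm{b}}_\alpha$: a chord $\{v_l, v_j\}$ with $l<j-1$ would permit an out-of-order insertion. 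So your proof is fine, but it implicitly relies on the in-order reading of ``follows,'' and it would strengthen the write-up to state that assumption explicitly rather than asserting that the specification ``pins down the order.''
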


Our proof technique for Theorem~\ref{thm:theorem-1} is based on a Pigeonhole counting argument, justified by novel technical machinery. Basically, we upper bound
the number of ways that glues can be placed between two adjacent just-barely 3D columns of $R^3_{k,N}$ by an assembly sequence that follows a simple
path. Thus, we get a lower bound on the tile
  complexity of a sufficiently large thin rectangle. We
first give some notation that will be used throughout the remainder of
this section. For the sake of consistency, the next paragraph contains definitions that were taken directly from
\cite{WindowMovieLemma}.

A \emph{window} $w$ is a set of edges forming a cut-set of the full grid
graph of $\mathbb{Z}^3$. Given a window $w$ and an assembly $\alpha$, a window that {\em
  intersects} $\alpha$ is a partitioning of $\alpha$ into two
configurations (i.e., after being split into two parts, each part may
or may not be disconnected). In this case we say that the window $w$
cuts the assembly $\alpha$ into two non-overlapping configurations $\alpha_L$
and~$\alpha_R$, satisfying, for all
$\vec{x} \in \dom{\alpha_L}$, $\alpha(\vec{x}) = \alpha_L(\vec{x})$, for
all $\vec{x} \in \dom{\alpha_R}$, $\alpha(\vec{x}) = \alpha_R(\vec{x})$, and
$\alpha(\vec{x})$ is undefined at any point $\vec{x} \in \Z^3
\backslash \left( \dom{\alpha_L} \cup \dom{\alpha_R} \right)$.

Given a window $w$, its translation by a vector $\vec{\Delta}$,
written $ w + \vec{\Delta}$ is simply the translation of each one of
$w$'s elements (edges) by~$\vec{\Delta}$. All windows in this paper
are assumed to be induced by some translation of the
$yz$-plane. Each window is thus uniquely identified by its $x$
  coordinate or, more precisely, its distance from the $x$ axis.

For a window $w$ and an assembly sequence $\vec{\alpha}$, we define a
\emph{glue window movie}~$M$ to be the order of placement, position
and glue type for each glue that appears along the window $w$ in
$\vec{\alpha}$. Given an assembly sequence $\vec{\alpha}$ and a window
$w$, the associated glue window movie is the maximal sequence
$M_{\vec{\alpha},w} = \left(\vec{v}_{1}, g_{1}\right),
\left(\vec{v}_{2}, g_{2}\right), \ldots$ of pairs of grid graph
vertices $\vec{v}_i$ and glues $g_i$, given by the order of appearance
of the glues along window $w$ in the assembly sequence
$\vec{\alpha}$. We write $M_{\vec{\alpha}, w} + \vec{\Delta}$ to
denote the translation by $\vec{\Delta}$ of $M_{\vec{\alpha},w}$,
yielding $\left(\vec{v}_{1}+\vec{\Delta}, g_{1}\right),
\left(\vec{v}_{2}+\vec{\Delta}, g_{2}\right), \ldots$. If $s$ is a
simple path and $\vec{\alpha}$ \emph{follows} $s$ by placing tiles on
all and only the locations that belong to $s$, then the notation
$M_{\vec{\alpha}, w} \upharpoonright s$ denotes the \emph{restricted}
glue window submovie (\emph{restricted to} $s$), which consists of
only those steps of $M_{\vec{\alpha},w}$ that place glues that eventually form
positive-strength bonds at locations belonging to the simple path $s$.

Let $\vec{v}$ denote the location of the starting point of $s$ (i.e.,
the location of $\sigma$). Let $\vec{v}_i$ and $\vec{v}_{i+1}$ denote
two consecutive locations in $M_{\vec{\alpha}, w} \upharpoonright s$
that are located across $w$ from each other. We say that these two
locations define a {\it crossing} of $w$, where a crossing has exactly
one direction: we say that this crossing is {\it away from $\vec{v}$}
(or {\it away from $\sigma$}) if the $x$ coordinates of $\vec{v}$ and
$\vec{v}_i$ are equal or the $x$ coordinate of $\vec{v_{i}}$ is
between the $x$ coordinates of $\vec{v}$ and $\vec{v}_{i+1}$. In
contrast, we say that this crossing is {\it toward $\vec{v}$} (or {\it
  toward $\sigma$}) if the $x$ coordinates of $\vec{v}$ and
$\vec{v}_{i+1}$ are equal or the $x$ coordinate of $\vec{v}_{i+1}$ is
between the $x$ coordinates of $\vec{v}$ and $\vec{v_{i}}$.

See Figure~\ref{fig:glue_movie_window_examples} for
2D examples of $M_{\vec{\alpha},w}$ and
$M_{\vec{\alpha}, w} \upharpoonright s$. In this figure, $\sigma$ is
located west of $w$ and the locations $\vec{v}_1$
and $\vec{v}_2$ form an away crossing, whereas the locations
$\vec{v}_3$ and $\vec{v}_4$ form a crossing toward $\sigma$.

\begin{figure}
    \centering

    \begin{subfigure}[t]{0.2\textwidth}
        \centering
        \includegraphics[width=.75in]{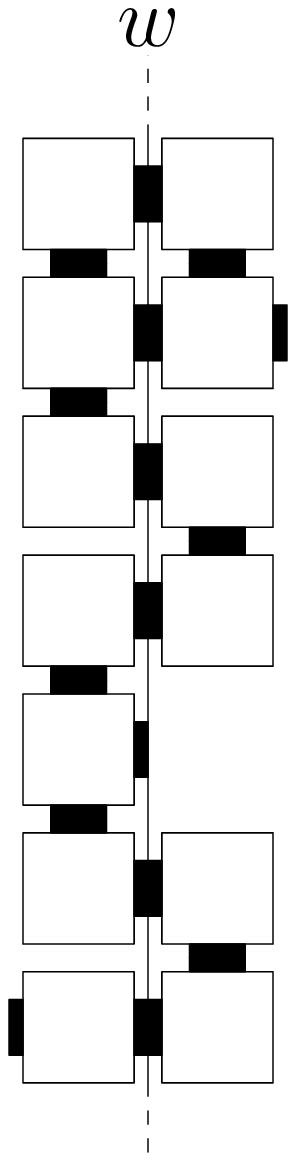}
        \caption{\label{fig:glue_movie_window_example_assembly} A subassembly of $\alpha$ and a window $w$ induced by a translation of the $y$-axis.}
    \end{subfigure}%
    ~
    \begin{subfigure}[t]{0.2\textwidth}
        \centering
        \includegraphics[width=.75in]{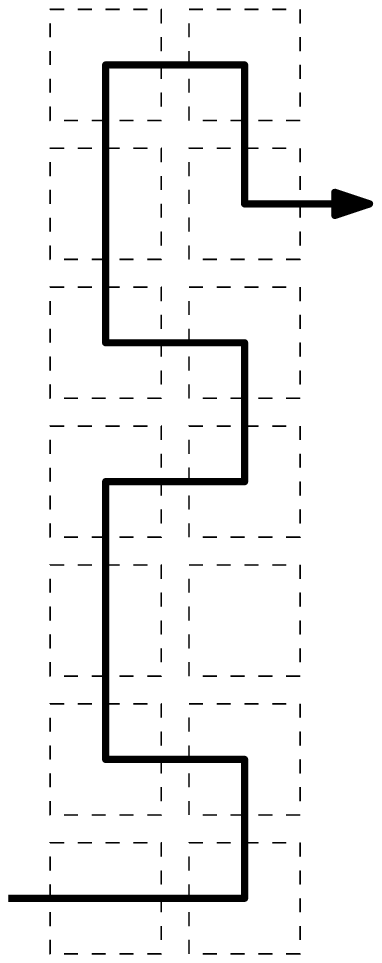}
        \caption{\label{fig:glue_movie_window_example_just_path} A portion of the simple path $s$ through $G^b_{\alpha}$.}
    \end{subfigure}%
   ~
    \begin{subfigure}[t]{0.2\textwidth}
        \centering
        \includegraphics[width=.75in]{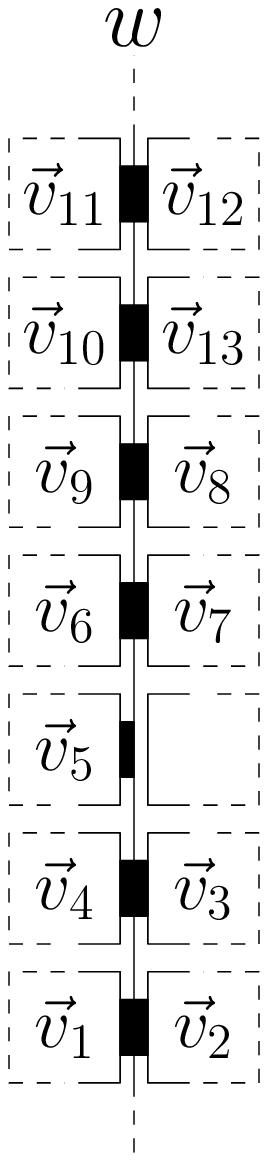}

        \caption{\label{fig:glue_movie_window_example_full} The glue window movie $M_{\vec{\alpha},w}$. }
    \end{subfigure}%
    ~
    \begin{subfigure}[t]{0.2\textwidth}
        \centering
        \includegraphics[width=.75in]{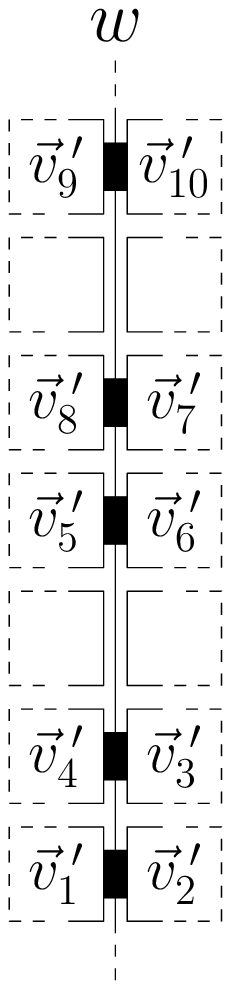}
        \caption{\label{fig:glue_movie_window_example_restricted} The restricted glue window submovie $M_{\vec{\alpha},w} \upharpoonright s$ }
    \end{subfigure}
    \caption{\label{fig:glue_movie_window_examples} An assembly, a simple path, and two types of glue window movies in 2D. }
\end{figure}

We say that two restricted glue window submovies are ``sufficiently
similar'' if they have the same (odd) number of crossings, the same
set of crossing locations (up to horizontal
translation), the same crossing directions at corresponding crossing
locations, and the same glues in corresponding ``away crossing''
locations.

%
%
%
%
\begin{definition}
\label{def:sufficiently-similar}
Assume:
$\mathcal{T}=(T,\sigma, 1)$ is a 3D TAS,
$\alpha \in \mathcal{A}[\mathcal{T}]$,
$s$ is a simple path in $G^{\textmd{b}}_{\alpha}$ starting from the location of $\sigma$,
$\vec{\alpha}$ is a sequence of $\mathcal{T}$-producible assemblies that follows $s$, 
$w$ and $w'$ are windows, such that, $\vec{\Delta} \ne \vec{0}$ is a vector satisfying $w' = w + \vec{\Delta}$,
$e$ and $e'$ are two odd numbers, and
$M = M_{\vec{\alpha}, w} \upharpoonright s =
          \left(\vec{v}_1,g_1\right), \ldots, \left( \vec{v}_{2e},
          g_{2e}\right)$ and $M' = M_{\vec{\alpha}, w'}
            \upharpoonright s = \left(\vec{v}'_1,g'_1\right), \ldots,
            \left( \vec{v}'_{2e'}, g'_{2e'}\right)$ are both non-empty restricted glue window
          submovies.
%
%
%
%
%
%
%
%
We say that $M$ and $M'$ are \emph{sufficiently similar} if the following conditions are satisfied:
\begin{enumerate}
\item \label{lemma-assumption-1} same number of crossings: $e = e'$,

\item \label{lemma-assumption-2} same set of crossing locations (up to translation): $\left\{ \left. \vec{v}_i + \vec{\Delta} \ \right| \ 1 \leq i \leq 2e\right\} = \left\{ \left. \vec{v}'_j \ \right| \ 1 \leq j \leq 2e  \right\}$,
	
	\item \label{lemma-assumption-3} same crossing directions at corresponding crossing locations:\\$\left\{ \left. \vec{v}_{4i-2} + \vec{\Delta} \ \right| \ 1 \leq i \leq \frac{e+1}{2}  \right\} = \left\{ \left. \vec{v}'_{4j-2} \ \right| \ 1 \leq j \leq \frac{e+1}{2} \ \right\}$, and
	
	\item \label{lemma-assumption-4} same glues in corresponding ``away crossing'' locations:\\ for all $1 \leq i,j \leq \frac{e+1}{2}$, if $\vec{v}'_{4j-2} = \vec{v}_{4i-2} + \vec{\Delta}$, then $g'_{4j-2} = g_{4i-3}$.
\end{enumerate}

Note that, since $e$ and $e'$ are both odd,
  the $x$ coordinates of $w$ and $w'$ must both be between the $x$
  coordinates of the end points of $s$.
\end{definition}

See Figure~\ref{fig:half_windows_equivalent_hypothesis} for an example of two restricted glue window submovies that are sufficiently similar. 

Our first technical result says that we must examine only a ``small'' number of distinct restricted glue window submovies in order to find two different ones that are sufficiently similar.
%
%
%
%
\begin{lemma}
\label{lem:lemma-1}
%
%
%
%
Assume:
	 $\mathcal{T}=(T,\sigma, 1)$ is a 3D TAS, 
	 $G$ is the set of all glues in $T$, 
	 $k,N \in \mathbb{Z}^+$,
         $s$ is a simple path starting from
          the location of $\sigma$ such that $s \subseteq R^3_{k,N}$,
	 $\vec{\alpha}$ is a sequence of $\mathcal{T}$-producible
          assemblies that follows $s$,
	 $m \in \mathbb{Z}^+$, 
	 for all $1 \leq l \leq m$, $w_l$ is a
          window,
	 for all $1 \leq l < l' \leq m$, $\vec{\Delta}_{l,l'} \ne \vec{0}$ satisfies $w_{l'} = w_l + \vec{\Delta}_{l,l'}$, and
	 for all $1 \leq l \leq m$, there is an odd $1 \leq e_l <
          2k$ such that $M_{\vec{\alpha}, w_l} \upharpoonright s$
          is a non-empty restricted glue window submovie of length $2e_l$.
%
%
%
%
%
%
%
%
%
%
If $m > |G|^k \cdot k \cdot 16^{k}$, then there exist $1 \leq l < l'
\leq m$ such that $e_l = e_{l'} = e$ and $M_{\vec{\alpha}, w_l}
  \upharpoonright s = \left(\vec{v}_1,g_1\right), \ldots, \left(
  \vec{v}_{2e}, g_{2e}\right)$ and $M_{\vec{\alpha}, w_{l'}}
  \upharpoonright s = \left(\vec{v}'_1,g'_1\right), \ldots, \left(
  \vec{v}'_{2e}, g'_{2e}\right)$ are sufficiently similar non-empty restricted glue
window submovies.

\end{lemma}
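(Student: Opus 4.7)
The plan is a pigeonhole counting argument: I will associate to each restricted glue window submovie $M_l = M_{\vec{\alpha}, w_l} \upharpoonright s$ a canonical ``type'' $\tau(M_l)$ so that (a) the total number of possible types is at most $|G|^k \cdot k \cdot 16^k$, and (b) whenever $\tau(M_l) = \tau(M_{l'})$ for some $l < l'$, the submovies $M_l$ and $M_{l'}$ are sufficiently similar via the translation vector $\vec{\Delta}_{l,l'}$. With those two properties in hand, the lemma follows immediately: more than $|G|^k \cdot k \cdot 16^k$ submovies must contain two sharing a type.

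To define $\tau(M_l)$, first translate every coordinate so that $w_l$ is shifted onto a fixed reference $yz$-plane; this normalization is well-defined because, by hypothesis, the $w_l$'s are all translates of one another (and since each is a translate of the $yz$-plane, the displacement vectors $\vec{\Delta}_{l,l'}$ must be along the $x$-axis). Then set $\tau(M_l) = (e_l, C_l, A_l, \gamma_l)$, where $C_l$ is the set of normalized crossing locations $\left\{\vec{v}_i\right\}_{1 \le i \le 2e_l}$, $A_l \subseteq C_l$ is the set of normalized ``away-crossing'' far-side locations $\left\{\vec{v}_{4i-2}\right\}_{1 \le i \le (e_l+1)/2}$, and $\gamma_l : A_l \to G$ records the binding glue at each away crossing (namely $g_{4i-3}$, which equals $g_{4i-2}$ because the two abutting glues of a crossing bind). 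Equality of two such tuples is engineered to encode exactly conditions~\ref{lemma-assumption-1}--\ref{lemma-assumption-4} of Definition~\ref{def:sufficiently-similar}, with the matching translation $\vec{\Delta}_{l,l'}$.

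To bound the number of types, count factors independently: at most $k$ odd values of $e_l$ in $[1,2k)$; at most $2^{2k}=4^k$ choices for $C_l$, because $w_l$ has at most $2k$ edges lying within $R^3_{k,N}$ (each $yz$-cross-section of $R^3_{k,N}$ contains at most $2k$ lattice positions, one edge per position); at most $2^{2k}=4^k$ choices for $A_l$, since $|A_l| = (e_l+1)/2$ is the size of a subset of the $e_l < 2k$ crossings; and at most $|G|^k$ choices for $\gamma_l$, since $|A_l| \le k$. Multiplying gives $k \cdot 4^k \cdot 4^k \cdot |G|^k = |G|^k \cdot k \cdot 16^k$, as required, and pigeonhole then delivers the desired pair $l < l'$.

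The one slightly delicate point, where I expect to spend the most care, is confirming that equality of $\gamma_l$ and $\gamma_{l'}$ really yields condition~\ref{lemma-assumption-4} of Definition~\ref{def:sufficiently-similar}, which pairs $g'_{4j-2}$ (a far-side glue in $M_{l'}$) with $g_{4i-3}$ (a near-side glue in $M_l$). This is not a conceptual obstacle: across any single crossing the two abutting glues must be equal in order to bind, so recording the binding glue of each away crossing in $\gamma_l$ is unambiguous and the required matching falls out of the canonicalization. All remaining verifications (that equality of $e_l$, $C_l$, and $A_l$ yields conditions~\ref{lemma-assumption-1}, \ref{lemma-assumption-2}, and~\ref{lemma-assumption-3} respectively) are essentially by construction.
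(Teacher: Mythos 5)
Your proof is correct and takes essentially the same pigeonhole approach as the paper: count (i) the odd value $e$, (ii) the set of crossed window edges, (iii) the subset that are away-crossings, and (iv) the glues at away-crossings, getting $k\cdot 4^k\cdot 4^k\cdot |G|^k$ possibilities and concluding that two of more than that many submovies must coincide up to translation. Your packaging via a normalized ``type'' $\tau(M_l)$ is a slightly cleaner way of expressing what the paper does implicitly when it applies pigeonhole across different windows, but the decomposition, the numerical bound, and the verification of condition~\ref{lemma-assumption-4} via $g_{4i-3}=g_{4i-2}$ all match the paper's proof.
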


The proof idea for Lemma~\ref{lem:lemma-1} goes like this.
We first count the number of ways to choose the set $\left\{  \vec{v}_1, \ldots, \vec{v}_{2e} \right\}$. 
Then, we count the number of ways to choose the set
          $\left\{ \vec{v}_{4i-2} \ \left| \ 1 \leq i \leq
          \frac{e+1}{2} \right. \right\}$. 
Finally, we count the number of ways to choose the sequence $\left( g_{\vec{x}_i} \left| \  i =1,\ldots,\frac{e+1}{2}  \right.  \right)$.
After summing over all odd $e$, we get the indicated lower bound on $m$ that notably neither contains a ``factorial'' term nor a coefficient on the ``$k$'' in the exponent of ``$|G|$''. 
See Section~\ref{sec:appendix-lower-bound} for the full proof of Lemma~\ref{lem:lemma-1}.
Our second technical result is the cornerstone of our lower bound machinery. It basically says that if, for some directed
TAS $\mathcal{T}$, two distinct restricted glue window submovies are
sufficiently similar, then $R^3_{k,N}$ does not self-assemble in $\mathcal{T}$. 
%
%
%
%
\begin{lemma}
\label{lem:lemma-2}
Assume:
	 $\mathcal{T}$ is a directed, 3D TAS, 
	 $k, N \in \mathbb{Z}^+$, 
	 $s \subseteq R^3_{k,N}$ is a simple path from the location of the seed of $\mathcal{T}$ to some location in the furthest extreme column of $R^3_{k,N}$,
	 $\vec{\alpha}$ is a $\mathcal{T}$-assembly sequence that follows $s$, 
	 $w$ and $w'$ are windows, such that, $\vec{\Delta} \ne \vec{0}$ is a vector satisfying $w' = w + \vec{\Delta}$, and 
	 $e$ is an odd number satisfying $1 \leq e < 2k$. 
%
%
%
%
%
%
If $M = M_{\vec{\alpha},w} \upharpoonright s =  \left(\vec{v}_1,g_1\right),\ldots,\left(\vec{v}_{2e},g_{2e}\right)$ and $M' = M_{\vec{\alpha},w'} \upharpoonright s = \left(\vec{v}'_1,g'_1\right),\ldots,\left(\vec{v}'_{2e},g'_{2e}\right)$ are sufficiently similar non-empty restricted glue window submovies, then $R^3_{k,N}$ does not self-assemble in $\mathcal{T}$. 

\end{lemma}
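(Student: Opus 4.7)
The plan is to derive a contradiction: assuming $R^3_{k,N}$ self-assembles in $\mathcal{T}$ (so that the unique terminal assembly $\alpha \in \mathcal{A}_{\Box}[\mathcal{T}]$ has $\dom\alpha = R^3_{k,N}$), I will use the sufficient similarity of $M$ and $M'$ to splice $\vec{\alpha}$ into a new $\mathcal{T}$-assembly sequence that either places a tile outside $R^3_{k,N}$ or produces a $\mathcal{T}$-terminal assembly distinct from $\alpha$, either of which is impossible.

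\textbf{Setup.} Since every window in this paper is a translate of the $yz$-plane uniquely identified by its $x$-coordinate, the vector $\vec{\Delta}$ has nonzero $x$-component $d$; without loss of generality $d \geq 1$, the seed lies at $x$-coordinate smaller than $x_w$ and $x_{w'}$, and the endpoint of $s$ sits in the far extreme column $x = N-1$ (so by the remark at the end of Definition~\ref{def:sufficiently-similar}, $0 < x_w < x_{w'} < N$). Because $s$ is simple, starts on the seed side of both windows, ends on the far side of both, and $e$ is odd, the $e$ crossings of each window strictly alternate in direction beginning and ending with an \emph{away} crossing; the away-crossing pairs in $M$ are therefore $(\vec{v}_{4i-3}, \vec{v}_{4i-2})$ for $i = 1, \ldots, \frac{e+1}{2}$, and the two vertices of each such pair share one window edge, giving $g_{4i-3} = g_{4i-2}$ (and analogously for $M'$).

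\textbf{Splicing.} Set $i_0 = \frac{e+1}{2}$, indexing the last away crossing of $w$ along $s$. By Conditions~2 and~3, there is a unique $j_0$ with $\vec{v}'_{4j_0-2} = \vec{v}_{2e} + \vec{\Delta}$; by Condition~4, $g'_{4j_0-2} = g_{2e-1} = g_{2e}$, so the window-edge glues at these corresponding away crossings coincide. Let $s_L$ be the prefix of $s$ from the seed through $\vec{v}'_{4j_0-3}$, and $s_R$ the suffix of $s$ from $\vec{v}_{2e}$ through its endpoint. Form the hybrid path $s^{\ast}$ by concatenating $s_L$, the vertex $\vec{v}'_{4j_0-2} = \vec{v}_{2e} + \vec{\Delta}$, and the $\vec{\Delta}$-translate of $s_R$ with its leading vertex removed. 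Build the associated hybrid assembly sequence by placing $\alpha(\vec{v})$ at each $\vec{v} \in s_L$, then the tile $\alpha(\vec{v}_{2e})$ at $\vec{v}'_{4j_0-2}$ (which is permitted at temperature~$1$ because the only binding glue available at this location is the one across $w'$, and by Condition~4 this glue equals $g_{2e-1}$, i.e., the $w$-facing glue of $\alpha(\vec{v}_{2e})$), and finally $\alpha(\vec{v})$ at $\vec{v} + \vec{\Delta}$ for each remaining $\vec{v}$ along $s_R$. Successive placements along the translated tail bind because the glues involved are the interior glues already present between adjacent tiles of $\alpha$ in $s_R$, now translated uniformly by $\vec{\Delta}$. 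Crucially, choosing $i_0$ to be the last away crossing of $w$ forces $s_R$ to remain on the far side of $w$, so every location in $s_R + \vec{\Delta}$ has $x$-coordinate at least $x_{w'}$, providing strong geometric separation from the prefix $s_L$.

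\textbf{Contradiction and main obstacle.} The endpoint of $s^{\ast}$ has $x$-coordinate $N - 1 + d \geq N$ and therefore lies strictly outside $R^3_{k,N}$. If the hybrid sequence runs to completion, the resulting $\mathcal{T}$-producible assembly contains a tile outside $R^3_{k,N}$; any $\mathcal{T}$-terminal extension inherits that tile, contradicting that $R^3_{k,N}$ self-assembles in $\mathcal{T}$. The subtle point, and the one I expect to require the most care, is ruling out a premature block of the hybrid sequence: the translated tail may revisit $x$-coordinates in the range $[x_{w'}, N-1]$ at which $s_L$ may already have placed a tile, and the two tile types could disagree. A short case analysis handles this: if the clashing tile types actually agree, the attempted placement is a no-op and the sequence proceeds; if they disagree, then the partial hybrid assembly already differs from $\alpha$ at that location, so any $\mathcal{T}$-terminal extension of the partial assembly yields a terminal assembly distinct from $\alpha$, contradicting the directedness of $\mathcal{T}$. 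In every case, $R^3_{k,N}$ cannot self-assemble in $\mathcal{T}$.
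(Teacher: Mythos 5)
Your single-splice construction is natural and much simpler than the paper's, but the gap you flag at the end is real, and your proposed case analysis does not close it. When the translated tail reaches a location $\vec{p}$ already occupied by $s_L$, the tile present at $\vec{p}$ in the partial hybrid assembly is $s_L$'s tile, which equals $\alpha(\vec{p})$; the hybrid assembly therefore does \emph{not} ``already differ from $\alpha$ at that location,'' regardless of whether the intended translated tile $\alpha(\vec{p}-\vec{\Delta})$ agrees with $\alpha(\vec{p})$, because the intended tile was never placed. In the disagree subcase the hybrid sequence is simply blocked: everything placed so far is a subassembly of $\alpha$ lying inside $R^3_{k,N}$, and neither directedness nor $\dom\alpha = R^3_{k,N}$ is violated, so no contradiction follows. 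Nor can the obstacle be dismissed as a corner case. The index $j_0$ of the away-crossing of $w'$ lying at $\vec{v}_{2e}+\vec{\Delta}$ is determined by parts~\ref{lemma-assumption-2} and~\ref{lemma-assumption-3} of Definition~\ref{def:sufficiently-similar} only up to a set equality, not an ordering, so in general $j_0 > 1$; when it is, $s_L$ has already crossed $w'$ a total of $2(j_0-1)$ times and occupies locations on the far side of $w'$ --- precisely the region the translated tail must traverse. Your ``strong geometric separation'' claim holds only when $j_0 = 1$.

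The paper's proof exists precisely to sidestep this collision problem, and it does so by interleaving rather than splicing once. The algorithm of Figure~\ref{alg:beta-prime} follows $\vec{\alpha}$ translated by $\vec{\Delta}$ only while on the far side of $w'$; each time this translated walk would re-cross $w'$ toward the seed, directedness is invoked at exactly that boundary to show the tile already present there must be $\vec{\alpha}$'s own tile at that location, and the algorithm then switches to following $\vec{\alpha}$ \emph{untranslated} on the near side of $w'$ until the next away-crossing, and so on. The injective maps $f$ and $g$ between $w$-crossings and $w'$-crossings, tracked by the six-part Loop~2 invariant, are what guarantee that every placement lands on an empty location: far-side placements come from pairwise-disjoint translated segments of $\vec{\alpha}$, near-side placements come from pairwise-disjoint untranslated segments, and the two regions never overlap. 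Repairing your argument amounts to re-deriving this interleaving; a single splice at the last $w$-crossing cannot avoid the collision.
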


See Figure~\ref{fig:half_windows_equivalent} for a 2D example of Lemma~\ref{lem:lemma-2}.

\begin{figure}[htp]
    \centering
	\begin{subfigure}[t]{\textwidth}
        \centering
        \includegraphics[width=.65\textwidth]{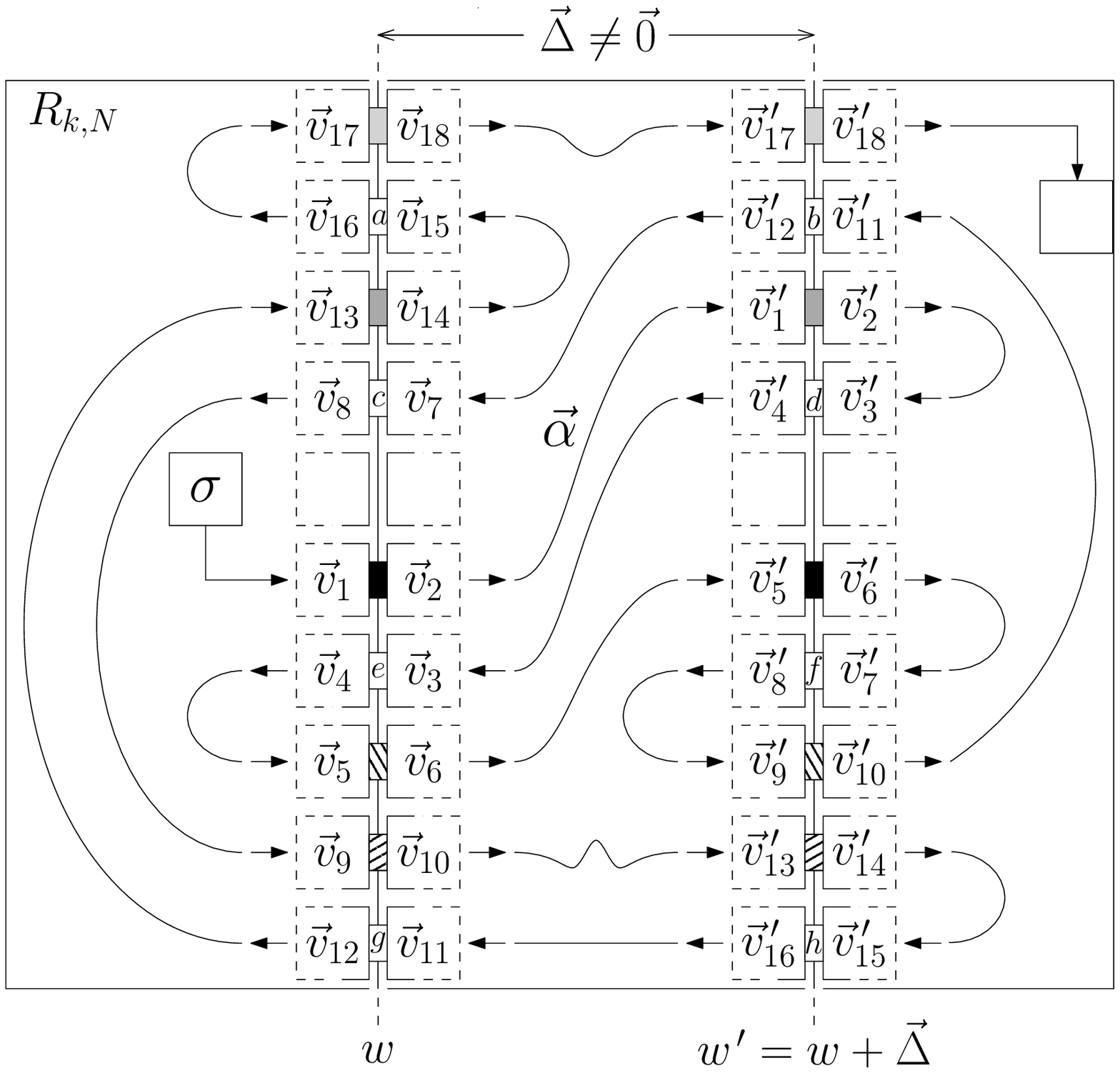}

        \caption{\label{fig:half_windows_equivalent_hypothesis} The
          hypothesis. Note that $\vec{\alpha}$ follows a simple path
          $s$ from the location of $\sigma$ to a location in the
          furthest extreme column. The restricted glue window movies
          are sufficiently similar because their glues are at the same
          locations (up to translation), oriented in the same
          direction (away or toward $\sigma$), and each pair of glues that are
          placed by $\vec{\alpha}$ at an ``away crossing'' of
            one of the windows is equal to its translated
          counterpart in the other window, e.g., the two topmost glues
          that touch $w$ and $w'$ are both light gray. The same
          constraint holds for all glue pairs shown with a solid shade
          of gray or a striped pattern. On the other hand, the glues
          adjacent to $w'$ that are placed by $\vec{\alpha}$ at
            a ``toward crossing'', for example $g'_{11}$
          and $g'_{12}$, are decorated with a letter in order to
          represent the fact that we do not assume that these glues
          are equal to their translated counterparts that touch $w$
          (i.e., $g_{15}$ and $g_{16}$).  }
    \end{subfigure}

    \begin{subfigure}[t]{\textwidth}
        \centering
        \includegraphics[width=.65\textwidth]{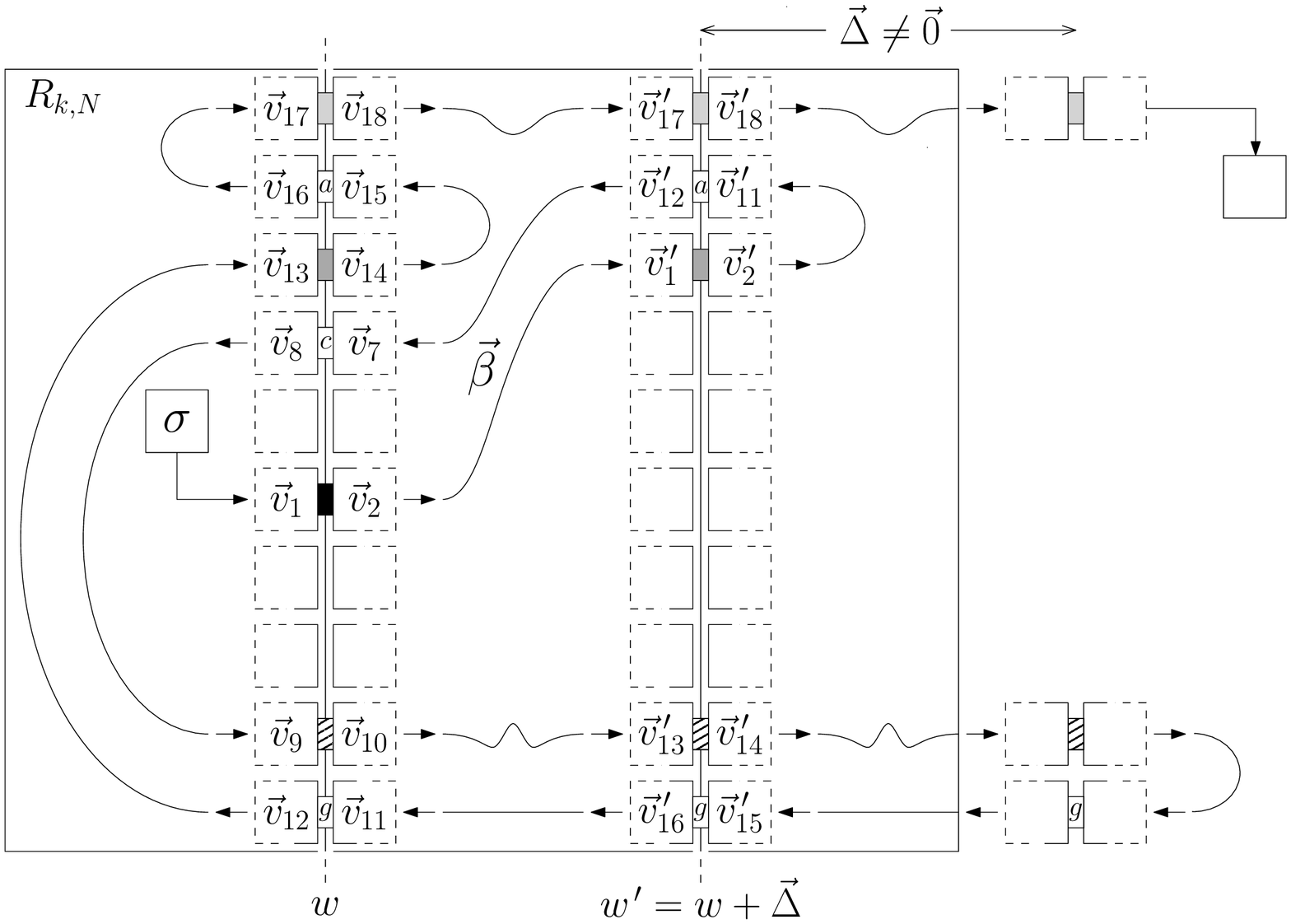}
        \caption{\label{fig:half_windows_equivalent_conclusion}The
          conclusion.  Given the fact that $\mathcal{T}$ is directed
          and the way $\vec{\beta}$ is defined, every pair of glues
          that touch $w$ must be equal to the corresponding pair of
          glues that touch $w'$ (if any). Thus, e.g., the glue pairs
          labelled $b$ and $h$ in
          part~(\subref{fig:half_windows_equivalent_hypothesis}) must
          really be equal to the glue pairs $a$ and $g$, respectively.
          After $\vec{\beta}$ places a tile at location
          $\vec{v}^{\,\prime}_{17}$, it will mimic how $\vec{\alpha}$
          got from $\vec{v}_{18}$ to the tile in the extreme column of
          $R_{k,N}$, as depicted in
          part~(\subref{fig:half_windows_equivalent_hypothesis}). Since
          $\vec{\Delta} \ne \vec{0}$, this always results in at least
          one tile placement outside of $R_{k,N}$. In this example,
          $\beta$ also happens to exit $R_{k,N}$ earlier in its
          assembly sequence, i.e., in the sub-path from
          $\vec{v}^{\,\prime}_{14}$ to $\vec{v}^{\,\prime}_{15}$. }
    \end{subfigure}

 \caption{\label{fig:half_windows_equivalent} A 2D example of the
   hypothesis and conclusion of Lemma~\ref{lem:lemma-2} for $k = 10$
   and $e = 9$. Since the example is 2D, we use $R_{k,N} = \{0,
   \ldots, N - 1\} \times \{0, \ldots, k - 1\} $, rather than
   $R^3_{k,N}$.  }

    \end{figure}

We now give some notation that will be useful for our discussion of the proof of Lemma~\ref{lem:lemma-2}. The definitions and notation in the following paragraph are inspired by notation that first appeared in \cite{WindowMovieLemma}.

For a $\mathcal{T}$-assembly sequence $\vec{\alpha} = (\alpha_i \mid 0 \leq i <
l)$, we write $\left| \vec{\alpha} \right| = l$. 
We write $\vec{\alpha}[i]$ to denote $\vec{x} \mapsto t$, where $\vec{x}$
and~$t$ are such that $\alpha_{i+1} = \alpha_i + \left(\vec{x} \mapsto
t\right)$.
We write $\vec{\alpha}[i] + \vec{\Delta}$, for
some vector $\vec{\Delta}$, to denote
$\left(\vec{x}+\vec{\Delta}\right) \mapsto t$. 
If $\alpha_{i+1} = \alpha_i +
\left(\vec{x} \mapsto t\right)$, then we write
$Pos\left(\vec{\alpha}[i]\right) = \vec{x}$ and
$Tile\left(\vec{\alpha}[i]\right) = t$.
Assuming $\left|
\vec{\alpha} \right| > 0$, the notation $\vec{\alpha} = \vec{\alpha} + \left(\vec{x} \mapsto
t\right)$ denotes a \emph{tile placement step}, namely the sequence of configurations $\left(\alpha_i \mid 0 \leq i < l + 1 \right)$, where $\alpha_{l}$ is the configuration satisfying, $\alpha_l\left( \vec{x} \right) = t$ and for all $\vec{y} \ne \vec{x}$, $\alpha_l\left(\vec{y}\right) = \alpha_{l-1}\left( \vec{y} \right)$.
Note that the ``$+$'' in a tile placement step is different from the
``$+$'' used in the notation ``$\beta = \alpha + \left(\vec{p} \mapsto
t\right)$''.
However, since the former operates on an assembly sequence, it should be clear from the context which operator is being invoked. 
The definition of a tile placement step does not require that the sequence of configurations be a $\mathcal{T}$-assembly sequence.
After all, the tile placement step $\vec{\alpha} = \vec{\alpha} + \left(\vec{x} \mapsto
t\right)$ could be attempting to place a tile at a location that is not even adjacent to (a location in the domain of) $\alpha_{l-1}$.
Or, it could be attempting to place a tile at a location that is in the domain of $\alpha_{l-1}$, which means a tile has already been placed at $\vec{x}$. 
So we say that  such a tile placement step is \emph{correct} if $\left(\alpha_i \mid 0 \leq i < l + 1 \right)$ is a $\mathcal{T}$-assembly sequence.
If $\left| \vec{\alpha} \right| = 0$, then $\vec{\alpha} =
\vec{\alpha} + \left(\vec{x} \mapsto t \right)$ results in the
$\mathcal{T}$-assembly sequence $(\alpha_0)$, where $\alpha_0$ is the
assembly such that $\alpha_0\left(\vec{x}\right) = t$ and
$\alpha_0\left( \vec{y} \right)$ is undefined at all other locations
$\vec{y} \neq \vec{x}$.

In Figure~\ref{alg:beta-prime}, we define an algorithm that uses
$\vec{\alpha}$ to construct a new assembly sequence $\vec{\beta}$ such
that the tile placement steps by $\vec{\beta}$ on the far side of
$w'$ from the seed mimic a (possibly strict) subset of the tile
placements by $\vec{\alpha}$ on the far side of $w$ from the seed.
When $\vec{\beta}$ is on the near side of $w'$ to the seed, it mimics $\vec{\alpha}$, although $\vec{\beta}$ does not necessarily mimic every tile placement by $\vec{\alpha}$ on the near side of $w'$ to the seed. 
When $\vec{\beta}$ crosses $w'$, going away from the seed, by placing
tiles at $\vec{v}'_{4j-3}$ and $\vec{v}'_{4j-2}$ in this order, then
the tile it places at $\vec{v}'_{4j-2}$ is of the same type as the
tile that $\vec{\alpha}$ places at $\vec{v}_{4i-2} = \vec{v}'_{4j-2} -
\vec{\Delta}$.
After $\vec{\beta}$ crosses $w'$ by placing a tile at $\vec{v}'_{4j-2}$, $\vec{\beta}$ places tiles that $\vec{\alpha}$ places along $s$ from $\vec{v}_{4i-2}$ to $\vec{v}_{4i-1}$, but the tiles $\vec{\beta}$ places are translated to the far side of $w'$ from the seed. 
When $\vec{\beta}$ is about to cross $w'$, going toward the seed, by placing a tile at $\vec{v}'_{4j-1}$, then, since $\mathcal{T}$ is directed, the type of tile that it places at this location is equal to the type of tile that $\vec{\alpha}$ places at $\vec{v}'_{4j-1}$. 
This means that $\vec{\beta}$ may continue to follow $s$ but starting from $\vec{v}'_{4j}$. 
Eventually, $\vec{\beta}$ will finish crossing $w'$ going away from the seed for the last time by placing a tile at $\vec{v}_{2e}+\vec{\Delta}$. 
Then, $\vec{\beta}$ places tiles that $\vec{\alpha}$ places along $s$, starting from $\vec{v}_{2e}$, but the tiles that $\vec{\beta}$ places are translated to the far side of $w'$ from the seed. 
Since $\vec{\Delta} \ne \vec{0}$, $\vec{\beta}$ will ultimately place a tile that is not in $R^3_{k,N}$. 
%


\begin{figure}[h!]
\input{lemma-2-algorithm}
\caption{\label{alg:beta-prime} The algorithm for $\vec{\beta}$. Here, the variable ``$k$'' has no relation to the ``$k$'' used in $R^3_{k,N}$.}
\end{figure}

We illustrate the behavior of this algorithm in Figure~\ref{fig:algorithm_trace}, where we
  apply it to the assembly sequence $\vec{\alpha}$ shown in
  Figure~\ref{fig:half_windows_equivalent}(a).

We must show that all of the tile
placement steps executed by the algorithm for $\vec{\beta}$ are
correct. 
In addition, we must also prove that the tile placement steps executed by the algorithm for $\vec{\beta}$ place tiles along a simple path.
Let $\vec{\alpha} = \vec{\alpha} + \left( \vec{x}' \mapsto t' \right)$
and $\vec{\alpha} = \vec{\alpha} + \left( \vec{x} \mapsto t \right)$
be consecutive tile placement steps executed by the algorithm for
$\vec{\beta}$ and assume that the former is either the first tile placement step executed or it is correct.
To show that the latter is correct, we will show that:
\begin{enumerate}[label=\alph*]
	\item \label{adjacent-valid-binding} the tile configuration 
          that consists of $t$ placed at $\vec{x}$ and $t'$ placed at
          $\vec{x}'$ is a $1$-stable assembly (not necessarily $\mathcal{T}$-producible) whose domain
          consists of two locations, and
	\item \label{adjacent-valid-empty-location} the location	
          $\vec{x}$ is unoccupied before $\vec{\alpha} = \vec{\alpha} + \left( \vec{x} \mapsto t \right)$ is executed.
\end{enumerate}
The previous two conditions constitute a slightly stronger notion of ``correctness'' for a tile placement step, which we will call \emph{adjacently correct}. 
After all, the two previous conditions imply that $\vec{\alpha} = \vec{\alpha} + \left( \vec{x} \mapsto t \right)$ is correct, but if $\vec{\alpha} = \vec{\alpha} + \left( \vec{x} \mapsto t \right)$ is correct, then condition~\ref{adjacent-valid-empty-location} must hold but condition~\ref{adjacent-valid-binding} need not because $\vec{x}$ does not have to be adjacent to $\vec{x}'$.  
It suffices to prove that every tile placement step executed by the algorithm for $\vec{\beta}$ is adjacently correct.
As a result, $\vec{\beta}$, like $\vec{\alpha}$, will place tiles along a simple path.


\begin{figure}
    \centering

    \begin{subfigure}[t]{0.475\textwidth}
        \centering
        \includegraphics[width=\textwidth]{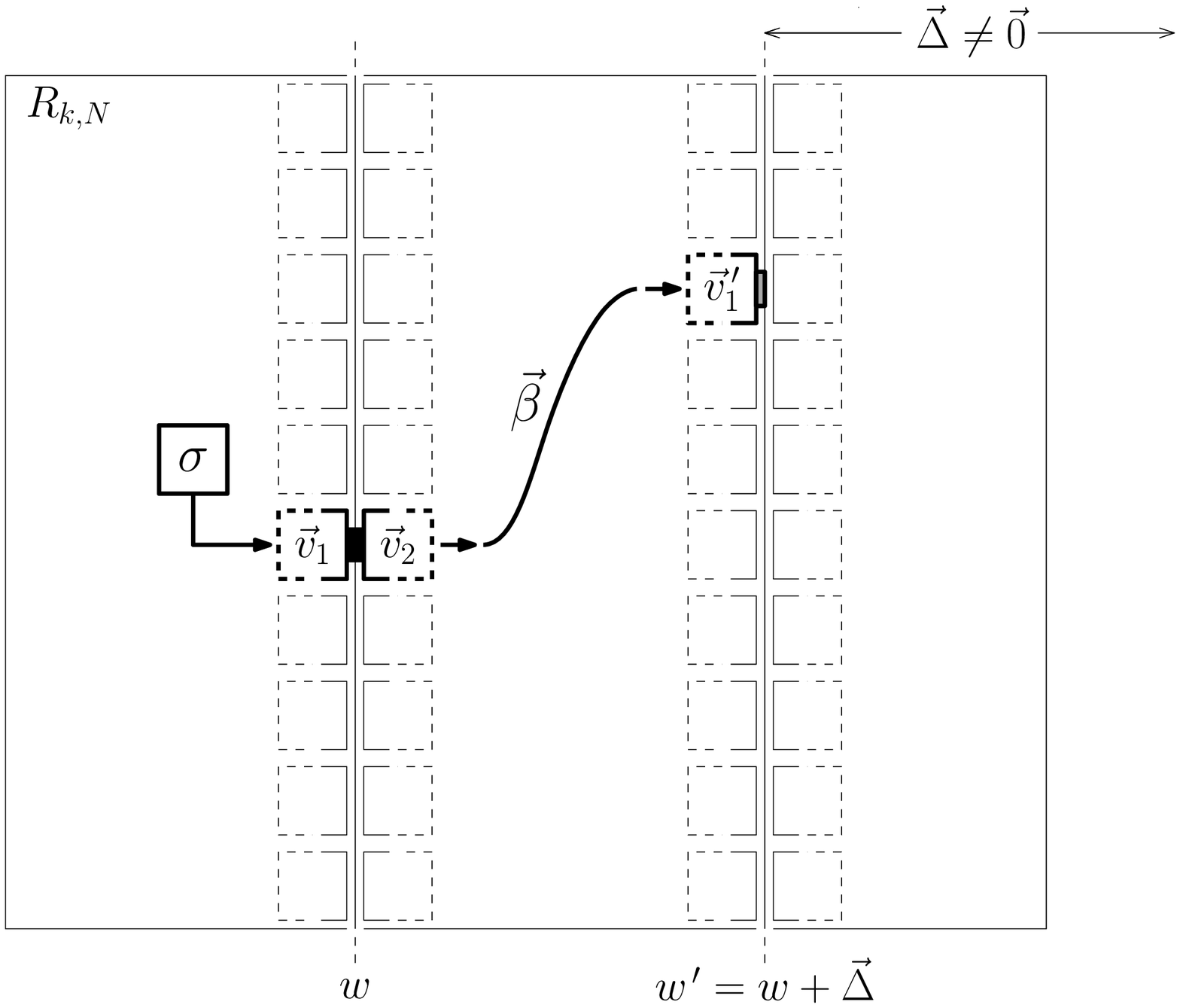}
        \caption{\label{fig:algorithm_trace_1} Right after Loop~1 has
          completed: The $\vec{\alpha}$ sub-path from $\sigma$ to
          $\vec{v}^{\,\prime}_1$ was used to initialize
          $\vec{\beta}$.}
    \end{subfigure}%
    ~
    \begin{subfigure}[t]{0.475\textwidth}
        \centering
        \includegraphics[width=\textwidth]{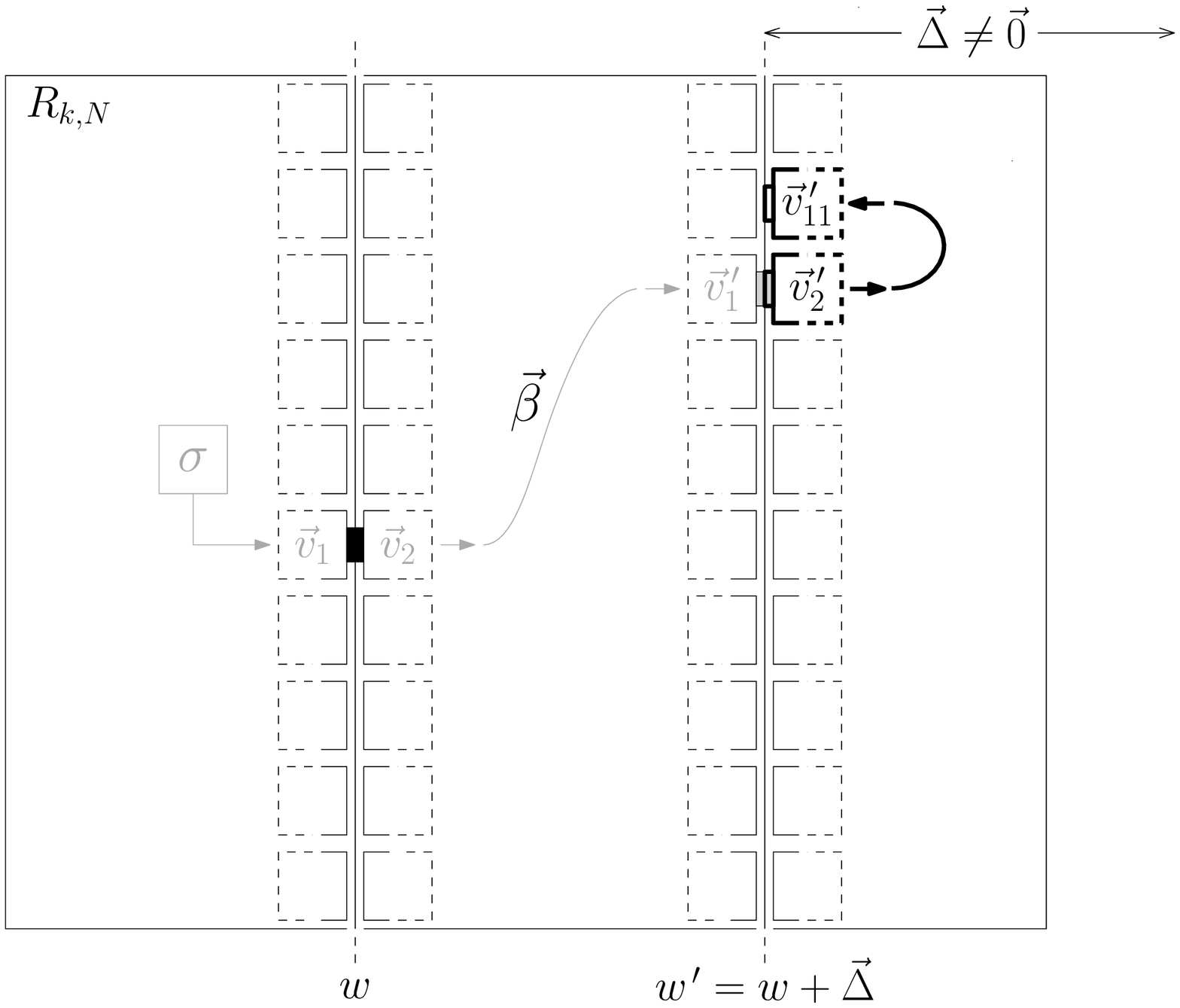}
        \caption{\label{fig:algorithm_trace_2} Right after Loop~2a has
          completed for the first time: The $\vec{\alpha}$ sub-path
          from $\vec{v}_{14}$ to $\vec{v}_{15}$ was translated by
          $\vec{\Delta}$ and appended to $\vec{\beta}$.}
    \end{subfigure}%

    \begin{subfigure}[t]{0.475\textwidth}
        \centering
        \includegraphics[width=\textwidth]{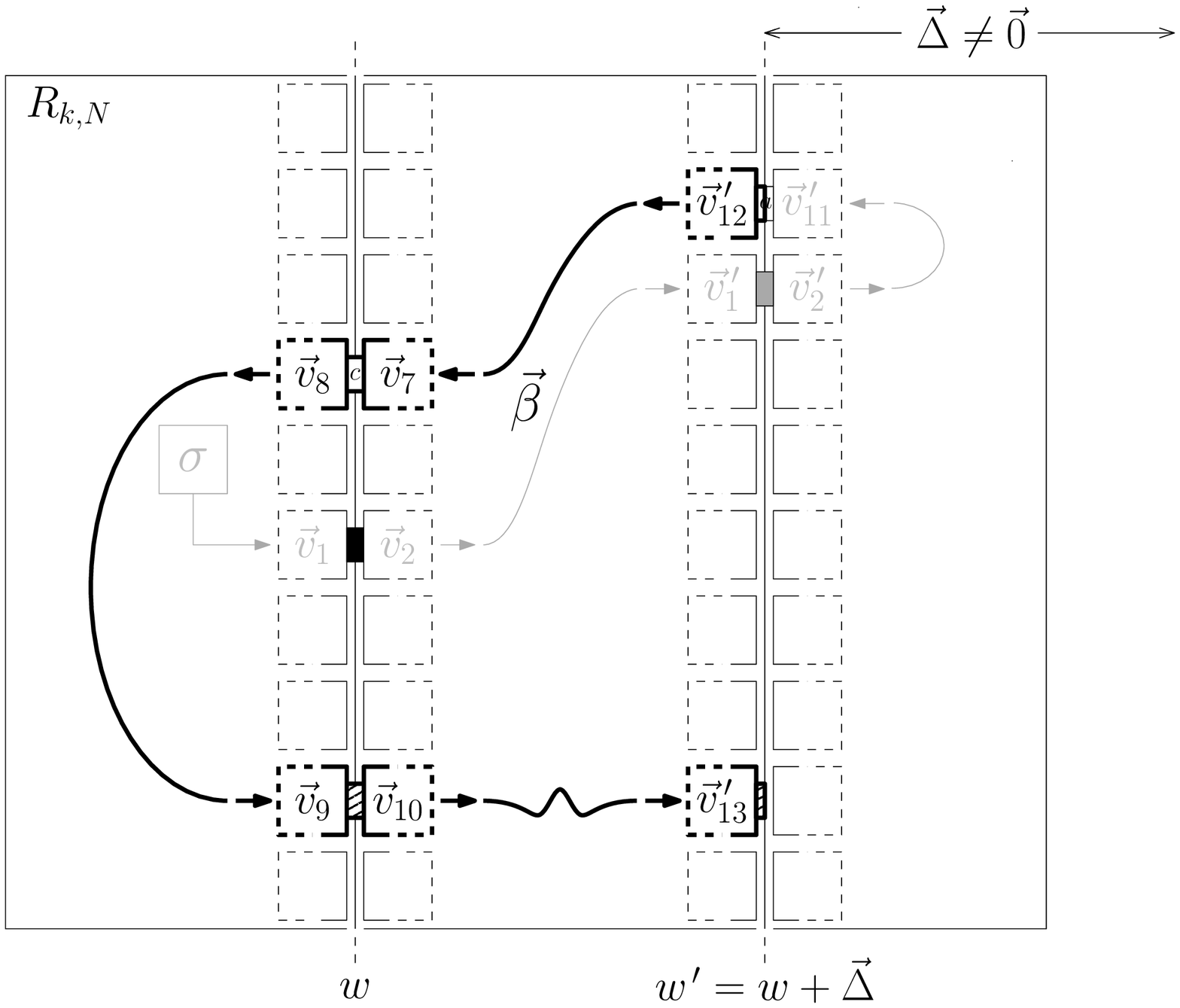}
        \caption{\label{fig:algorithm_trace_3} Right after Loop~2b has
          completed for the first time: The $\vec{\alpha}$ sub-path
          from $\vec{v}^{\,\prime}_{12}$ to $\vec{v}^{\,\prime}_{13}$
          was appended to $\vec{\beta}$.}
    \end{subfigure}%
    ~
    \begin{subfigure}[t]{0.475\textwidth}
        \centering
        \includegraphics[width=\textwidth]{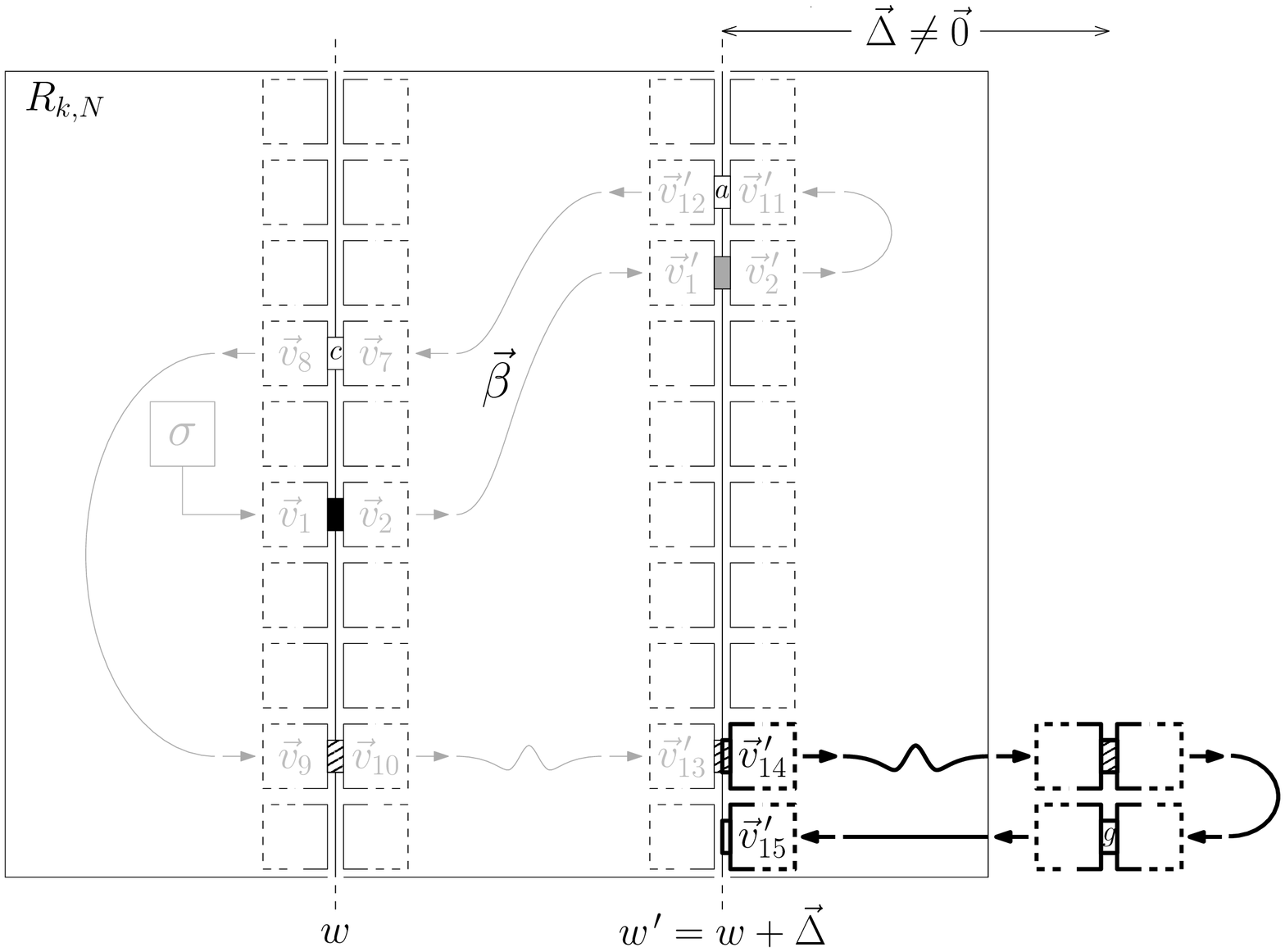}
        \caption{\label{fig:algorithm_trace_4} Right after Loop~2a has
          completed for the second time: The $\vec{\alpha}$ sub-path
          from $\vec{v}_{10}$ to $\vec{v}_{11}$ was translated by
          $\vec{\Delta}$ and appended to $\vec{\beta}$.}
    \end{subfigure}%

    \begin{subfigure}[t]{0.475\textwidth}
      \centering
      \includegraphics[width=\textwidth]{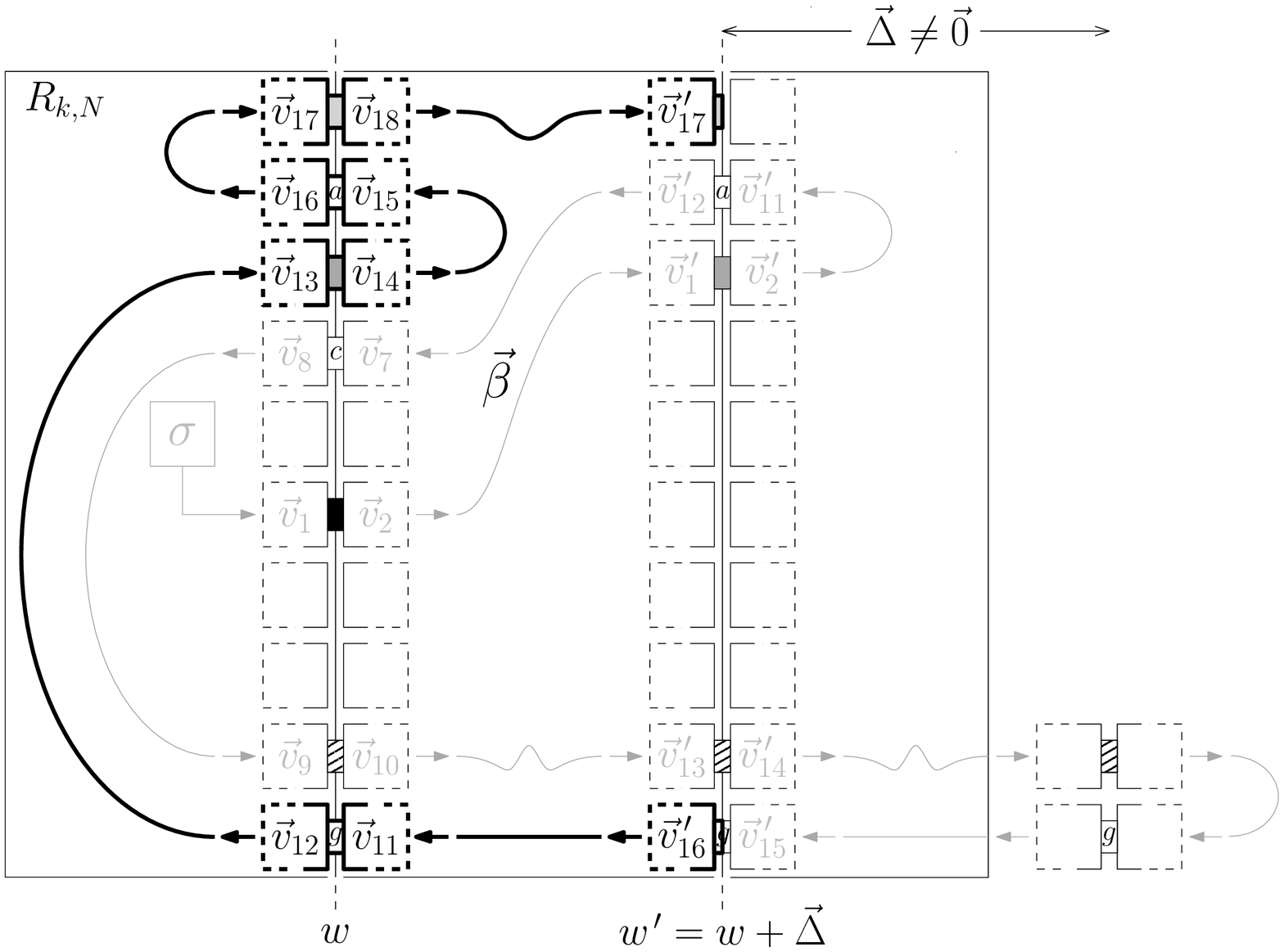}
        \caption{\label{fig:algorithm_trace_5} Right after Loop~2b has
          completed for the second time: The $\vec{\alpha}$ sub-path
          from $\vec{v}^{\,\prime}_{16}$ to $\vec{v}^{\,\prime}_{17}$
          was appended to $\vec{\beta}$.}
    \end{subfigure}%
    ~
    \begin{subfigure}[t]{0.475\textwidth}
        \centering
        \includegraphics[width=\textwidth]{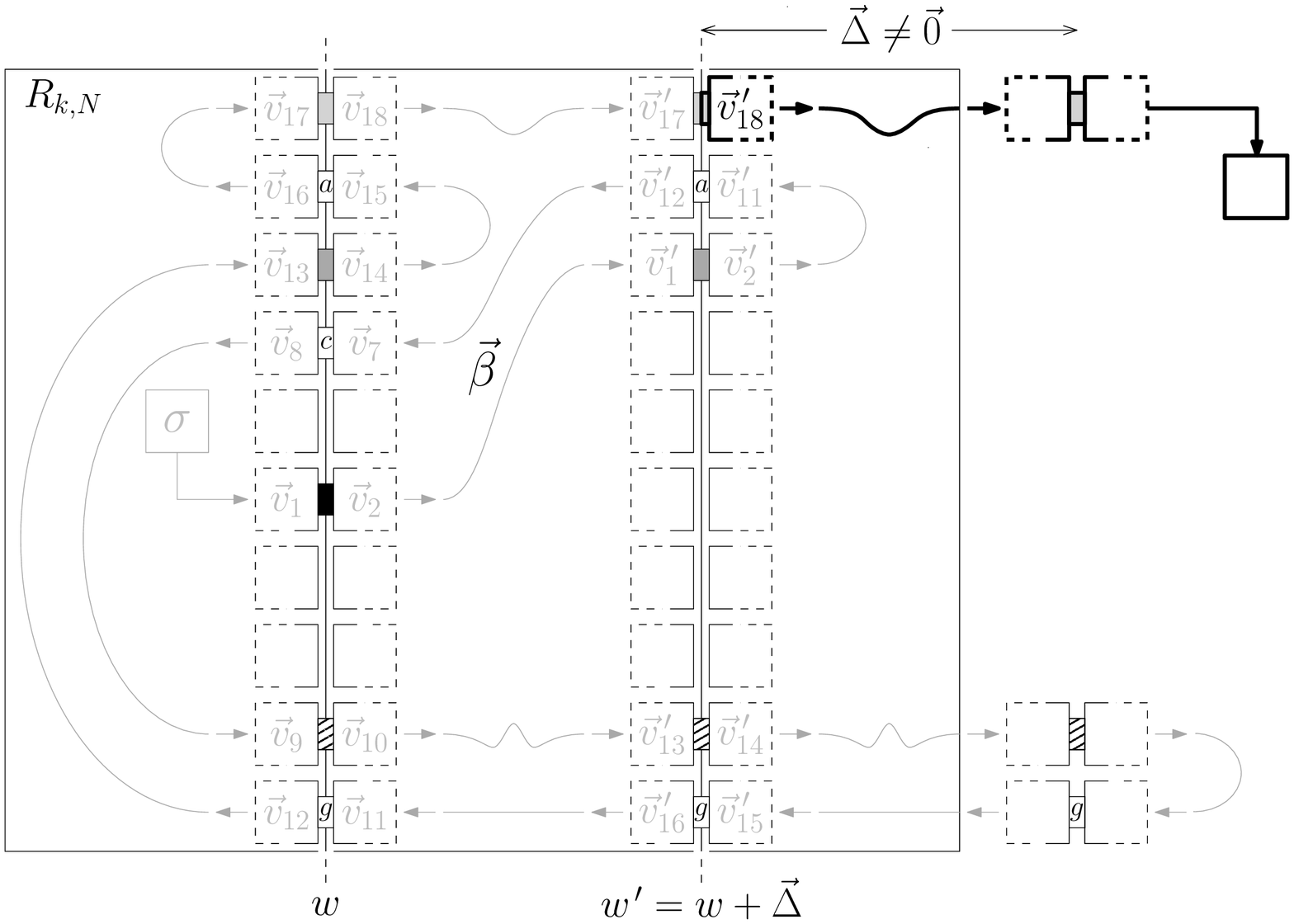}
        \caption{\label{fig:algorithm_trace_6} Right after Loop~3 (and
          the algorithm) has completed: The $\vec{\alpha}$ suffix
          starting with $\vec{v}_{18}$ was translated by
          $\vec{\Delta}$ and appended to $\vec{\beta}$.}
    \end{subfigure}%

    \caption{\label{fig:algorithm_trace} The trace of the algorithm shown
      in Figure~\ref{alg:beta-prime} when applied to the assembly
      sequence $\vec{\alpha}$ shown in
      Figure~\ref{fig:half_windows_equivalent}(a). In each sub-figure,
      the new sub-path is bolded and is a continuation of the sub-path
      in the previous one. The last sub-figure above shows the same
      assembly sequence $\vec{\beta}$ depicted in
      Figure~\ref{fig:half_windows_equivalent}(b).}
\end{figure}

See Section~\ref{sec:appendix-lower-bound} for the full proof of Lemma~\ref{lem:lemma-2}. We now have the necessary machinery for our lower bound, which is the following.

\addtocounter{theorem}{-2}
%
%
%
%
\begin{theorem}
\label{thm:theorem-1}
$K^1_{USA}\left(R^3_{k,N}\right)=\Omega\left( N^{\frac{1}{k}}\right)$.
\end{theorem}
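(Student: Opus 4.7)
The plan is a Pigeonhole counting argument that directly combines Lemmas~\ref{lem:lemma-1} and~\ref{lem:lemma-2}. Suppose for contradiction that $\mathcal{T}=(T,\sigma,1)$ is a directed TAS with $|T|=n$ in which $R^3_{k,N}$ uniquely self-assembles, and let $G$ be the set of glues appearing in $T$, so $|G| \leq 6n$. The high-level idea is to show that if $n$ is too small then two distinct windows between consecutive columns of $R^3_{k,N}$ must admit sufficiently similar restricted glue window submovies, which by Lemma~\ref{lem:lemma-2} would contradict the assumption that $R^3_{k,N}$ self-assembles in $\mathcal{T}$.

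First, by connectedness of the binding graph of the unique terminal assembly, I would fix a simple path $s \subseteq R^3_{k,N}$ from the location of $\sigma$ to some location in the furthest extreme column of $R^3_{k,N}$, and then invoke Observation~\ref{obs:simple} to obtain a $\mathcal{T}$-assembly sequence $\vec{\alpha}$ that follows $s$. For each integer $x$-coordinate strictly between that of $\sigma$ and that of the endpoint of $s$, let $w_l$ be the $yz$-plane window at that $x$-coordinate, and let $m$ denote the total number of such windows. Since (without loss of generality) $\sigma$ lies in the half of $R^3_{k,N}$ opposite the furthest extreme column, $m = \Omega(N)$. Each $w_l$ is crossed by $s$, so $M_{\vec{\alpha},w_l} \upharpoonright s$ is nonempty. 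Moreover, because the two endpoints of $s$ lie on opposite sides of $w_l$, the number of crossings $e_l$ is odd; and since each crossing uses a distinct edge in a column containing at most $2k$ cells, $1 \leq e_l < 2k$. All hypotheses of Lemma~\ref{lem:lemma-1} are therefore met.

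Finally, I would apply the two lemmas to conclude. If $m > |G|^k \cdot k \cdot 16^k$, Lemma~\ref{lem:lemma-1} produces indices $l \neq l'$ whose restricted submovies are sufficiently similar, whence Lemma~\ref{lem:lemma-2} contradicts the unique self-assembly of $R^3_{k,N}$ in $\mathcal{T}$. Hence $m \leq |G|^k \cdot k \cdot 16^k$, and rearranging with $m = \Omega(N)$ yields $|G| = \Omega\bigl(N^{1/k}/(k^{1/k} \cdot 16)\bigr)$. Under the thinness hypothesis on $k$, the factor $k^{1/k}$ is bounded by an absolute constant, so $|G| = \Omega(N^{1/k})$, and therefore $n \geq |G|/6 = \Omega(N^{1/k})$, as required.

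The main obstacle is bookkeeping rather than conceptual: one must carefully verify that the $\Omega(N)$ candidate windows all satisfy the odd-length and $e_l < 2k$ conditions required by Lemma~\ref{lem:lemma-1}, and check that the thinness bound on $k$ prevents the auxiliary factors $k^{1/k}$ and the constant $16$ inherited from $16^k$ from absorbing the $N^{1/k}$ growth that the argument is meant to expose. Both points reduce to elementary properties of the geometry of $R^3_{k,N}$ and the explicit dependence on $k$ already isolated in the statement of Lemma~\ref{lem:lemma-1}.
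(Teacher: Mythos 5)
Your proposal is correct and follows essentially the same route as the paper: fix a simple path $s$ from $\sigma$ to the furthest extreme column, obtain $\vec{\alpha}$ from Observation~\ref{obs:simple}, produce $\Omega(N)$ parallel windows each meeting the hypotheses of Lemma~\ref{lem:lemma-1}, combine with Lemma~\ref{lem:lemma-2} to force $m \leq |G|^k \cdot k \cdot 16^k$, and rearrange to get $|G| = \Omega(N^{1/k})$ hence $|T| \geq |G|/6 = \Omega(N^{1/k})$. The one small imprecision is your appeal to the thinness hypothesis to control $k^{1/k}$: that factor is bounded by an absolute constant ($k^{1/k} \leq 3^{1/3}$) for every $k \geq 1$, and the paper accordingly bounds the denominator via $3 \cdot k \cdot 16^k \leq 3^k \cdot 2^k \cdot 16^k$ so that $(3 k \cdot 16^k)^{1/k} \leq 96$ with no thinness assumption; your invocation of thinness is harmless but unnecessary.
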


The proof idea for Theorem~\ref{thm:theorem-1} is as follows. Assume $\mathcal{T}=(T,\sigma,1)$ is a directed, 3D TAS in which $R^3_{k,N}$ self-assembles. By Lemmas~\ref{lem:lemma-1} and~\ref{lem:lemma-2}, $N \leq 3 \cdot |G|^k  \cdot k \cdot 16^k$, where $G$ is the set of all glues in $T$. This means that $|T| \geq \frac{N^{\frac{1}{k}}}{576} = \Omega\left(N^{\frac{1}{k}}\right)$. See Section~\ref{sec:appendix-lower-bound} for the full proof of Theorem~\ref{thm:theorem-1}.


Theorem~\ref{thm:theorem-1} says that temperature-1 self-assembly in just-barely 3D is no more powerful than temperature-2 self-assembly in 2D. Interestingly, our lower bound for $K^1_{USA}\left(R^3_{k,N}\right)$ matches the lower bound for $K^1_{SA}\left(R^3_{k,N}\right)$ by Furcy, Summers and Wendlandt \cite{FurcySummersWendlandtDNA} but our bound is much more interesting than theirs because ours is roughly the square root of the best known upper bound, to which we turn our attention.

\section{The upper bound}
\label{sec:thin-rectangle-construction}

In this section, we give a construction that outputs aTAS in which a sufficiently large rectangle (of any height $k\geq 2$) $R^3_{k,N}$ uniquely self-assembles, testifying to our upper bound, which is roughly the square of our lower bound. 

\addtocounter{theorem}{0}

\begin{theorem}
\label{thm:main_positive_theorem}
$K^1_{USA}\left( R^3_{k,N} \right) = O\left( N^{\frac{1}{\left \lfloor \frac{k}{2} \right \rfloor}} + \log N \right)$.
\end{theorem}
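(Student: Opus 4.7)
The plan is to exhibit a just-barely 3D, temperature-1 TAS $\mathcal{T}=(T,\sigma,1)$ in which $R^3_{k,N}$ uniquely self-assembles, with $|T|=O\bigl(N^{1/\lfloor k/2 \rfloor}+\log N\bigr)$. Following the roadmap sketched at the end of the introduction, I would set $d=\lfloor k/2 \rfloor$ and $M=\bigl\lceil N^{1/d}\bigr\rceil$, then build a base-$M$ counter of width $d$ whose geometric layout packs two digits into every horizontal slab of $4$ rows and $\Theta(\log M)$ columns (``digit regions''), so that on average each digit costs only $2$ rows of height. The seed $\sigma$ and an initialization gadget along the bottom row will pre-load the counter's starting value so that after exactly enough counter steps, the assembly reaches column $N-1$, and a constant-sized terminator completes the final few columns to produce $R^3_{k,N}$.

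First I would specify the digit region tile set. Each digit region holds a pair of stacked digits written in binary across $\Theta(\log M)$ columns; the ``lower'' digit sits in rows $2i$ and $2i+1$ and the ``upper'' digit sits in rows $2i+2$ and $2i+3$, with carry and value signals piped along the $z=1$ layer so that all actual tile placements form a single simple path in $G^{\mathrm{b}}_{\alpha}$. The tile count within one digit region is $O(M)$ (indexed essentially by a digit value together with a column index in $\Theta(\log M)$), and crucially this same tile set is reused in every digit region and in every counter step, so the total $M$-dependent contribution to $|T|$ is $O(M)=O(N^{1/\lfloor k/2 \rfloor})$. The just-barely 3D layer is used solely to route carries from the lower to the upper digit within a region, and from the top of one counter column to the bottom of the next, without introducing spurious attachment sites.

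Next I would design the increment/carry logic so that after the current value $v$ has been written in a vertical column of $d$ stacked digits, the rightmost column of that digit group spawns the next value $v+1$ in an adjacent vertical column, again in $\Theta(\log M)$ horizontal space. Since temperature is $1$, every placement along this path must be forced by a single glue; I would verify by Observation~\ref{obs:simple} and induction on the assembly sequence that the path of tile placements is simple and deterministic. The initialization bar along row $0$ uses $O(\log N)$ tile types to write an arbitrary $\Theta(\log N)$-bit integer as the starting counter value, correcting the final horizontal offset to hit column $N-1$ exactly; combined with a constant-sized frame that fills the top rows and right-end terminator, this adds only $O(\log N)$ to $|T|$.

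The main obstacle is directedness: at temperature $1$ with no cooperation, any position reachable by the simple growth path that admits two possible tiles immediately creates a second terminal assembly and destroys uniqueness. Avoiding this means that every digit-region tile must have a single input glue on exactly one side and geometry arranged so that competing continuations either collide with already-placed tiles or exit $R^3_{k,N}$, which is precisely what is forbidden. I would handle this by making the just-barely 3D carry wires short and capped, by using distinct glue families per digit position within a region so that the lower-digit wires cannot interfere with the upper-digit wires, and by laying out the boundary between consecutive digit regions so that the ``new'' column can attach only after the ``old'' column is completed. Once directedness and correctness of the counter are established, the tile-complexity bookkeeping is routine and yields the claimed $O\bigl(N^{1/\lfloor k/2 \rfloor}+\log N\bigr)$.
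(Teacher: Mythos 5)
Your high-level plan — pack two digits into each $4$-row digit region, run a base-$M$ counter with $M=\Theta\bigl(N^{1/\lfloor k/2\rfloor}\bigr)$, and hard-code the initial value with $O(\log N)$ tile types — matches the shape of the paper's construction. However, there is a genuine gap in the tile-complexity bookkeeping. You write that ``the tile count within one digit region is $O(M)$ (indexed essentially by a digit value together with a column index in $\Theta(\log M)$),'' but taken literally that indexing gives $\Theta(M\cdot\log M)$ distinct tile types, not $O(M)$. The deeper issue, which your proposal never confronts, is the one the paper explicitly flags as ``the trick'': after a digit has been read, its $\Theta(m)$-bit value must be transported across $\Theta(\log M)$ columns to the corresponding digit region where it will be written, and at temperature $1$ there is no cooperation available to recover the column position geometrically. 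The naive approach of stamping every glue along the transport path with both the carried value and a column offset yields $O(M\log M)$, which does not reduce to $O\bigl(N^{1/\lfloor k/2\rfloor}+\log N\bigr)$; the $\log M$ overhead is $\Theta(\log\log N)$ even in the most favorable thin-rectangle regime, and $\Theta(\log N)$ when $k=O(1)$.

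The paper's resolution — and its principal new technical device — is the pair of gadget families prefixed \texttt{Repeating\_after\_} and \texttt{Stopper\_after\_}. The repeating gadgets carry only the $\Theta(m)$-bit payload (so there are $O(M)$ of them total) and deliberately ``forget'' their position, self-assembling in a straight line indefinitely until they are geometrically blocked by a stopper gadget that the previous increment step placed at a fixed offset inside the digit region. This blocking-based spacer is what decouples the value being carried from the distance it travels, and it is the step your proposal is missing. Without it or an equivalent mechanism, the $M$-dependent contribution to $|T|$ does not stay at $O(M)$. Two secondary differences: you describe the two digits within a region as vertically stacked in disjoint $2$-row slabs, while the paper places them side by side within the same $4$ rows (even digit to the west with south-facing bit bumps, odd digit to the east with north-facing bumps), a layout that the blocking geometry and the read/write path are specifically designed around; and you do not treat the parity cases $k \bmod 4 \in \{1,2,3\}$, which the paper handles with a special $2$-row digit region and $O(1)$ filler rows. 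These are fixable details, but the absent repeating/stopper mechanism is not.
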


Assume that $k > 3$, otherwise the construction is trivial. We use a counter whose base depends on the dimensions of the target rectangle. Let $w = \left\lfloor \frac{k}{2} \right\rfloor$ be the width (number of digits) of the counter. The base of the counter is $M = \left \lceil \left(\frac{N}{31} \right)^{\frac{1}{w}} \right\rceil$. The value of each digit is represented in binary, using a series of $m = \left\lceil \log M \right \rceil$ {\it bit bumps} that protrude from a horizontal line of tiles. Each bit bump geometrically encodes one bit as a corresponding assembly of tiles.

A novel and noteworthy feature of our construction is the organization of the digits of the counter into pairs of digits, where each pair of digits is contained within a rectangular \emph{digit region}. We say that a digit region is a \emph{general} digit region if its dimensions are four rows by $l = 9m + 22$ columns. If $k \mod 4 = 0$, then each general digit region, of which there are $\frac{w}{2}$, contains two digits. We will use a \emph{special} digit region with two rows and $l$ columns to handle the case where $k \mod 4 = 2$. Going forward, we will refer to a general digit region as simply a digit region. Throughout this section, we will assume $k \mod 4 = 0$. 

\begin{figure}[!h]
	\centering
	\includegraphics[width=.6 \textwidth ]{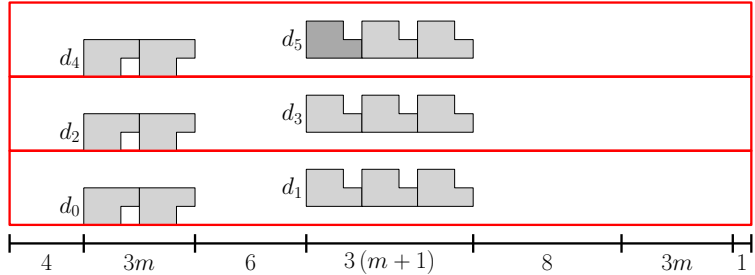}			\caption{\label{fig:Overview_digit_regions_dimensions}  This example shows how the digits that comprise a value of the counter are organized into digit regions. The next value of the counter would have a similar organization, to the east of the current value (see Figure~\ref{fig:Overview_digit_regions}). In this example, since $k = 12$, the value of the counter has six digits, $d_0$ through $d_5$, the latter being the most significant digit and the former the least significant. Even (odd) digits have even (odd) subscripts. In this example, each digit is encoded using two bits. The extra bit for odd digits indicates whether that digit is the most significant digit. Note that the least significant (westernmost) bit of $d_5$ is darkened to indicate that its value is 1, because $d_ 5$ is the most significant digit. Although this is a specific example, the general dimensions are given. Note that we include an ``extra'' $\Theta\left( m \right)$ columns in a general digit region in order to contain the most significant digit of the counter within a ``special'' digit region comprised of only two rows and $l$ columns, in the case where $k \mod 4 = 2$. }
\end{figure}

The westernmost digit within a general digit region is \emph{even}, and its bit bumps face toward the south. The easternmost digit is \emph{odd}, and its bit bumps face toward the north. The westernmost bit of each odd digit encodes whether that digit is the most significant digit that is contained in a general digit region. Figure~\ref{fig:Overview_digit_regions_dimensions} shows a high-level overview of how the digits (that comprise a value) of the counter are organized into digit regions.

A \emph{gadget}, referred to by a name like {\tt Gadget}, is a group of tiles that perform a specific task as they self-assemble. Each gadget, except for the seed-containing gadget, has one \emph{input} glue and at least one \emph{output} glue. For each gadget, the placement of the input and output glues can be inferred from the way the new gadgets bind to the assembly shown in the preceding figure. Glues internal to the gadget are configured to ensure unique self-assembly within the gadget. The strength of every glue is 1. If a glue contains some information $x$, this means that the glue label has a structure that contains the encoding of $x$, according to some fixed, standard encoding scheme.

We initialize the counter to start at a certain initial value $s$, padded out to $w$ digits, with leading 0s. In order to choose the initial value, let $n = \left\lfloor \frac{N}{l} \right\rfloor - 1$ be the number of increment steps. Then, we set $s = M^w - n$ to be the initial value. Then, once $s$ is set, a tile assembly representation of $s$ self-assembles via a series of gadgets. Figure~\ref{fig:Initial_overview_full_initial_value} shows a fully assembled example for $s=333332$.

\begin{figure}[!h]
	\centering
	\includegraphics[width=\textwidth ]{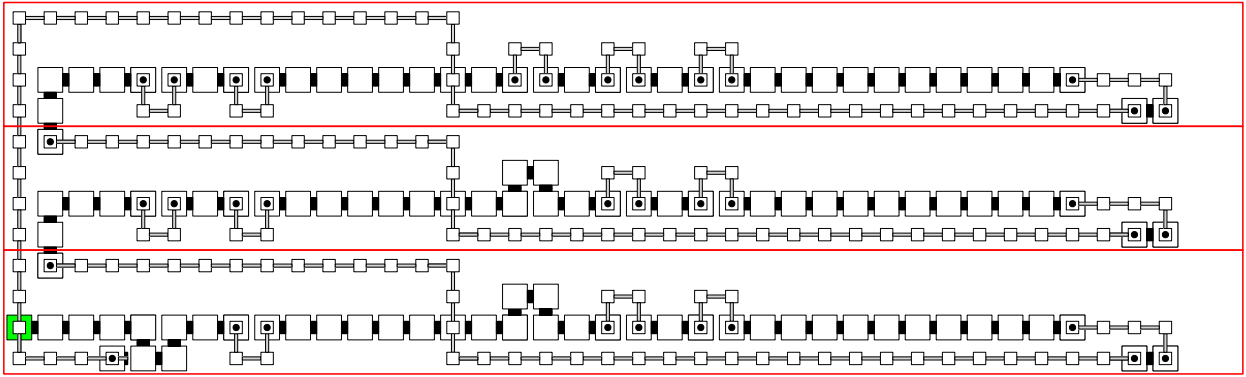}
	\caption{\label{fig:Initial_overview_full_initial_value} A fully assembled example of the initial value of the counter for $s=333332$. Following standard presentation conventions for just-barely 3D tile self-assembly, we use big squares to represent tiles placed in the $z=0$ plane and small squares to represent tiles placed in the $z=1$ plane. A glue between a $z=0$ tile and $z=1$ tile is denoted as a small disk. Glues between $z=0$ tiles are denoted as thick lines. Glues between $z=1$ tiles are denoted as thin lines. The leftmost tile in the $z=0$ plane is the seed tile. }
\end{figure}

After the initial value of the counter self-assembles (see Figures~\ref{fig:Initial_overview_seed_start} through~\ref{fig:Initial_overview_seed_to_next_significant_digit_region} in Section~\ref{sec:appendix-upper-bound-initial-value}), the counter undergoes a series of increment operations. Each increment operation increments the value of the counter by one. The counter counts up to the highest possible value, as determined by its base and the number of digits, increments once more to roll over to 0, and then stops. Finally, one could use $O(N \mod l) = O(l)$ filler tiles to fill in the remaining columns of the rectangle (we actually never explicitly specify this trivial step in our construction). Figure~\ref{fig:Overview_digit_regions} shows a high-level, artificial example of the behavior of the counter in terms of its increment steps.

\begin{figure}[!h]
	\centering
	\includegraphics[width=\textwidth]{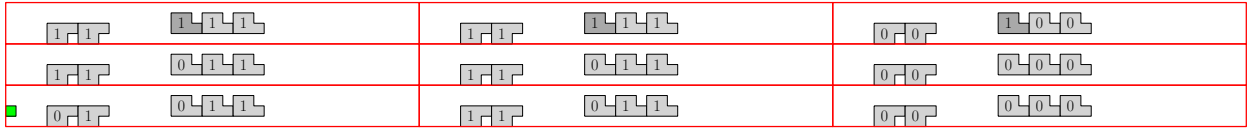}
	\caption{\label{fig:Overview_digit_regions}  In this artificial example, $M=4$ and $s=333332$. The counter increments through $333333$ and rolls over to $000000$ before stopping. Each digit region is outlined. For each digit region (other than a digit region that contains a digit of the final value of the counter), the digit region to its east is the \emph{corresponding} digit region.  }
\end{figure}

The basic idea of the general self-assembly algorithm for incrementing the value of the counter is to read an even digit in the current digit region, write its result in the corresponding digit region, come back to the current digit region and read the odd digit, write its result in the corresponding digit region. Then, do the same thing in the digit region in which the next two most significant digits are contained and stop after the most significant digit was read and the result was written.

The trick is to read each digit from the current digit region and write the respective result in the corresponding digit region without having to hard-code into the glues of the tiles both $\Theta\left( m \right)$ bits (representing the binary representation of the value of the digit that was just read), as well as the relative location along a path whose length is $\Theta(l)$. To accomplish this, we use gadgets whose names are prefixed with {\tt Repeating\_after\_} and {\tt Stopper\_after\_}.

In Figures~\ref{fig:General_overview_read_non_MSB} through~\ref{fig:General_overview_start_digit_region}, we create the gadgets that implement the general self-assembly algorithm that increments the value of the counter. Figures~\ref{fig:General_overview_read_non_MSB} through~\ref{fig:General_overview_start_digit_region} also show an example assembly sequence, where, unless specified otherwise, each figure continues the sequence from the resulting assembly in the previously-numbered figure, unless explicitly stated otherwise.

\begin{figure}[!h]
	\centering
	\includegraphics[width=\textwidth]{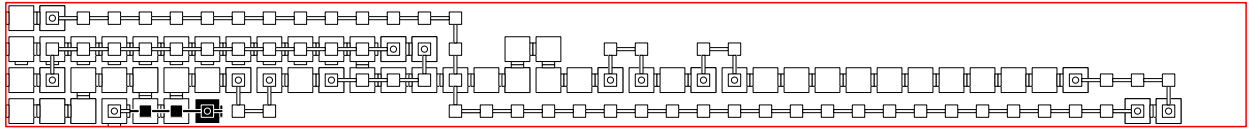}
	\caption{\label{fig:General_overview_read_non_MSB}  A {\tt Read\_non\_MSB} gadget is shown here. A {\tt Read\_non\_MSB} gadget reads the value of a bit that is not the most significant bit of a digit. The depicted gadget is a {\tt Read\_non\_MSB\_0} gadget. 
	For every digit region except the southernmost one, the {\tt Start\_digit\_region} gadget in Figure~\ref{fig:General_overview_start_digit_region} exposes output glues in both the $z=0$ and $z=1$ planes, from which only the correct {\tt Read\_non\_MSB} gadget self-assembles. 
	Otherwise, for the southernmost digit region, a {\tt Reset\_read\_even\_digit} gadget is used (see Figure~\ref{fig:General_overview_return_read_even_digit}).
	The last tile in a {\tt Read\_non\_MSB} gadget guesses the value of the next bit. The input glue of a {\tt Read\_non\_MSB} gadget being created here contains a binary string $x \in \{0, 1\}^{i}$, for $1 \leq i < m$, where the rightmost bit is the bit that this gadget reads.  The two output glues contain $x$ followed by the value of the next bit that will be read, i.e., $x0$ and $x1$. The input glues of the {\tt Read\_non\_MSB} gadgets being created here also contain a bit set to 0, which is the parity of the digit (even or odd) whose bits are being read, which, in this case, is even. This will allow us to use the general {\tt Read\_non\_MSB} gadgets to create specific gadgets to read the bits of both even and odd digits, respectively. In general, we create $O(1)$ {\tt Read\_non\_MSB} gadgets for each $x \in \{0, 1\}^i$, for $1 \leq i < m$, contributing $O(M)$ tile types.   }
\end{figure}

\begin{figure}[!h]
	\centering
	\includegraphics[width=\textwidth]{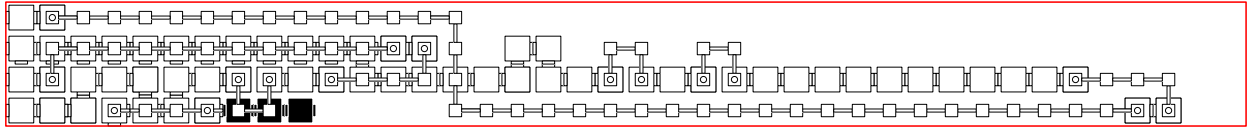}
	\caption{\label{fig:General_overview_read_MSB}  A {\tt Read\_MSB} gadget is shown here.    A {\tt Read\_MSB} gadget reads the value of the most significant bit of a digit. The input glue of a {\tt Read\_MSB} gadget being created here contains a binary string $x \in \{0,1\}^{m}$ and its output glue will also contain $x$. The rightmost bit of $x$ is the most significant bit of the digit that was just read. The input glues of the {\tt Read\_MSB} gadgets being created here also contain a bit set to 0, which is the parity of the digit (even or odd) whose bits were read, which, in this case, is even. This will allow us to use the general {\tt Read\_MSB} gadgets to create specific gadgets to read the MSBs of both even and odd digits, respectively. In general, we create $O(1)$ {\tt Read\_MSB} gadgets for each $x \in \{0, 1\}^{m}$, contributing $O(M)$ tile types.    }
\end{figure}

\begin{figure}[!h]
	\centering
	\includegraphics[width=\textwidth]{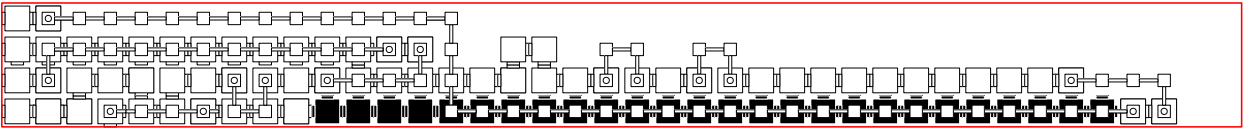}
	\caption{\label{fig:General_overview_repeating_after_even_digit}    A path of {\tt Repeating\_after\_even\_digit} gadgets is shown here. The \texttt{Repeating\_after\_even\_digit} gadget initiates the attachment of another \texttt{Repeating\_after\_even\_digit} tile, continuing the self-assembly of a path of repeating tiles toward and ultimately blocked from continuing by a {\tt Stopper\_after\_odd\_digit} gadget. We use {\tt Repeating\_after\_even\_digit} (and {\tt Repeating\_after\_odd\_digit}) gadgets to propagate $\Theta(m)$ bits along an arbitrarily long path of tiles, without also having the glues of the tiles along the path contain the relative location of each tile within the path. All gadgets whose name starts with {\tt Repeating\_after} essentially ``forget'' where they are and self-assemble in a line, until they cannot. In general, we create $O(1)$ {\tt Repeating\_after\_even\_digit} gadgets for each $x \in \{0,1\}^{m}$, contributing $O(M)$ tile types.   }
\end{figure}

\begin{figure}[!h]
	\centering
	\includegraphics[width=\textwidth]{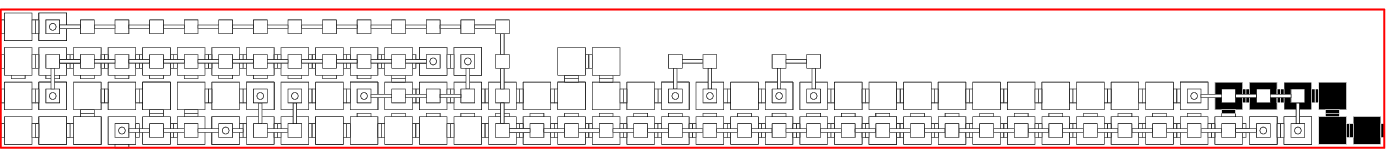} \\
	\includegraphics[width=\textwidth]{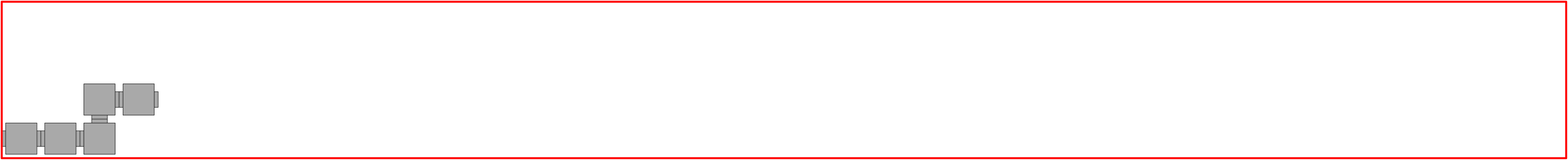}
	\caption{\label{fig:General_overview_at_stopper_after_odd_digit} An {\tt At\_stopper\_after\_odd\_digit} gadget is shown here. It has a fixed size. The {\tt At\_stopper\_after\_odd\_digit} gadget spans two adjacent digit regions. The black portion is in the current digit region and the gray portion is in the corresponding digit region. In general, if a gadget spans the current and corresponding digit regions, then the portion in the latter is depicted in gray and the former in black. The self-assembly of an {\tt At\_stopper\_after\_odd\_digit} gadget is initiated by the north-facing glue of the last {\tt Repeating\_after\_even\_digit} gadget to attach in the path in Figure~\ref{fig:General_overview_repeating_after_even_digit}. If $x \in \{0,1\}^{m}$ and $c \in \{0,1\}$ are contained in the output glue of the latter, where $c=1$ indicates the presence of an arithmetic carry and $c=0$ otherwise, then the output glue of the former contains the $m$-bit binary representation of $\left( x + c \right) \mod M$. If $\left( x + c \right) \mod M = 0$, then $c=1$ is contained in the output glue of the gadgets being created here. In general, we create $O(1)$ {\tt At\_stopper\_after\_odd\_digit} gadgets for each $x\in \{0,1\}^{m}$, contributing $O(M)$ tile types.  }
\end{figure}

\begin{figure}[!h]
	\centering
	\includegraphics[width=\textwidth]{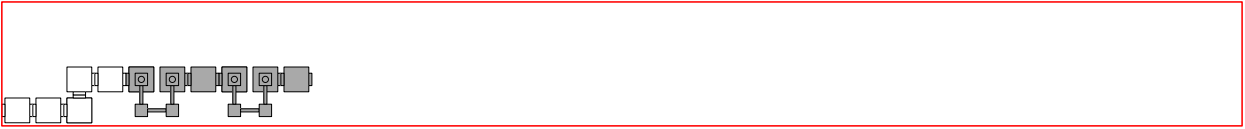}
	\caption{\label{fig:General_overview_write_even_digit}  A series of two {\tt Write\_even\_digit} gadgets is shown here. Intuitively, {\tt Write} gadgets ``undo'' what the {\tt Read} gadgets do. To that end, the input glue of a {\tt Write\_even\_digit} gadget being created here contains a binary string  $bx$, where $b \in \{0,1\}$, and $x \in \{0,1\}^{i}$, for $0 \leq i < m$, its output glue contains $x$, and the corresponding bit bump for $b$ self-assembles. In general, we create $O(1)$ {\tt Write\_even\_digit} gadgets for each $x \in \{0,1\}^i$, for $0 \leq i < m$, contributing $O(M)$ tile types.
	}
\end{figure}

\begin{figure}[!h]
	\centering
	\includegraphics[width=\textwidth]{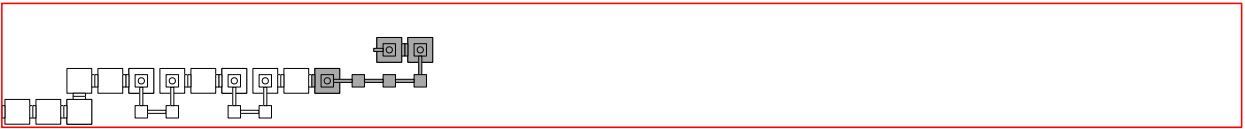}
	\caption{\label{fig:General_overview_stopper_after_even_digit}       A {\tt Stopper\_after\_even\_digit} gadget is shown here.   A \texttt{Stopper\_after\_even\_digit} gadget is used to stop a subsequent path of {\tt Repeating\_after\_odd\_digit} gadgets that will be propagating the value of an odd digit as they self-assemble. In general, we create $O(1)$ {\tt Stopper\_after\_even\_digit} gadgets, contributing $O(1)$ tile types.  }
\end{figure}

\begin{figure}[!h]
	\centering
	\includegraphics[width=\textwidth]{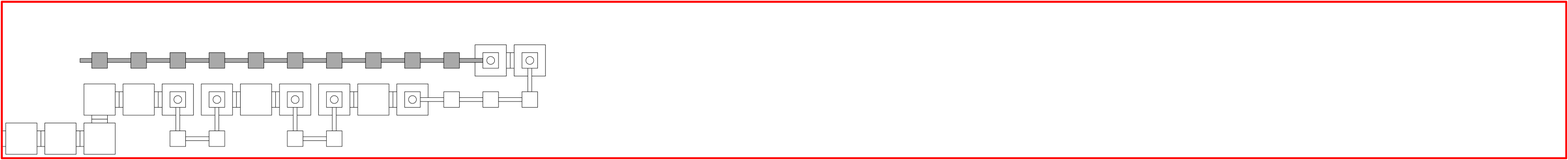}
	\caption{\label{fig:General_overview_single_tile_opposite_0}   A path of \texttt{Single\_tile\_opposite} gadgets is shown here. We create $O(1)$  \texttt{Single\_tile\_opposite} gadgets for each location in the general version of the depicted path of length $3m + 4$, contributing $O( m)$ tile types. }
\end{figure}

\begin{figure}[!h]
	\centering
	\includegraphics[width=\textwidth]{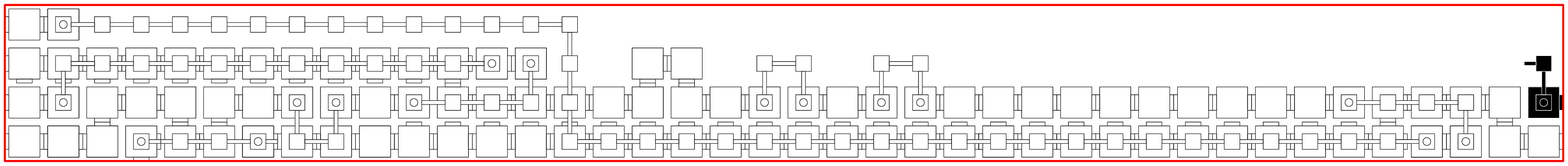}
	\includegraphics[width=\textwidth]{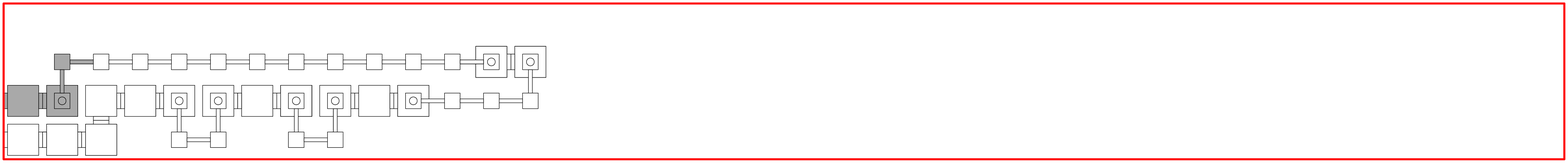}
	\caption{\label{fig:General_overview_between_digit_regions}  A {\tt Between\_digit\_regions} gadget, originating in the corresponding digit region (bottom) and terminating back in the current digit region (top), is shown here. In general, we create $O(1)$ {\tt Between\_digit\_regions}, contributing $O(1)$ tile types.  }
\end{figure}

\begin{figure}[!h]
	\centering
	\includegraphics[width=\textwidth]{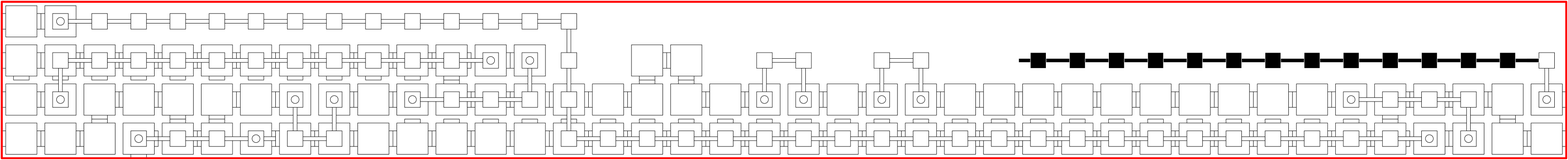}
	\caption{\label{fig:General_overview_single_tile_opposite_1}  A path of \texttt{Single\_tile\_opposite} gadgets is shown here. We create $O(1)$  \texttt{Single\_tile\_opposite} gadgets for each location in the general version of the depicted path of length $3m + 7$, contributing $O(m)$ tile types.           }
\end{figure}

\begin{figure}[!h]
	\centering
	\includegraphics[width=\textwidth]{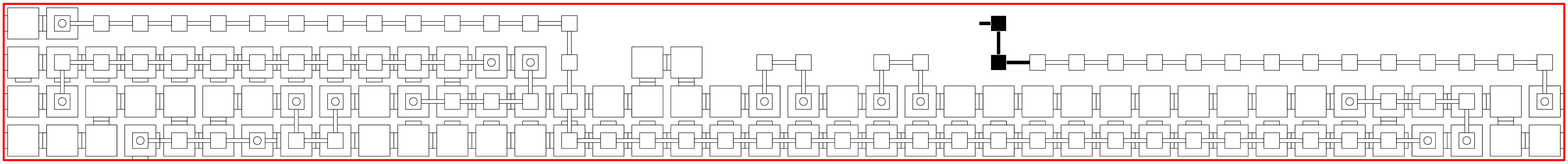}
	\caption{\label{fig:General_overview_at_MSB_of_odd_digit}     An {\tt At\_MSB\_of\_odd\_digit} gadget is shown here. In general, we create $O(1)$ {\tt At\_MSB\_of\_odd\_digit} gadgets, contributing $O(1)$ tile types.   }
\end{figure}

\begin{figure}[!h]
	\centering
	\includegraphics[width=\textwidth]{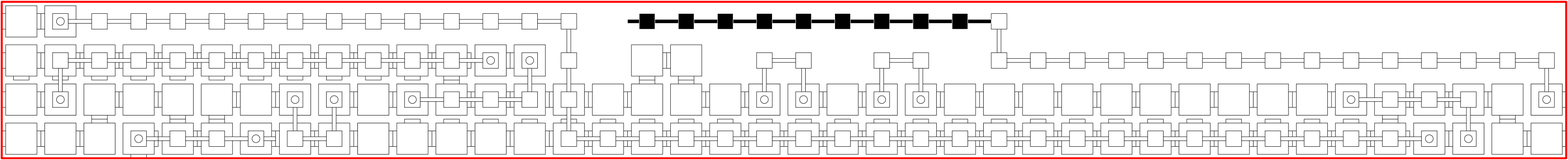}
	\caption{\label{fig:General_overview_single_tile_opposite_2}   A path of \texttt{Single\_tile\_opposite} gadgets is shown here. We create $O(1)$ \texttt{Single\_tile\_opposite} gadgets for each location in the general version of the depicted path of length $3 \left(m + 1\right)$, contributing $O(m)$ tile types.        }
\end{figure}

\begin{figure}[!h]
	\centering
	\includegraphics[width=\textwidth]{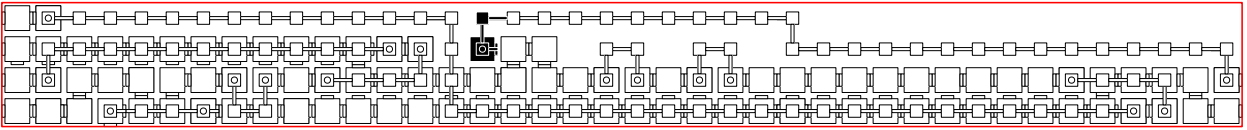}
	\caption{\label{fig:General_overview_start_read_odd_digit}  A {\tt Start\_read\_odd\_digit} gadget is shown here. 
	A {\tt Start\_read\_odd\_digit}, like a {\tt Start\_digit\_region} gadget from Figure~\ref{fig:General_overview_start_digit_region} does for an even digit, ``guesses'' the value of the most significant digit indicator bit in an odd digit by exposing output glues in both the $z=0$ and $z=1$ planes. 
	In general, we create $O(1)$ {\tt Start\_read\_odd\_digit} gadgets, contributing $O(1)$ tile types. 
	The {\tt Read\_} gadgets for the odd digit that attach to this gadget (and are depicted in Figure~\ref{fig:General_overview_repeating_after_odd_digit}) are created in a similar manner and contribute the same number of tile types as the {\tt Read\_} gadgets for the even digits (see Figures~\ref{fig:General_overview_read_non_MSB} and~\ref{fig:General_overview_read_MSB}), except the input glues of the former gadgets contain a bit set to 1, which is the parity of the digit whose bits were being read.}
\end{figure}

\begin{figure}[!h]
	\centering
	\includegraphics[width=\textwidth]{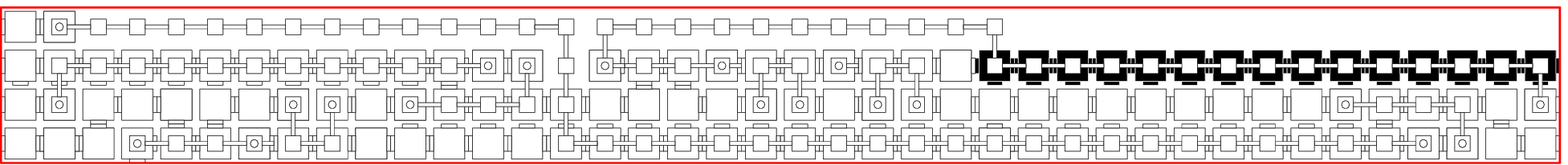}
	\includegraphics[width=\textwidth]{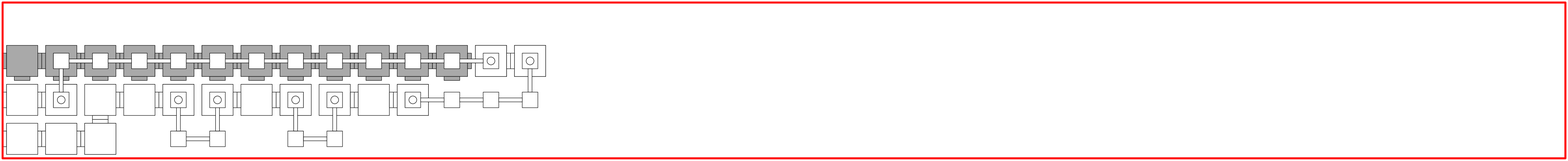}
	\caption{\label{fig:General_overview_repeating_after_odd_digit}       A path of {\tt Repeating\_after\_odd\_digit} gadgets is shown here.  This path is eventually hindered by a {\tt Stopper\_after\_even\_digit} gadget (see Figure~\ref{fig:General_overview_stopper_after_even_digit}). In general, we create $O(1)$ {\tt Repeating\_after\_odd\_digit} gadgets for each $x\in \{0,1\}^{m+1}$, contributing $O(M)$ tile types.  }
\end{figure}

\begin{figure}[!h]
	\centering
	\includegraphics[width=\textwidth]{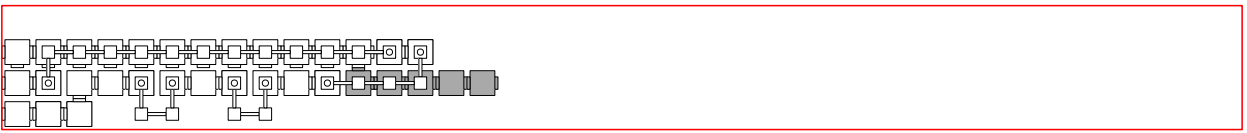}
	\caption{\label{fig:General_overview_at_stopper_after_even_digit}   An {\tt At\_stopper\_after\_even\_digit} gadget is shown here. It has a fixed size. The south-facing glue of the last {\tt Repeating\_after\_odd\_digit} gadget to attach in the path shown in Figure~\ref{fig:General_overview_repeating_after_odd_digit} will initiate the self-assembly of an {\tt At\_stopper\_after\_even\_digit} gadget. If, for $b \in \{0,1\}$ and $x \in \{0,1\}^m$, $bx$ and $c \in \{0,1\}$ are contained in the output glue of the former, where $c=1$ indicates the presence of an arithmetic carry and $c=0$ otherwise, then the output glue of the latter contains the result of prepending $b$ to the $m$-bit binary representation of $\left( x + c \right) \mod M$. If $\left( x + c \right) \mod M = 0$, then $c=1$ is contained in the output glue of the gadgets being created here. In general, we create $O(1)$ {\tt At\_stopper\_after\_even\_digit} gadgets for each $bx \in \{0,1\}^{m + 1}$, contributing $O(M)$ tile types.}
\end{figure}

\begin{figure}[!h]
	\centering
	\includegraphics[width=\textwidth]{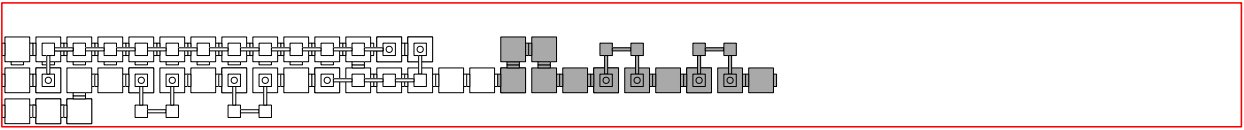}
	\caption{\label{fig:General_overview_write_odd_digit}  A series of three {\tt Write\_odd\_digit} gadgets is shown here.  The input glue of a {\tt Write\_odd\_digit} gadget being created here contains a binary string  $bx$, where $b \in \{0,1\}$, and $x \in \{0,1\}^{i}$, for $0 \leq i \leq m$, its output glue contains $x$, and the corresponding bit bump for $b$ will self-assemble. In general, we create $O(1)$ {\tt Write\_odd\_digit} gadgets for each $x \in \{0,1\}^i$, for $0 \leq i \leq m$, contributing $O(M)$ tile types. }
\end{figure}

\begin{figure}[!h]
	\centering
	\includegraphics[width=\textwidth]{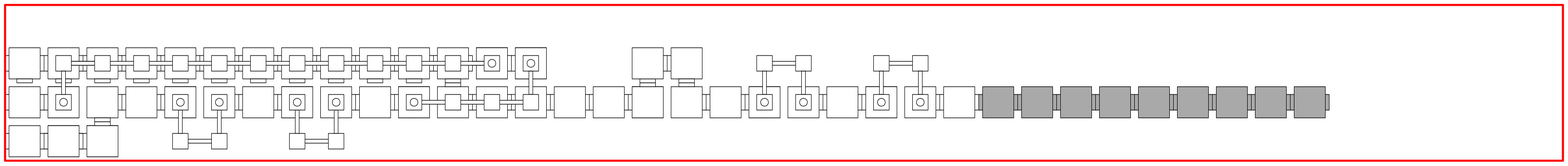}
	\caption{\label{fig:General_overview_single_tile} A path of {\tt Single\_tile} gadgets is shown here. We create $O(1)$ {\tt Single\_tile} gadgets for each location in the general version of the depicted path of length $3m + 3$, contributing $O(m)$ tile types. }
\end{figure}

\begin{figure}[!h]
	\centering
	\includegraphics[width=\textwidth]{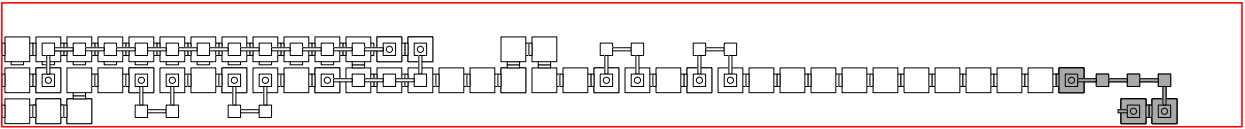}
	\caption{\label{fig:General_overview_stopper_after_odd_digit}  A {\tt Stopper\_after\_odd\_digit} gadget is shown here. 
	The \texttt{Stopper\_after\_odd\_digit} gadget geometrically marks the location of (a constant distance from) the easternmost edge of the \emph{current} digit region in which it self-assembles. It will ultimately block a subsequent path of repeating tiles that will be propagating the value of an even digit from the current digit region to the corresponding adjacent digit region for the next value of the counter.
	In general, we create $O(1)$ {\tt Stopper\_after\_odd\_digit} gadgets, contributing $O(1)$ tile types. }
\end{figure}

\begin{figure}[!h]
	\centering
	\includegraphics[width=\textwidth]{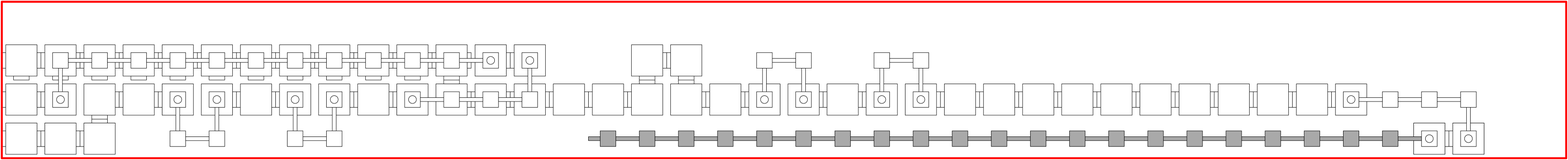}
	\caption{\label{fig:General_overview_single_tile_opposite_3}  A path of {\tt Single\_tile\_opposite} gadgets is shown here. We create $O(1)$ \texttt{Single\_tile\_opposite} gadgets for each location in the general version of the depicted path of length $1+3\left(m + 1\right) + \left(8 + 3m - 4 \right) + 1$, contributing $O(m)$ tile types.           }
\end{figure}

\begin{figure}[!h]
	\centering
	\includegraphics[width=\textwidth]{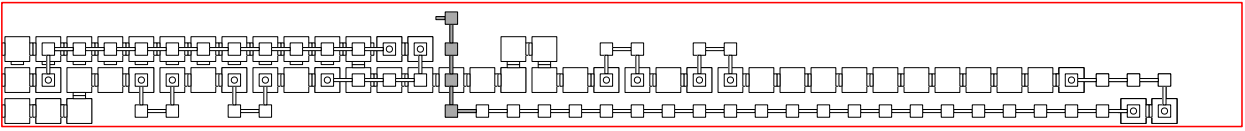}
	\caption{\label{fig:General_overview_between_digits}    A {\tt Between\_digits} gadget is shown here.  In general, we create $O(1)$ {\tt Between\_digits} gadgets, contributing $O(1)$ tile types.  }
\end{figure}

\begin{figure}[!h]
	\centering
	\includegraphics[width=\textwidth]{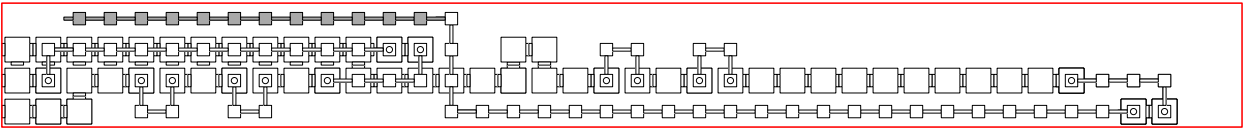}
	\caption{\label{fig:General_overview_single_tile_opposite_4}  A path of {\tt Single\_tile\_opposite} gadgets is shown here. 
	If the odd digit that just self-assembled is the most significant digit and the value of the counter did not roll over to 0, then turn the corner (see the {\tt Reset\_turn\_corner} gadget in Figure~\ref{fig:General_overview_return_turn_corner}) and return to execute another increment operation. If the value of the counter rolled over to 0, which could have been detected by the last {\tt Write\_odd\_digit} gadget created in Figure~\ref{fig:General_overview_write_odd_digit}, then no further increment operations are executed.
	We create $O(1)$  \texttt{Single\_tile\_opposite} gadgets for each location in the general version of the depicted path of length $3 m + 6$, contributing $O(m)$ tile types.         }
\end{figure}

\begin{figure}[!h]
	\centering
	\includegraphics[width=\textwidth]{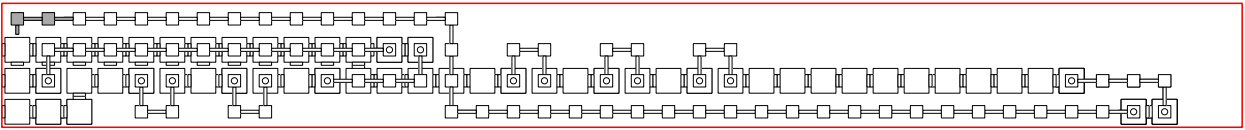}
	\caption{\label{fig:General_overview_return_turn_corner}  A {\tt Reset\_turn\_corner} gadget is shown here. In general, we create one {\tt Reset\_turn\_corner} gadget, contributing $O(1)$ tile types.}
\end{figure}

\begin{figure}[!h]
		\centering
		\includegraphics[width=\textwidth ]{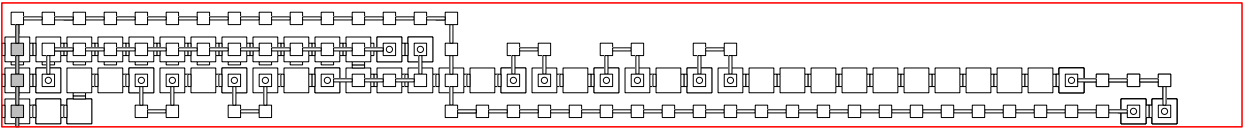}
		\caption{\label{fig:General_overview_return_single_tile} A (beginning portion of a) path of \texttt{Reset\_single\_tile} gadgets is shown here. Note that a {\tt Reset\_single\_tile} gadget is comprised of a single tile whose input glue is always north-facing, and whose output glue is always south-facing. We create one {\tt Reset\_single\_tile} gadget for each location in the general version of the depicted path of length $k - 2$, contributing $O(k)$ tile types.   }
	\end{figure}

\begin{figure}[!h]
		\centering
		\includegraphics[width=\textwidth ]{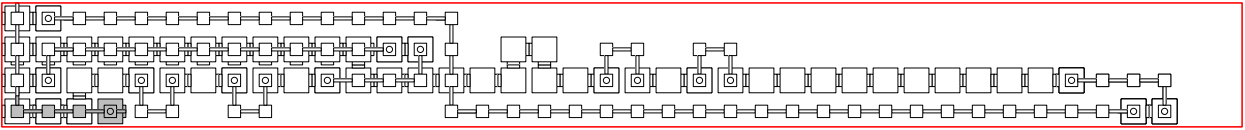}
		\caption{\label{fig:General_overview_return_read_even_digit} The \texttt{Reset\_read\_even\_digit} gadget is shown here. The \texttt{Reset\_read\_even\_digit} gadget initiates the execution of the next increment step. The {\tt Reset\_read\_even\_digit} gadget ``guesses'' the value of the first bit in the least significant digit by exposing output glues in both the $z=0$ and $z=1$ planes. 
		In general, we create one {\tt Reset\_read\_even\_digit} gadget, contributing $O(1)$ tile types.   }
	\end{figure}

\begin{figure}[!h]
	\centering
	\includegraphics[width=\textwidth]{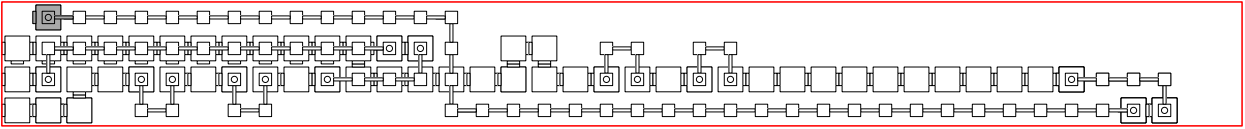}
	\caption{\label{fig:General_overview_z_1_to_z_0}  A {\tt Z1\_to\_z0} gadget is shown here. If the odd digit that just self-assembled is not the most significant digit, then proceed back to the current digit region and ultimately into the digit region in which the next two most significant digits are contained. This gadget transitions the path of {\tt Single\_tile\_opposite} gadgets from Figure~\ref{fig:General_overview_single_tile_opposite_4} from the $z=1$ plane to the $z=0$ plane in order to ensure a clear path in the $z=1$ plane for a subsequent path of {\tt Reset\_single\_tile} gadgets (see Figure~\ref{fig:General_overview_return_single_tile}). In general, we create $O(1)$ {\tt Z1\_to\_z0} gadgets, contributing $O(1)$ tile types.}
\end{figure}

\clearpage

\begin{figure}[!h]
	\centering
	\includegraphics[width=\textwidth]{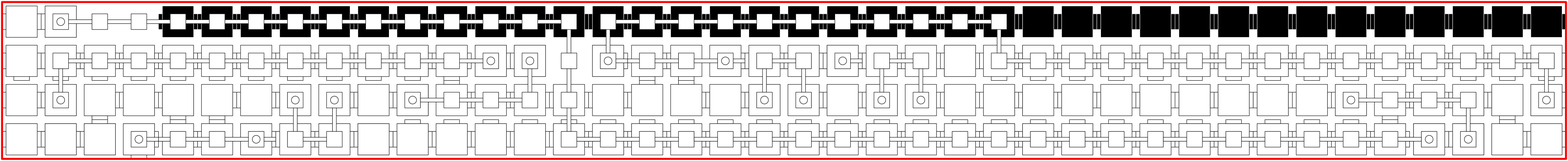}
	\includegraphics[width=\textwidth]{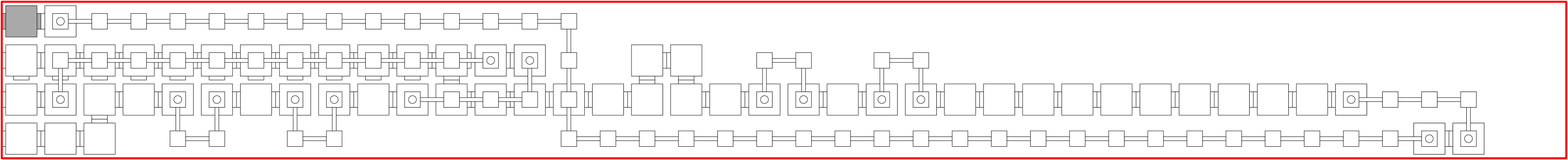}
	\caption{\label{fig:General_overview_single_tile_opposite_5}  A path of {\tt Single\_tile\_opposite} gadgets is shown here.
	A {\tt Start\_digit\_region} gadget (created in Figure~\ref{fig:General_overview_start_digit_region}) binds to the west-facing glue of the last {\tt Single\_tile\_opposite} in the depicted path. We create $O(1)$  \texttt{Single\_tile\_opposite} gadgets for each location in the general version of the depicted path of length $3m + 6 + 3\left( m + 1 \right) + 8 + 3m + 1 + 1$, contributing $O(m)$ tile types.        }
\end{figure}

\begin{figure}[!h]
	\centering
	\includegraphics[width=\textwidth]{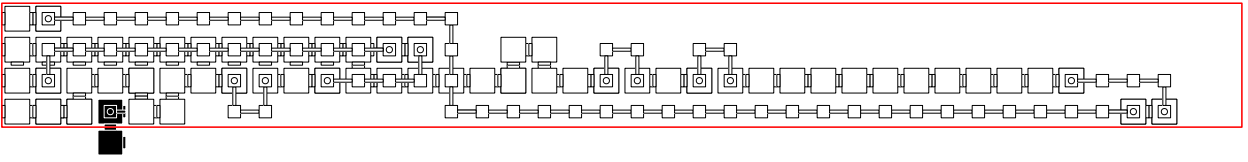}
	\caption{\label{fig:General_overview_start_digit_region} A {\tt Start\_digit\_region} gadget is shown here.   A {\tt Start\_digit\_region} gadget ``guesses'' the value of the least significant bit in a non-least significant even digit by exposing output glues in both the $z=0$ and $z=1$ planes. The value of the least significant bit in the least significant digit is guessed by the {\tt Reset\_read\_even\_digit} gadget created in Figure~\ref{fig:General_overview_return_read_even_digit}. In general, we create $O(1)$ {\tt Start\_digit\_region} gadgets, contributing $O(1)$ tile types.         }
\end{figure}

If $N$ is assumed to be sufficiently large, then the total number of tile types contributed by all the gadgets that were created in Figures~\ref{fig:General_overview_read_non_MSB} through~\ref{fig:General_overview_start_digit_region} when $k \mod 4 = 0$, is $O(M + m + k)$. Moreover, the total number of tile types contributed by all the gadgets that we use to self-assemble the initial value is $O(km)$ (see Figures~\ref{fig:Initial_overview_seed_start} through~\ref{fig:Initial_overview_seed_to_next_significant_digit_region} in Section~\ref{sec:appendix-upper-bound-initial-value}). Note that $k m = k \left\lceil \log \left \lceil \left( \frac{N}{31} \right)^{\frac{1}{w}} \right \rceil \right \rceil = k \left \lceil \log \left(\frac{N}{31}\right)^{\frac{1}{w}} \right \rceil \leq 2k \log N^{\frac{1}{w}} = \frac{2k}{w}\log N = \frac{2k}{\left\lfloor \frac{k}{2}\right \rfloor}\log N \leq \frac{2k}{\frac{k}{3}}\log N = O(\log N)$ and $M = \left\lceil \left( \frac{N}{31} \right)^{\frac{1}{w}}\right\rceil  = \left\lceil \left( \frac{N}{31} \right)^{\frac{1}{\left \lfloor \frac{k}{2} \right \rfloor }}\right\rceil = O\left( N^{\frac{1}{\left \lfloor \frac{k}{2} \right\rfloor}} \right)$. Thus, the size of the tile set output by our construction, when $k \mod 4 = 0$, is $O\left( N^{\frac{1}{\left\lfloor \frac{k}{2} \right \rfloor}} + \log N \right)$. Observe that, if $R^3_{k,N}$ is a thin rectangle, then $k < \frac{\log N}{\log \log N - \log \log \log N} < \frac{\log N}{\log \log N - \frac{1}{3}\log \log N} = \frac{\frac{3}{2}\log N}{\log \log N}$, and we have $\log N = 2^{\log \log N} = \left( N^{\frac{1}{\log N}}\right)^{\log \log N} = N^{\frac{\frac{3}{2}}{\frac{\frac{3}{2} \log N}{\log \log N}}}=O\left( N^{\frac{\frac{3}{2}}{k}} \right) = O\left( N^{\frac{1}{\frac{k}{2}}} \right) \ = \ O\left( N^{\frac{1}{\left \lfloor \frac{k}{2} \right \rfloor}} \right)$.

The case of $k \mod 4 = 2$ can be handled similarly, using a special case digit region in which the most significant digit is represented using two rows and $l$ columns (see Section~\ref{sec:appendix-upper-bound-special}).

The cases where $k \mod 4 \in \{1,3\}$ can be handled by using $O(1)$ tiles that self-assemble into an additional row.

The full details for our construction, in which all cases are handled, can be found in Section~\ref{sec:appendix-upper-bound-full}.

\section{Future work}
\label{sec:conclusion}
In this paper, we gave improved bounds on $K^1_{USA}\left(R^3_{k,N}\right)$. Specifically, our upper bound, $O\left( N^{\frac{1}{\left\lfloor \frac{k}{2} \right\rfloor}} + \log N\right)$, is roughly the square of our lower bound, $\Omega\left( N^{\frac{1}{k}}\right)$. However, questions still remain, upon which we feel future work should be based. Is it the case that either $K^1_{SA}\left(R^3_{k,N}\right)$ or $K^1_{USA}\left(R^3_{k,N}\right)$ is equal to $O\left(N^{\frac{1}{k}} + k\right)$? If not, then what are tight bounds for $K^1_{SA}\left(R^3_{k,N}\right)$ and  $K^1_{USA}\left(R^3_{k,N}\right)$?

\bibliographystyle{amsplain}
\bibliography{tam}

\clearpage

\appendix

\section{Lower bound appendix}
\label{sec:appendix-lower-bound}

This section contains all the proofs related to our lower bound.

%
%
%
%
\begin{lemma*}

\end{lemma*}

\begin{proof}

Let $e$ be a fixed odd number such that $1 \leq e < 2k$. 
Let $w$ be any window such that $M = M_{\vec{\alpha},w}~\upharpoonright~s = \left(\vec{v}_1,g_1\right), \ldots, \left(\vec{v}_{2e},g_{2e}\right)$ is a non-empty restricted glue window submovie. 
We will assume that $e$ represents the number of times that
$\vec{\alpha}$ crosses $w$ (going either away from or toward the seed)
as it follows $s$.
Here, $e$ can be at most $2k-1$ because $w$ is a translation of the $yz$-plane and $s \subseteq R^3_{k,N}$.
\begin{enumerate}
	\item First, we count the number of ways to choose the set $\left\{  \vec{v}_1, \ldots, \vec{v}_{2e} \right\}$, or the set of locations of $M$. Clearly, there are $\binom{4k}{2e}$ ways to choose a subset of $2e$ locations from a set comprised of $4k$ locations. However, for $M = M_{\vec{\alpha}, w}$, since $\vec{\alpha}$ follows a simple path, it suffices to count the number of ways to choose the set $\left\{  \vec{v}_{2i-1} \ \left| \ i=1, \ldots, e \right. \right\}$. This is because, once we choose a location of $M$, the location that is adjacent to the chosen location but on the opposite side of $w$ is determined. There are $\binom{2k}{e}$ ways to choose a subset of $e$ elements from a set comprised of $2k$ elements. 
	
	\item Next, we count the number of ways to choose the set
          $\left\{ \vec{v}_{4i-2} \ \left| \ 1 \leq i \leq
          \frac{e+1}{2} \right. \right\}$. Intuitively, this is the
          set of locations at which $\vec{\alpha}$ finishes crossing
          $w$ going away from the seed. Observe that each chosen
          location $\vec{v}$ of $M$ is either on the far side of $w$
          from the seed or on the near side of $w$ to the seed.
	Furthermore, $\vec{v}$ is paired up with a different chosen location $\vec{v}'$ of $M$ in the sense that $\vec{v}$ is adjacent to but on the opposite side of $w$ from $\vec{v}'$.
	Thus, choose $\frac{e+1}{2}$ locations from the set comprised of the $e$ chosen locations that are on the far side of $w$ from the seed. For each $1 \leq i \leq \frac{e+1}{2}$, there is a unique element in this set that will be assigned to $\vec{v}_{4i-2}$. There are $\binom{e}{(e+1)/2}$ ways to choose $\frac{e+1}{2}$ elements from a set comprised of $e$ elements. 
	
	\item Finally, observe that each location $\vec{x} \in \left\{  \vec{v}_{4i-2} \ \left| \ 1 \leq i \leq \frac{e+1}{2}  \right. \right\}$ is associated with some glue $g_{\vec{x}}$ in $M$. Such an association is represented by the pair $\left(\vec{x}, g_{\vec{x}}\right)$.
	Assume that $\left\{ \left. \vec{v}_{4i-2} + \vec{\Delta} \ \right| \ i = 1, \ldots, \frac{e+1}{2}  \right\} = \left\{ \vec{x}_1, \cdots, \vec{x}_{\frac{e+1}{2}} \right\}$, where the $\vec{x}$ locations are listed in lexicographical order. 
	In this last step, we count the number of ways to choose the sequence $\left( g_{\vec{x}_i} \left| \  i =1,\ldots,\frac{e+1}{2}  \right.  \right)$.
	Since the sequence is comprised of $\frac{e+1}{2}$ glues, and each glue can be assigned in one of $|G|$ possible ways, there are $|G|^{\frac{e+1}{2}}$ ways to choose the sequence.

\end{enumerate}

By the above counting procedure, for all $e = 1, \ldots, 2k - 1$, if $M_{\vec{\alpha},w} \upharpoonright s = \left(\vec{v}_1,g_1\right), \ldots, \left(\vec{v}_{2e},g_{2e}\right)$, then the number of ways to choose the sets $\left\{ \vec{v}_1, \ldots, \vec{v}_{2e} \right\}$ and $\left\{ \left. \vec{v}_{4i-2} \ \right| \ 1 \leq i \leq \frac{e+1}{2}  \right\}$ and the sequence $\left( g_{\vec{x}_i}  \left| \  i =1,\ldots,\frac{e+1}{2}  \right.  \right)$ is less than or equal to $\displaystyle \sum_{\substack{1\leq e < 2k \\ e\ \mathrm{odd}}}\left( \binom{2k}{e} \binom{e}{(e+1)/2}   |G|^{\frac{e+1}{2}} \right)$. Then, we have

\begin{equation*}
  \begin{aligned}
   \sum_{\substack{1\leq e < 2k \\ e\ \mathrm{odd}}}\left( \binom{2k}{e} \binom{e}{(e+1)/2}   |G|^{\frac{e+1}{2}} \right)  & \leq \sum_{\substack{1\leq e < 2k \\ e\ \mathrm{odd}}}\left( \binom{2k}{k} \binom{2k}{k}   |G|^{\frac{2k-1+1}{2}} \right) \\
    & \leq |G|^k \ \sum_{\substack{1\leq e < 2k \\ e\ \mathrm{odd}}} \left(2^{2k} \right)^2 =  |G|^k \ \sum_{\substack{1\leq e < 2k \\ e\ \mathrm{odd}}} 2^{4k} = |G|^k \cdot k \cdot 16^{k} .  \\
   \end{aligned}
\end{equation*}

Thus, if $ m > |G|^k \cdot k \cdot 16^{k}$, then there are two numbers $1 \leq l < l' \leq m$, such that, for $e = e_l = e_{l'}$, $M = M_{\vec{\alpha},w_l} \upharpoonright s = \left(\vec{v}_1,g_1\right), \ldots, \left(\vec{v}_{2e}, g_{2e}\right)$ and $M' = M_{\vec{\alpha},w_{l'}} \upharpoonright s = \left(\vec{v}'_1,g'_1\right),\ldots, \left(\vec{v}'_{2e},g'_{2e}\right)$ are non-empty restricted glue window submovies 
satisfying the following conditions:
\begin{enumerate}
	\item $\left\{ \left. \vec{v}_i + \vec{\Delta}_{l,l'} \ \right| \ 1 \leq i \leq 2e \right\} = \left\{ \left. \vec{v}'_j \ \right| \ 1 \leq j \leq 2e  \right\}$, and
	
	\item $\left\{ \left. \vec{v}_{4i-2} + \vec{\Delta}_{l,l'} \ \right| \ 1 \leq i \leq \frac{e+1}{2}  \right\} = \left\{ \vec{v}'_{4j-2} \ \left| \ 1 \leq j \leq \frac{e+1}{2} \right. \right\}$ and
	
	\item for all $1 \leq i,j \leq \frac{e+1}{2}$, if $\vec{v}'_{4j-2} = \vec{v}_{4i-2}+\vec{\Delta}_{l,l'}$, then $g'_{4j-2} = g_{4i-2}$.
\end{enumerate}

Note that, since $M$ and $M'$ are both restricted to $s$, we have, for all $1 \leq i \leq \frac{e+1}{2}$, $g_{4i-2} = g_{4i-3}$. 
This means that, for all $1 \leq i, j \leq \frac{e+1}{2}$, if $\vec{v}'_{4j-2} = \vec{v}_{4i-2} + \vec{\Delta}_{l,l'}$, then $g'_{4j-2} = g_{4i-3}$, and it follows that $M$ and $M'$ are sufficiently similar.

\end{proof}

%
%
%
%
\begin{lemma*}

\end{lemma*}

\begin{figure}[h!]
\input{lemma-2-algorithm}
\caption{The algorithm for $\vec{\beta}$. Here, the variable ``$k$'' has no relation to the ``$k$'' used in $R^3_{k,N}$.}
\end{figure}

\begin{proof}[Overview of our proof of Lemma~\ref{lem:lemma-2}] Our
  correctness proof for algorithm $\vec{\beta}$ breaks down into sub-proofs
  \#1, \#2, and \#3 that show that all of the tile placement steps
  performed by Loops~1,~2, and~3, respectively, are 
  adjacently correct. The first and third sub-proofs are relatively
  straightforward since each one of Loops~1 and~3 places tiles on only
  one side of $w'$ while mimicking a prefix or suffix of
  $\vec{\alpha}$, respectively. Sub-proof \#2 makes up the bulk of our
  correctness proof because Loop~2 contains two nested loops that
  alternate placing tiles on either side of $w'$. To prove the
  correctness of Loop~2, we will define a 6-part invariant for it, and
  prove, in turn, the initialization, maintenance, and termination
  properties of this invariant. Establishing the initialization
  property will be straightforward. For the maintenance property, we
  will first prove that the first four parts of the invariant still
  hold at the end of Loop~2a (i.e., on Line~11). Second, we will prove
  that the first four parts of the invariant still hold at the end of
  Loop~2b (i.e., on Line~16). Third, we will complete the maintenance
  proof by showing that the last two parts of the invariant also hold
  right after Line~16 is executed. Finally, we will wrap up sub-proof \#2
  with a proof of the termination property of the invariant.

We now define some notation needed to state our Loop~2 invariant. If and when
the algorithm enters  Loop~2, let $m$ be an integer such that $1 \leq m \leq
\frac{e+1}{2}$.
The variable $m$ will count the iterations of Loop~2. 
For $1 \leq l \leq m$, define $j_l$ to be the value of $j$ prior to
iteration $l$ of Loop~2.
Likewise, for $1 \leq l \leq m$, define $j'_l$ to be the value of $j'$ after Line~11 executes during
iteration $l$ of Loop~2.
We say that $j_m$ is the value of $j$ in the algorithm for $\vec{\beta}$ prior to the \emph{current}
iteration of Loop~2.
When it is clear from the context,  we will simply use ``$j$'' in place of ``$j_{m}$'' and ``$j'$'' in place of ``$j'_m$''. We define the following {\bf Loop~2 invariant}:\medskip

\fbox{\begin{minipage}{0.92\linewidth}
Prior to each iteration $m$ of Loop~2:
\begin{enumerate}
  \item \label{invariant-1} all previous tile placement steps executed
    by the algorithm for $\vec{\beta}$ are adjacently correct,
  \item \label{invariant-2} all tiles placed by $\vec{\beta}$ on
    locations on the far side of $w'$ from the seed are placed by tile
    placement steps executed by Loop~2a,
  \item \label{invariant-3} all tiles placed by $\vec{\beta}$ on
    locations on the near side of $w'$ to the seed are placed by tile
    placement steps executed by Loop~1 or Loop~2b,
  \item \label{invariant-4} if $m > 1$, then for all $1 \leq l < m$,
    $j_l \ne j_m$,
  \item \label{invariant-5} the location at which $\vec{\beta}$ last
    placed a tile (say $t$) is $\vec{v}'_{4j_m-3}$, and
  \item \label{invariant-6} the glue of $t$ that touches $w'$ is $g'_{4j_m-3}$.
\end{enumerate}
  \end{minipage}}
\bigskip

\end{proof}

\begin{proof}

  $ $ 

\noindent  \underline{Sub-proof \#1:}
  
Since $\vec{\alpha}$ is a $\mathcal{T}$-assembly sequence that follows
a simple path, the tile placement steps in Loop~1 are adjacently
correct and only place tiles that are on the near side of $w'$ to the
seed.
Note that Loop~1 terminates with $Pos\left( \vec{\alpha}\left[ k
  \right] \right) = \vec{v}'_{2}$.
By the definition of $M'$, $\vec{v}'_2$ is the first location at which
$\vec{\alpha}$ places a tile on the far side of $w'$ from the seed.

\noindent \underline{Sub-proof \#2 - Loop~2 invariant initialization}

Just before the first iteration of Loop~2, $m=1$ and all prior tile
placements have been completed within Loop~1. Parts~\ref{invariant-1}
and~\ref{invariant-3} of the invariant follow directly from Sub-proof
\#1.  Part~\ref{invariant-2} of the invariant is true since no tiles
have been placed yet on the far side of $w'$. Part~\ref{invariant-4}
of the invariant holds since $m=1$. Part~\ref{invariant-5} of the
invariant holds because $j_m = j_1 = 1$, $\vec{v}'_{4j_m-3} =
\vec{v}'_1$, and the location at which $\vec{\beta}$ last placed a
tile $t$ is $\vec{v}'_1$, that is, the location that precedes
$\vec{v}'_2$ in $\vec{\alpha}$. Finally, part~\ref{invariant-6} of the
invariant holds because $t$ is the tile that $\vec{\alpha}$ placed at
$\vec{v}'_1$ and, by the definition of $M'$, the glue of $t$ that
touches $w'$ is $g'_{4j_m-3}=g'_1$.

\noindent \underline{Sub-proof \#2 - Loop~2 invariant maintenance}

On Line~5, if $j$ is such that $\vec{v}'_{4j-2} = \vec{v}_{2e}+\vec{\Delta}$,
then Loop~2 terminates.
So, let $j$ be such that $\vec{v}'_{4j-2} \ne
\vec{v}_{2e}+\vec{\Delta}$ and assume that the Loop~2 invariant holds.
We will first prove (by induction) that parts~\ref{invariant-1}
through~\ref{invariant-4} of the invariant still hold when Loop~2a
terminates.

Within the current iteration of Loop~2, Line~7 sets $k$ to a value
such that $Pos\left( \vec{\alpha}\left[ k \right] \right) =
\vec{v}_{4i-2}$, where $i$ is such that $4i-2$ is the index of
$\vec{v}'_{4j-2}-\vec{\Delta}$ in $M$.
For the base step of the induction, consider the tile placement step
executed in the first iteration of Loop~2a. To establish the
  first part of the invariant, we now prove that this tile placement
  step is adjacently correct. First, we prove that it places a tile
  that binds to the last tile placed by the algorithm.
Intuitively, this tile placement step is where $\vec{\beta}$ finishes
crossing from the near side of $w'$ to the seed over to the far
side. 
Formally, we have:
\begin{eqnarray*} 
\vec{\beta} & = & \vec{\beta} + \left(  \vec{\alpha}\left[ k \right] +\vec{\Delta} \right) \\
                    & =  & \vec{\beta} + \left( \left(   Pos\left( \vec{\alpha}\left[ k \right] \right) + \vec{\Delta} \right) \mapsto Tile\left(  \vec{\alpha}\left[ k \right] \right)\right)  \\
                    & = & \vec{\beta} + \left( \left(    \vec{v}_{4i-2}  + \vec{\Delta} \right) \mapsto Tile\left(  \vec{\alpha}\left[ k \right] \right)\right)  \\
                     & = & \vec{\beta} +   \left( \left( \vec{v}'_{4j-2} - \vec{\Delta} + \vec{\Delta} \right)  \mapsto Tile\left(  \vec{\alpha}\left[ k \right] \right) \right) \\
                     & = & \vec{\beta} +  \left( \vec{v}'_{4j-2}  \mapsto Tile\left(  \vec{\alpha}\left[ k \right] \right) \right),
\end{eqnarray*}
where the second-to-last equality follows from Line~6 in the algorithm
for $\vec{\beta}$. This, together with part~\ref{invariant-5}
  of the invariant, shows that the location of this tile placement
  step is adjacent to, and on the opposite side of $w'$ from,
  $\vec{v}'_{4j-3}$.
We now prove that the tile $t = Tile\left(\vec{\alpha}[k]\right)$ this
step places at $\vec{v}'_{4j-2}$ does bind to the tile $t'$ that the
algorithm just placed at $\vec{v}'_{4j-3}$. By part~\ref{invariant-6}
of the invariant, the glue of $t'$ that touches $w'$ is $g'_{4j-3}$,
which, according to the follow reasoning, must be equal to the glue of
$t$ that touches $w'$.
\begin{itemize}
\item Since $\vec{\alpha}$ follows the simple path $s$ and $M'$ is restricted to $s$, $g'_{4j-3} = g'_{4j-2}$.
\item By part~\ref{lemma-assumption-4} of sufficiently similar, $g'_{4j-2} = g_{4i-3}$.
\item Since $\vec{\alpha}$ follows a simple path and $M$ is restricted to $s$, $g_{4i-3} = g_{4i-2}$.
\item Since $t$ is the type of tile that $\vec{\alpha}$ placed at
  $\vec{v}_{4i-2}$ and the glue of $t$ that touches $w$ is $g_{4i-2}$,
  the previous chain of equalities imply that the glue of $t'$
  that touches $w'$ is equal to the glue of $t$ that touches $w'$.
\end{itemize}

%
Having shown that $t$ binds to $t'$, we now prove that $\vec{\beta}$
has not already placed a tile at $Pos\left(\vec{\alpha}[k]\right) +
\vec{\Delta} = \vec{v}'_{4j-2}$ before the tile placement step in the
first iteration of Loop~2a is executed.

According to part~\ref{invariant-2} of the invariant, all locations on the
far side of $w'$ from the seed at which tiles are placed by
$\vec{\beta}$ are filled by tile placement steps executed by Loop~2a.
Since $\vec{v}'_{4j-2}$ is on the far side of $w'$ from the seed, we
only need to consider tile placement steps that place tiles at
locations that are on the far side of $w'$ from the seed.
Since we are assuming that $\vec{\beta} = \vec{\beta} +
\left( \vec{\alpha}\left[ k \right]+ \vec{\Delta} \right)$ is the tile
placement step executed in the first iteration of Loop~2a, we know
that any already completed tile placement step $\vec{\beta} = \vec{\beta} +
\left( \vec{\alpha}\left[ k' \right]+ \vec{\Delta} \right)$, for $0 \leq k'  < \left| \vec{\alpha} \right|$, is
executed in some iteration of Loop~2a but in a
past iteration of Loop~2.
Define $index_{\vec{\alpha}}\left( \vec{x} \right)$ to be the
value of $n$ such that $Pos\left(\vec{\alpha}[n]\right) = \vec{x}$.
Define the rule $f(j) = i$ such that $4i - 2$ is the index of
$\vec{v}'_{4j-2}-\vec{\Delta}$ in $M$.
Note that $f$ is a valid function because, by part~\ref{lemma-assumption-3} of sufficiently similar, we have $\left\{ \vec{v}_{4i-2} + \vec{\Delta} \ \left| \ 1
\leq i \leq \frac{e+1}{2} \right. \right\} = \left\{ \vec{v}'_{4j-2}
\ \left| \ 1 \leq j \leq \frac{e+1}{2} \right. \right\}$.
Moreover, $f$ is injective, because, intuitively, two different
locations in $M'$ cannot translate with the same $\vec{\Delta}$ to the
same location in $M$.
Formally, assume that $f(a) = f(b)$ and let $c$ be such that $4c-2$ is the index of $\vec{v}'_{4a-2} - \vec{\Delta}$ in $M$ and let $d$ be such that $4d-2$ is the index of $\vec{v}'_{4b-2} - \vec{\Delta}$ in $M$. 
Since we are assuming $f(a) = f(b)$, then we have $c=d$. 
This means that $4c-2 = 4d-2$ is the index of $\vec{v}'_{4a-2} - \vec{\Delta}$ in $M$. 
Likewise, $4c-2 = 4d-2$ is the index of $\vec{v}'_{4b-2} - \vec{\Delta}$ in $M$. 
Then we have $\vec{v}'_{4a-2} = \vec{v}_{4c-2} + \vec{\Delta}$, and $\vec{v}_{4c-2} + \vec{\Delta} = \vec{v}'_{4b-2}$.
In other words, we have $\vec{v}'_{4a-2} = \vec{v}'_{4b-2}$, which implies that $a=b$ and it follows that $f$ is injective.
For all $1 \leq l \leq m$, define $i_l$ to be the value of $i$
computed in Line~6.
In other words, $i_l = f\left(j_l\right)$ and $i_m$ is the value of $i$ computed in Line~6 during
the current iteration of Loop~2.
Observe that $k$ (on Line~7) satisfies

\begin{equation}\label{loop2a-inequality-on-k}
  index_{\vec{\alpha}}\left(\vec{v}_{4i_{m}-2}\right) \leq k < index_{\vec{\alpha}}\left( \vec{v}_{4i_{m}} \right)
\end{equation}

\noindent because $\vec{\beta} = \vec{\beta} +
\left( \vec{\alpha}\left[ k \right]+ \vec{\Delta} \right)$ is the tile placement step executed in the first iteration of Loop~2a, and, for some $1 \leq l < m$, $k'$ satisfies

\begin{equation}\label{loop2a-inequality-on-kprime}
  index_{\vec{\alpha}}\left(\vec{v}_{4i_{l}-2}\right) \leq k' < index_{\vec{\alpha}}\left( \vec{v}_{4i_{l}} \right).
\end{equation}

\noindent because $\vec{\beta} = \vec{\beta} +
\left( \vec{\alpha}\left[ k' \right]+ \vec{\Delta} \right)$ is some tile placement step executed in Loop~2a but in a past iteration of Loop~2.
In fact, in the first iteration of Loop~2a, $k = index_{\vec{\alpha}}\left(\vec{v}_{4i_m-2}\right) < index_{\vec{\alpha}}\left(\vec{v}_{4i_m}\right)$. 
By part~\ref{lemma-assumption-4} of the invariant, for all $1 \leq l < m$, $j_m \ne j_l$. 
Since $f$ is injective, it follows that, for all $1 \leq l < m$, $i_m = f\left( j_m\right) \ne f\left( j_l \right) = i_l$.
Then we have three cases to consider. 
Case 1, where $\vec{v}_{4i_m} = \vec{v}_{4i_l-2}$, is impossible,
since these two locations are on opposite sides of $w$.
In case 2, where $index_{\vec{\alpha}}(\vec{v}_{4i_m}) <
index_{\vec{\alpha}}(\vec{v}_{4i_l-2})$, we have:

\centerline{$index_{\vec{\alpha}}\left(\vec{v}_{4i_m-2}\right) \leq k < index_{\vec{\alpha}}\left( \vec{v}_{4i_m} \right) < index_{\vec{\alpha}}\left(\vec{v}_{4i_{l}-2}\right) \leq k' < index_{\vec{\alpha}}\left( \vec{v}_{4i_{l}} \right)$.}

\noindent Finally, in case 3, where $index_{\vec{\alpha}}(\vec{v}_{4i_m}) > index_{\vec{\alpha}}(\vec{v}_{4i_l-2})$, we have:	

\centerline{$index_{\vec{\alpha}}\left(\vec{v}_{4i_{l}-2}\right) \leq k' < index_{\vec{\alpha}}\left( \vec{v}_{4i_{l}} \right) < index_{\vec{\alpha}}\left(\vec{v}_{4i_m-2}\right) \leq k < index_{\vec{\alpha}}\left( \vec{v}_{4i_m} \right)$.}

\noindent In all possible cases, $k \ne k'$. 
Thus, since $\vec{\alpha}$ follows a simple path, $Pos\left( \vec{\alpha}[k] \right)
\ne Pos\left( \vec{\alpha}[k'] \right)$, which implies $Pos\left(
\vec{\alpha}[k] \right) + \vec{\Delta} \ne Pos\left( \vec{\alpha}[k']
\right) + \vec{\Delta}$. 
Therefore, $Pos\left(\vec{\alpha}[k]\right) + \vec{\Delta}$ is empty prior to the execution of $\vec{\beta} = \vec{\beta} +
\left( \vec{\alpha}\left[ k \right]+ \vec{\Delta} \right)$, i.e., no previous tile placement step placed a tile at that location before the first iteration of Loop~2a.
This means that the tile placement step $\vec{\beta} = \vec{\beta} +
\left( \vec{\alpha}\left[ k \right] + \vec{\Delta} \right)$ is
adjacently correct.
This concludes the proof of correctness for the first iteration of
Loop~2a (base step).

We now show (inductive step) that the rest of the tile placement steps
executed in Loop~2a within the current iteration of Loop~2 are adjacently
correct.
Let $\vec{\beta} = \vec{\beta} + \left( \vec{\alpha}[k] +
\vec{\Delta} \right)$ be a tile placement step executed in some (but
not the first) iteration of Loop~2a and assume that all
tile placement steps executed in past iterations of Loop~2a are
adjacently correct and place tiles at locations that are on the far side of $w'$
from the seed (inductive hypothesis).
In particular, assume that the tile placement step $\vec{\beta} =
\vec{\beta} + \left( \vec{\alpha}[k-1] + \vec{\Delta} \right)$
executed in Loop~2a, for the current iteration
of Loop~2, is adjacently correct and places a tile at a
location that is on the far side of $w'$ from the seed.
Since $\vec{\alpha}$ follows a simple path,
$Pos\left(\vec{\alpha}[k]\right)$ is adjacent to
$Pos\left(\vec{\alpha}[k-1]\right)$ and the configuration consisting
of a tile of type $Tile\left(\vec{\alpha}[k]\right)$ placed at
$Pos\left(\vec{\alpha}[k]\right)$ and a tile of type
$Tile\left(\vec{\alpha}[k-1]\right)$ placed at
$Pos\left(\vec{\alpha}[k-1]\right)$ is stable.
This means that $Pos\left(\vec{\alpha}[k]\right) + \vec{\Delta}$ is
adjacent to $Pos\left(\vec{\alpha}[k-1]\right) + \vec{\Delta}$ and the
configuration consisting of a tile of type
$Tile\left(\vec{\alpha}[k]\right)$ placed at
$Pos\left(\vec{\alpha}[k]\right) + \vec{\Delta}$ and a tile of type
$Tile\left(\vec{\alpha}[k-1]\right)$ placed at
$Pos\left(\vec{\alpha}[k-1]\right) + \vec{\Delta}$ is stable, thus proving part~\ref{adjacent-valid-binding} of adjacently correct.
Now, when proving part~\ref{adjacent-valid-empty-location}, two cases arise.
The first case is where $\vec{\beta} = \vec{\beta} + \left( \vec{\alpha}\left[ k' \right] +
\vec{\Delta} \right)$ is executed in a past iteration of Loop~2a in the
current iteration of Loop~2.
Here, we have $k \ne k'$ because, within Loop~2a, we are merely
translating a segment of $\vec{\alpha}$, which follows a simple path.
The second case is where $\vec{\beta} = \vec{\beta} + \left( \vec{\alpha}\left[
  k' \right] + \vec{\Delta} \right)$ is executed in a past iteration
of Loop~2.
Here, using reasoning that is similar to the one we used to establish
the correctness of the first iteration of Loop~2a based on
inequalities~(\ref{loop2a-inequality-on-k}) and~(\ref{loop2a-inequality-on-kprime}) above, we have $k \ne k'$.
In both cases, $k \ne k'$ implies $Pos\left( \vec{\alpha}[k] \right)
\ne Pos\left( \vec{\alpha}[k'] \right)$, which means that the location
of the current tile placement step in Loop~2a is different from the
location of any previous tile placement step that was executed in
Loop~2a.  
It follows that $Pos\left( \vec{\alpha}[k] \right) +
\vec{\Delta} \ne Pos\left( \vec{\alpha}[k'] \right) + \vec{\Delta}$.
This proves part~\ref{adjacent-valid-empty-location}, and therefore,
the tile placement step
$\vec{\beta} = \vec{\beta} + \left( \vec{\alpha}\left[ k \right] +
\vec{\Delta} \right)$ is adjacently correct. 
This concludes our proof that
part~\ref{invariant-1} of the invariant holds at the end of Loop~2a.

Since Loop~2a mimics the portion of $\vec{\alpha}$
between (and including) the points $\vec{v}_{4i_m-2}$ and
$\vec{v}_{4i_m-1}$ which, by definition of $M$, is on the far side of $w$ from
the seed, it follows that $Pos\left(\vec{\alpha}[k]\right)$ is on the
far side of $w$ from the seed during every iteration of Loop~2a.
This means that $Pos\left(\vec{\alpha}[k]\right) + \vec{\Delta}$ is on
the far side of $w'$ from the seed during every iteration of Loop~2a
and thus part~\ref{invariant-2} of the invariant holds at the end of
Loop~2a. 
For the same reason, part~\ref{invariant-3} of the invariant
also holds at that point. 
Finally, part~\ref{invariant-4} of the
invariant trivially holds since Loop~2a does not update $j$.
This concludes our proof that the first four parts of the invariant
hold when Loop~2a terminates. 
We will now prove that these four parts still
hold when Loop~2b terminates.

Loop~2b ``picks up'' where Loop~2a ``left off''.
Note that  Loop~2a terminates with $Pos\left(\vec{\alpha}[k]\right) =
\vec{v}_{4i_m}$, with the last tile being placed at
$\vec{v}_{4i_m-1}+\vec{\Delta}$.
Define the rule $g(i) = j$ such that $4j$ is the index of
$\vec{v}_{4i}+\vec{\Delta}$ in $M'$.
Note that $g$, like $f$, is a valid function because, by parts~\ref{lemma-assumption-2}~\&~\ref{lemma-assumption-3} of sufficiently similar, we have $\left\{ \vec{v}_{4i} + \vec{\Delta}
\ \left| \ 1 \leq i \leq \frac{e-1}{2} \right. \right\} = \left\{
\vec{v}'_{4j} \ \left| \ 1 \leq j \leq \frac{e-1}{2}
\right. \right\}$.
Similarly, $g$, like $f$, is injective.
Line~11 sets the value of $j'_m$ to be such that $4j'_m$ is the index of
$\vec{v}_{4i_m}+\vec{\Delta}$ in $M'$.
In other words, Line~11 computes $j'_m = g\left( i_m \right)$ and
Line~12 sets the value of $k$ such that $Pos\left( \vec{\alpha}\left[
  k \right] \right) = \vec{v}'_{4j'_m}$.
Intuitively, $\vec{v}_{4i_m}+\vec{\Delta} = \vec{v}'_{4j'_m}$ is the
location at which $\vec{\beta}$ finishes crossing from the far side of
$w'$ from the seed back to the near side.
Recall that Loop~2a ``left off'' by placing a
tile (in its last iteration) at the location $\vec{v}_{4i_m-1} +
\vec{\Delta} = \vec{v}'_{4j'_m-1}$.

Now, for the base step of the induction we use to prove that
part~\ref{invariant-1} of the invariant holds after Loop~2b, consider the tile
placement step executed in the first iteration of Loop~2b.  
Formally, we have:
\begin{eqnarray*}
	\vec{\beta} & = & \vec{\beta} + \vec{\alpha}\left[ k \right] \\
		            & = & \vec{\beta} + \left( Pos\left( \vec{\alpha}\left[ k \right]  \right) \mapsto Tile\left( \vec{\alpha}\left[ k \right] \right) \right) \\
		            & = & \vec{\beta} + \left( \vec{v}'_{4j'_m} \mapsto Tile\left(\vec{\alpha}\left[ k \right] \right) \right).
\end{eqnarray*}
Thus, the tile placement step executed in the first iteration of
Loop~2b will place a tile at $\vec{v}'_{4j'_m}$,
which, by the definition of $M'$, is adjacent to but on the opposite
side of $w'$ from $\vec{v}'_{4j'_m-1}$.
Since $\mathcal{T}$ is directed, the type of tile that $\vec{\beta}$
places at $\vec{v}'_{4j'_m-1}$ during the final iteration of Loop~2a
 must be the same as the type of the tile that
$\vec{\alpha}$ places at $\vec{v}'_{4j'_m-1}$.
This is the only place in the proof where we use the fact that
$\mathcal{T}$ is directed.
By the definition of the tile placement step executed in the first
iteration of Loop~2b, the type of tile that
$\vec{\beta}$ places at $\vec{v}'_{4j'_m}$ is the same as the type of tile that
$\vec{\alpha}$ places at $\vec{v}'_{4j'_m}$.
This means that the glue of the tile that $\vec{\beta}$ places at
$\vec{v}'_{4j'_m}$ and that touches $w'$ is equal to the glue of the tile
that $\vec{\beta}$ places at $\vec{v}'_{4j'_m-1}$ and that touches $w'$.
This proves part~\ref{adjacent-valid-binding} of adjacently correct for $\vec{\beta} = \vec{\beta} + \vec{\alpha}[k]$. 
So, in order to show that $\vec{\beta} = \vec{\beta} + \vec{\alpha}[k]
$ is adjacently correct, it suffices to show that $\vec{\beta}$ has not already
placed a tile at $Pos\left(\vec{\alpha}[k]\right)$, i.e., part~\ref{adjacent-valid-empty-location} of adjacently correct.

By part~\ref{invariant-3} of the invariant, all tiles placed by
$\vec{\beta}$ on the near side of $w'$ to the seed result from tile
placement steps belonging to either Loop~1 or Loop~2b.
Since $\vec{v}'_{4j'_m}$ is on the near side of $w'$ to the seed, we
only need to consider tile placement steps in the algorithm for
$\vec{\beta}$ that place tiles at locations that are on the near side
of $w'$ to the seed.
Since we are assuming that $\vec{\beta} = \vec{\beta} +
\vec{\alpha}\left[ k \right] $ is the tile placement step executed in
the first iteration of Loop~2b, we must consider two cases for any
already completed tile placement step $\vec{\beta} = \vec{\beta} +
\vec{\alpha}\left[ k' \right]$ with $0
\leq k' < \left| \vec{\alpha} \right|$.
In the case where $\vec{\beta} = \vec{\beta} + \vec{\alpha}\left[ k'
  \right]$ is executed in some iteration of Loop~1 (before the first
iteration of Loop~2), we have $k' < index_{\vec{\alpha}}\left(
\vec{v}'_{2} \right)$ and
$index_{\vec{\alpha}}\left(\vec{v}'_{4}\right) \leq k$.
In this case, $index_{\vec{\alpha}}\left( \vec{v}'_{2} \right) <
index_{\vec{\alpha}}\left(\vec{v}'_{4}\right)$ implies $k' \ne k$.
In the second case, namely when $\vec{\beta} = \vec{\beta} +
\vec{\alpha}\left[ k' \right]$ is executed in some iteration of
Loop~2b but in a past iteration of Loop~2, $k$ satisfies

\begin{equation}\label{loop2b-inequality-on-k}
index_{\vec{\alpha}}\left(\vec{v}'_{4
  j'_{m}}\right) \leq k <
index_{\vec{\alpha}}\left(\vec{v}'_{4j'_{m}+2}\right)
\end{equation}

\noindent because $\vec{\beta} = \vec{\beta} +
\vec{\alpha}\left[ k \right]$ is the tile placement step executed in the first iteration of Loop~2b and, for some $1 \leq l < m$, $k'$ satisfies

\begin{equation}\label{loop2b-inequality-on-kprime}
index_{\vec{\alpha}}\left(\vec{v}'_{4j'_{l}}\right)
 \leq k' < index_{\vec{\alpha}}\left(\vec{v}'_{4j'_{l}+2}\right).
\end{equation}
In order to show that $k \ne k'$, since $\vec{\alpha}$ follows a simple path, it suffices to show that, for all $1
\leq l < m$, $j'_{m} \ne j'_l$.
By part~\ref{invariant-4} of the invariant, for all $1 \leq l < m$, $j_m \ne j_l$. 
By definition, for all $1 \leq l \leq m$, $i_l = f\left( j_l \right)$.
Since $f$ is injective, we have, for all $1 \leq l < m$, $i_m \ne i_l$. 
Since $g$ is injective, we have, for all $1 \leq l < m$, $g\left( i_m
\right) \ne g\left( i_l \right)$.
By definition, for all $1 \leq l \leq m$, $j'_l = g\left( i_l \right)$. 
Then, we have, for all $1 \leq l < m$, $j'_{m} = g\left(
i_{m} \right) \ne g\left( i_l \right) = j'_l$.
So, in all cases, we have $k\ne k'$, which implies that $Pos\left(
\vec{\alpha}[k] \right) \ne Pos\left( \vec{\alpha}[k'] \right)$.
This means that part~\ref{adjacent-valid-empty-location} is satisfied and therefore the tile placement step $\vec{\beta} = \vec{\beta} +
\vec{\alpha}\left[ k \right]$ is adjacently correct.
This concludes the proof of correctness for the first iteration of
Loop~2b (base step).

We now show (inductive step) that the rest of the tile placement steps
executed in Loop~2b and within the current iteration of Loop~2 are adjacently correct.
So, let $\vec{\beta} = \vec{\beta} + \vec{\alpha}[k]$ be a tile
placement step executed in some (but not the first) iteration of 
Loop~2b and assume all tile placement steps
executed in past iterations of Loop~2b are adjacently correct and place tiles at locations that are on the near side of $w'$
to the seed.
In particular, the tile placement step $\vec{\beta} = \vec{\beta} +
\vec{\alpha}[k-1] $ executed in Loop~2b, for
the current iteration of Loop~2, is adjacently correct
and places a tile at a location that is on the near side of $w'$ to
the seed.
Since $\vec{\alpha}$ follows a simple path,
$Pos\left(\vec{\alpha}[k]\right)$ is adjacent to
$Pos\left(\vec{\alpha}[k-1]\right)$ and the configuration consisting
of a tile of type $Tile\left(\vec{\alpha}[k]\right)$ placed at
$Pos\left(\vec{\alpha}[k]\right)$ and a tile of type
$Tile\left(\vec{\alpha}[k-1]\right)$ placed at
$Pos\left(\vec{\alpha}[k-1]\right)$ is stable, thus proving part~\ref{adjacent-valid-binding} of adjacently correct.
Here, using reasoning that is similar to the one we used to establish
the correctness of the first iteration of Loop~2b based on
inequalities~(\ref{loop2b-inequality-on-k}) and~(\ref{loop2b-inequality-on-kprime}) above, we have $k \ne k'$.
This means that $Pos\left( \vec{\alpha}[k] \right) \ne Pos\left(
\vec{\alpha}[k'] \right)$, thereby satisfying
part~\ref{adjacent-valid-empty-location}.
It follows that the tile placement step $\vec{\beta} = \vec{\beta} +
\vec{\alpha}\left[ k \right]$ is adjacently correct.
This concludes our proof that part~\ref{invariant-1} of the invariant
holds when Loop~2b terminates.

Since $\vec{\alpha}$ follows a simple path and the portion of
$\vec{\alpha}$ between (and including) $\vec{v}'_{4j'_m}$ and
$\vec{v}'_{4j'_m+1}$ is, by definition of $M'$, on the near side of
$w'$ to the seed, $Pos\left(\vec{\alpha}[k]\right)$ is on the near
side of $w'$ to the seed during every iteration of Loop~2b.
This means that all tile placement steps executed by Loop~2b only
place tiles at locations on the near side of $w'$ to the seed. 
This
concludes our proof that parts~\ref{invariant-2} and~\ref{invariant-3}
of the invariant hold when Loop~2b terminates.

We will now show that, for all $1
\leq l \leq m$, $j_{m+1} \ne j_l$.
We already showed above that, for all $1
\leq l < m$, $j'_{m} \ne j'_l$.
Then we have, for all $1 \leq l < m$, $j'_{m} + 1 \ne j'_l + 1$.
Since Line~16 computes the value of $j$ for the next iteration of
Loop~2 to be the value of $j' + 1$, we infer, for all $1 \leq l < m$, $j_{m+1}
\ne j_{l+1}$, or, equivalently, for all $2 \leq l \leq m$, $j_{m+1} \ne
j_l$.
Since $j_1 = 1$ and $j'$ (computed on Line~11) cannot be equal to 0,
we have $j_{m+1} \ne j_1$.
It follows that, for all $1 \leq l \leq m$, $j_{m+1} \ne j_l$.
This concludes our proof that part~\ref{invariant-4} of the invariant
holds when Loop~2b terminates.

Note that Line~11 computes $j'_m = g\left( i_m \right)$ and Line~12
sets $k$ to a value such that $Pos\left(\vec{\alpha}[k]\right) =
\vec{v}'_{4j'_m}$.
Subsequently, Loop~2b terminates with $Pos\left(\vec{\alpha}[k]\right) =
\vec{v}'_{4j'_{m}+2}$.
This means that the location of the tile placed during the last
iteration of Loop~2b, which is also the location at which $\vec{\beta}$ last
placed a tile during this iteration of Loop~2, and thus right before
the next iteration of Loop~2, is $\vec{v}'_{4j'_{m}+1}$.
Since Line~16 computes $j_{m+1} = j'_m + 1$, we have
$\vec{v}'_{4j'_{m}+1} = \vec{v}'_{4 \left( j_{m+1} - 1 \right) + 1} =
\vec{v}'_{4 j_{m+1} - 3}$. 
This concludes our proof that
part~\ref{invariant-5} of the invariant holds when Loop~2b terminates.

Let $t$ be the tile that $\vec{\beta}$ placed at location $\vec{v}'_{4j_{m+1} - 3}$. 
Since Loop~2b simply copies the portion of $\vec{\alpha}$
between (and including) the points $\vec{v}'_{4j'_m}$ and
$\vec{v}'_{4j'_m+1}$, the glue of $t$ that touches $w'$ is
$g'_{4j'_{m}+1} = g'_{4j_{m+1} - 3}$. 
This concludes our proof that
part~\ref{invariant-6} of the invariant holds when Loop~2b terminates.

In conclusion, all six parts of our invariant hold when Loop~2b
terminates. 
Since no tile placements are performed during the current
iteration of Loop~2 after Loop~2b terminates, the invariant holds
when iteration $m$ terminates and thus prior to iteration $m+1$ of Loop~2.
This concludes our maintenance proof for Loop~2.

\noindent \underline{Sub-proof \#2 - Loop~2 invariant termination}

Note that Loop~2 terminates when the location at which $\vec{\beta}$
will next place a tile is $\vec{v}_{2e}+\vec{\Delta}$.
By part~\ref{invariant-4} of the invariant, prior to each iteration of Loop~2, for all
$1 \leq l < m$, $j_m \ne j_l$.
Since $|M| = \left| M' \right| < \infty$, Loop~2 must eventually
terminate with $Pos\left(\vec{\alpha}[k]\right) =$ $\vec{v}_{2e}+\vec{\Delta}$.

\noindent \underline{Sub-proof \#3:}

The reasoning that we used to show that all of the tile placement steps
executed by Loop~2a are adjacently correct and only place
tiles at locations on the far side of $w'$ from the seed can be
adapted to show that all of the tile placement steps executed by Loop~3
 are adjacently correct and only place tiles at locations
on the far side of $w'$ from the seed.
Moreover, Loop~3 will terminate because
$\left|\vec{\alpha}\right| < \infty$.

Thus, every tile placement step executed by the algorithm for $\vec{\beta}$ is adjacently correct. 
Since $s \subseteq R^3_{k,N}$ is a path from the location of the seed
of $\mathcal{T}$ to some location in the furthest extreme column of
$R^3_{k,N}$ and $\vec{\Delta} \ne \vec{0}$, it follows that, during Loop~2a and/or Loop~3,
$\vec{\beta}$ places at least one tile at a location that is not in
$R^3_{k,N}$.
In other words, $R^3_{k,N}$ does not self-assemble in $\mathcal{T}$.
\end{proof}

The following result combines Lemmas~\ref{lem:lemma-1} and~\ref{lem:lemma-2}. 
%
%
%
%
\begin{lemma*}
\label{lem:lemma-3}
Assume:
	 $\mathcal{T}=(T,\sigma, 1)$ is a 3D TAS, 
	 $G$ is the set of all glues in $T$, 
	 $k,N \in \mathbb{Z}^+$,
	 $s \subseteq R^3_{k,N}$ is a simple path from the location of $\sigma$ to some location in the furthest extreme column of $R^3_{k,N}$,
	 $\vec{\alpha}$ is a $\mathcal{T}$-assembly sequence that follows $s$, 
	 $m \in \mathbb{Z}^+$, 
         for all $1 \leq l \leq m$, $w_l$ is a
          window,
	 for all $1 \leq l < l' \leq m$, $\vec{\Delta}_{l,l'} \ne \vec{0}$ satisfies $w_{l'} = w_l + \vec{\Delta}_{l,l'}$, and
	 for all $1 \leq l \leq m$, there is an odd $1 \leq e_l <
          2k$ such that $M_{\vec{\alpha}, w_l} \upharpoonright s$ is a
          non-empty restricted glue window submovie of length
            $2e_l$.
%
%
%
%
%
%
%
%
%
%
If $m > |G|^k \cdot k \cdot 16^{k}$, then $R^3_{k,N}$ does not self-assemble in $\mathcal{T}$.
\end{lemma*}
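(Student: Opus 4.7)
The overall plan is to chain Lemma~\ref{lem:lemma-1} and Lemma~\ref{lem:lemma-2} directly; no new combinatorics are needed at this stage, since all of the counting work lives in Lemma~\ref{lem:lemma-1} and all of the assembly-sequence surgery lives in Lemma~\ref{lem:lemma-2}. The setup (TAS $\mathcal{T}$, glue set $G$, dimensions $k,N$, simple path $s$, assembly sequence $\vec{\alpha}$, window family $(w_l)$ with translations $\vec{\Delta}_{l,l'}$, and odd-length restricted submovies of length $2e_l < 4k$) transfers verbatim from the hypotheses of Lemma~\ref{lem:lemma-3} to those of Lemma~\ref{lem:lemma-1}, and the numerical hypothesis $m > |G|^k \cdot k \cdot 16^k$ is precisely what Lemma~\ref{lem:lemma-1} needs to fire.

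The first step is therefore to invoke Lemma~\ref{lem:lemma-1} to extract indices $1 \le l < l' \le m$ and a common odd value $e = e_l = e_{l'}$ with $1 \le e < 2k$ such that $M_{\vec{\alpha}, w_l} \upharpoonright s$ and $M_{\vec{\alpha}, w_{l'}} \upharpoonright s$ are sufficiently similar non-empty restricted glue window submovies of length $2e$, with respect to the translation $\vec{\Delta}_{l,l'} \ne \vec{0}$ satisfying $w_{l'} = w_l + \vec{\Delta}_{l,l'}$.

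The second step is to feed these outputs into Lemma~\ref{lem:lemma-2} under the instantiation $w := w_l$, $w' := w_{l'}$, $\vec{\Delta} := \vec{\Delta}_{l,l'}$, and the $e$ just produced. The remaining hypotheses of Lemma~\ref{lem:lemma-2} --- that $s \subseteq R^3_{k,N}$ is a simple path from the seed to the furthest extreme column, that $\vec{\alpha}$ follows $s$, and that $1 \le e < 2k$ is odd --- are all already in hand. Lemma~\ref{lem:lemma-2} then yields precisely the conclusion that $R^3_{k,N}$ does not self-assemble in $\mathcal{T}$.

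The only mild obstacle is a bookkeeping issue: Lemma~\ref{lem:lemma-2} is stated for a directed $\mathcal{T}$, so to compose the two lemmas cleanly I must either strengthen the hypothesis of Lemma~\ref{lem:lemma-3} to include directedness (which matches the intended application to $K^1_{USA}$ in Theorem~\ref{thm:theorem-1}) or argue by contrapositive, supposing that $R^3_{k,N}$ self-assembles in $\mathcal{T}$ and using the fact that $\vec{\alpha}$ follows a single simple path to reduce to a directed scenario before applying Lemma~\ref{lem:lemma-2}. With this minor point addressed, the entire proof collapses to a two-line mechanical composition of the two preceding lemmas.
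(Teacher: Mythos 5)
Your proposal is exactly the paper's own proof: invoke Lemma~\ref{lem:lemma-1} (its numerical hypothesis is met since $m > |G|^k \cdot k \cdot 16^k$) to produce two sufficiently similar restricted glue window submovies at windows $w_l$ and $w_{l'}$, then feed those into Lemma~\ref{lem:lemma-2} with $\vec{\Delta} := \vec{\Delta}_{l,l'}$ to conclude $R^3_{k,N}$ does not self-assemble. Your ``bookkeeping'' observation is also sharp and real: as stated, this lemma's hypotheses do not include directedness, yet Lemma~\ref{lem:lemma-2} requires a \emph{directed} $\mathcal{T}$ (and its proof uses directedness in an essential way when $\vec{\beta}$ re-crosses $w'$ toward the seed), so the paper's own proof silently assumes a hypothesis that is not in the lemma statement; the clean fix is your first suggestion, namely to add directedness to the hypothesis, which is harmless because the sole application (Theorem~\ref{thm:theorem-1}) already assumes it. Your alternative fix --- ``using the fact that $\vec{\alpha}$ follows a single simple path to reduce to a directed scenario'' --- is not clearly workable, since following a single path does not make the system directed and the directedness of $\mathcal{T}$ is what forces $\vec{\beta}$'s tile types to agree with $\vec{\alpha}$'s at re-crossing locations; stick with the hypothesis-strengthening route.
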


\begin{proof}
The hypothesis of Lemma~\ref{lem:lemma-1} is satisfied. So there exist
$1 \leq l < l' \leq m$ such that $e = e_l = e_{l'}$ and
$M_{\vec{\alpha}, w_l} \upharpoonright s = \left(\vec{v}_1,g_1\right),
\ldots, \left( \vec{v}_{2e}, g_{2e}\right)$ and $M_{\vec{\alpha},
  w_{l'}} \upharpoonright s = \left(\vec{v}'_1,g'_1\right), \ldots,
\left( \vec{v}'_{2e}, g'_{2e}\right)$ are
sufficiently similar non-empty restricted glue window submovies. Thus,
the hypothesis of Lemma~\ref{lem:lemma-2} is satisfied. It follows
that $R^3_{k,N}$ does not self-assemble in $\mathcal{T}$.

\end{proof}

\addtocounter{theorem}{-2}
%
%
%
%
\begin{theorem}

\end{theorem}

\begin{proof}
Assume that $\mathcal{T} = (T,\sigma,1)$ is a directed, 3D TAS in
  which $R^3_{k,N}$ self-assembles, with $\alpha \in
  \mathcal{A}_{\Box}\left[ \mathcal{T} \right]$ satisfying
  $\textmd{dom } \alpha = R^3_{k,N}$.
Let $s$ be a simple path in $G^{\textmd{b}}_{\alpha}$ from the location of $\sigma$ (the seed) to some location in the furthest extreme (westernmost or easternmost) column of $R^3_{k,N}$ in either the $z=0$ or $z=1$ plane.  
%
By Observation~\ref{obs:simple}, there is an assembly sequence $\vec{\alpha}$ that follows $s$. 
Assume $N \geq 3$.
Since $s$ is a simple path from the location of the seed to some
location in the furthest extreme column of $R^3_{k,N}$, there is some
positive integer $m \geq \left \lfloor \frac{N}{2} \right \rfloor \geq
\frac{N}{3}$ such that, for all $1 \leq l \leq m$, $w_l$ is a window
that cuts $R^3_{k,N}$, for all $1 \leq l < l' \leq m$, there exists
$\vec{\Delta}_{l,l'} \ne \vec{0}$ satisfying $w_{l'} = w_l +
\vec{\Delta}_{l,l'}$, and for each $1 \leq l \leq m$, there exists a
corresponding odd number $1 \leq e_l < 2k$ such that
$M_{\vec{\alpha},w_l} \upharpoonright s$ is a
non-empty restricted glue window submovie of length $2e_l$.
Since $R^3_{k,N}$ self-assembles in $\mathcal{T}$, (the contrapositive of) Lemma~\ref{lem:lemma-3} says that $m \leq |G|^k \cdot k \cdot 16^k$. 
We also know that $\frac{N}{3} \leq m$, which means that $\frac{N}{3} \leq |G|^k \cdot k \cdot 16^k$. Thus, we have $N \leq 3 \cdot |G|^k \cdot k \cdot 16^k$ and it follows that $|T| \geq \frac{|G|}{6} \geq \frac{1}{6} \frac{N^{\frac{1}{k}}}{\left(3 \cdot k \cdot 16^k \right)^{\frac{1}{k}}} \geq \frac{1}{6} \frac{N^{\frac{1}{k}}}{\left(3^k \cdot 2^k \cdot 16^k \right)^{\frac{1}{k}}} = \frac{1}{6} \frac{N^{\frac{1}{k}}}{96} = \Omega\left(  N^{\frac{1}{k}} \right)$.
\end{proof}

\section{Upper bound appendix}
\label{sec:appendix-upper-bound}

This section contains the remaining details of our upper bound. 

\subsection{Initial value gadgets for $k \mod 4 = 0$}
\label{sec:appendix-upper-bound-initial-value}

In Figures~\ref{fig:Initial_overview_seed_start} through~\ref{fig:Initial_overview_seed_to_next_significant_digit_region}, we create the gadgets that self-assemble the initial value $s$ of the counter when $k \mod 4 = 0$. We will assume that $d_{w-1}, \ldots, d_0$ are the base-$M$ digits of $s$, where $d_{w-1}$ is the most significant digit and $d_0$ is the least significant digit.

Figures~\ref{fig:Initial_overview_seed_start} through~\ref{fig:Initial_overview_seed_to_next_significant_digit_region} also show an example assembly sequence, where, in general, each figure continues the sequence from the resulting assembly in the previously-numbered figure, unless explicitly stated otherwise. In each figure, the black tiles belong to the gadget that is currently self-assembling, starting from the black tile that connects to a white (or the seed) tile. Figure~\ref{fig:Initial_overview_full_initial_value} shows a fully assembled example of the initial value of the counter. 
	\begin{figure}[!h]
		\centering
		\includegraphics[width=\textwidth ]{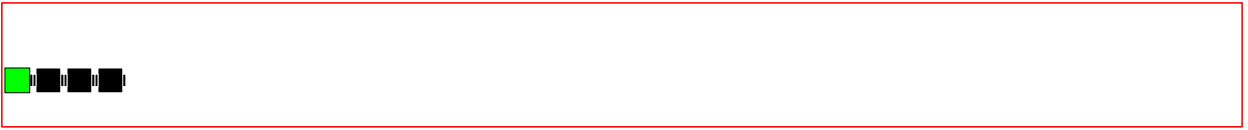}
		\caption{\label{fig:Initial_overview_seed_start} The {\tt Seed\_start} gadget is shown here. It is the only gadget that does not have an input glue. The westernmost tile in the {\tt Seed\_start} gadget is the seed tile type.  In general, we create one {\tt Seed\_start} gadget, contributing $O(1)$ tile types. }
	\end{figure}

	\begin{figure}[!h]
		\centering
		\includegraphics[width=\textwidth ]{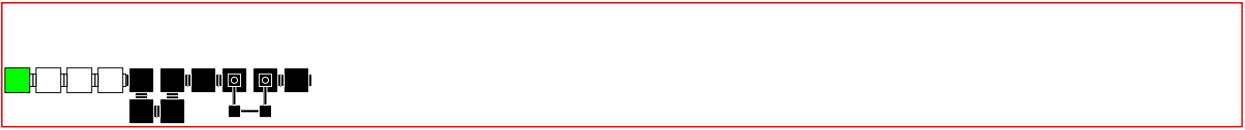}
		\caption{\label{fig:Initial_overview_write_even_digits} A series of two \texttt{Write\_even\_digit}  gadgets is shown here. Each bit of an even digit is represented by a corresponding \texttt{Write\_even\_digit} gadget having its bump in the plane $z=0$ (resp., $z=1$) if the bit being represented is 0 (resp., 1). The gadgets depicted here are: \texttt{Write\_even\_digit\_0} and \texttt{Write\_even\_digit\_1}, encoding the binary string 01, which we, in our construction, will interpret as the binary representation of the base-10 value 2. That is, the westernmost bit in a digit is its least significant bit. Since the digit region in this example contains the seed tile, 2 is the value of the least significant digit. In general, we create a series of {\tt Write\_even\_digit} gadgets for each digit $d_{i}$, where $i < w - 1$ is an even number, contributing $O(k m)$ tile types.}
	\end{figure}

	\begin{figure}[!h]
		\centering
		\includegraphics[width=\textwidth ]{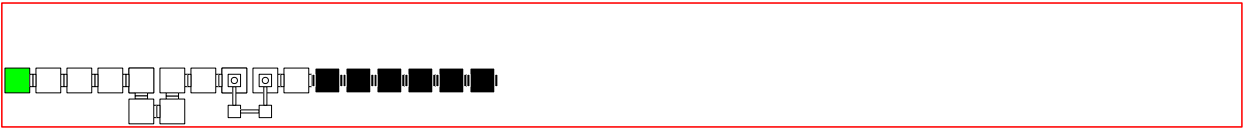}
		\caption{\label{fig:Initial_overview_seed_even_digit_to_odd_digit} A {\tt Seed\_even\_digit\_to\_odd\_digit} gadget is shown here. In general, we create one  {\tt Seed\_even\_digit\_to\_odd\_digit} gadget for each digit region of the initial value, contributing $O(k)$ tile types. }
	\end{figure}

	\begin{figure}[!h]
		\centering
		\includegraphics[width=\textwidth ]{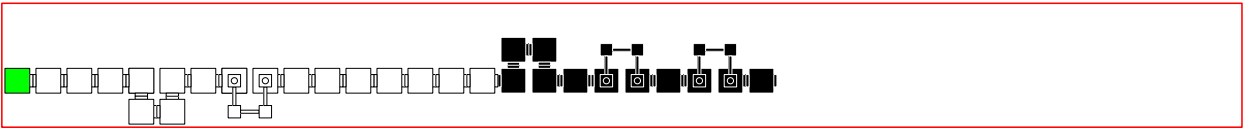}
		\caption{\label{fig:Initial_overview_write_odd_digits} A series of three \texttt{Write\_odd\_digit} gadgets is shown here. Each bit of an odd digit is represented by a corresponding \texttt{Write\_odd\_digit} gadget. A \texttt{Write\_odd\_digit} gadget is similar to its \texttt{Write\_even\_digit} counterpart, except the bit bumps of the latter face to the south while those of the former face to the north. An odd digit has an additional (westernmost) bit indicating whether the digit is the most significant digit. If $k \mod 4 = 2$, then this extra bit indicates whether the digit is the second-most significant digit, or the most significant digit contained in a (general) digit region. 
		The gadgets depicted here, from west to east are: \texttt{Write\_odd\_digit\_0}, \texttt{Write\_odd\_digit\_1} and \texttt{Write\_odd\_digit\_1}, encoding the binary string 011, which we, in our construction, will interpret as the binary representation of the base-10 value 3, and this digit is not the most significant digit, as indicated by its most significant digit indicator bit having the value 0. In general, we create a series of {\tt Write\_odd\_digit} gadgets for each digit $d_{i}$, where $i \leq m - 1$ is an odd number, contributing $O(k m)$ tile types.}
		\end{figure}

	\begin{figure}[!h]
		\centering
		\includegraphics[width=\textwidth ]{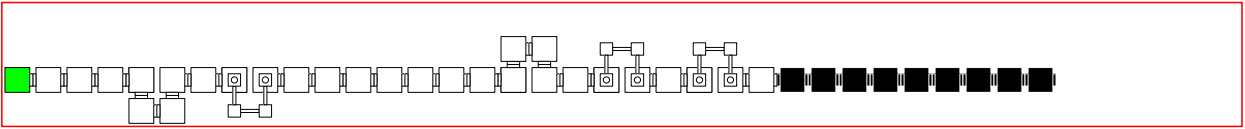}
		\caption{\label{fig:Initial_overview_single_tile_0} A path of \texttt{Single\_tile} gadgets is shown here. The non-constant length of this path, which has to stop exactly six tiles from the easternmost edge of the digit region, facilitates a special case digit region that contains the most significant digit when $k \mod 4 = 2$. Note that a {\tt Single\_tile} gadget is comprised of a single tile whose input glue is always north-facing, and whose output glue is always south-facing. We create one {\tt Single\_tile} gadget for each location in the general version of the depicted path of length $\left(8+3 m + 1\right) - (2+4)$, for each digit region of the initial value, contributing $O(k m)$ tile types.   }
	\end{figure}

	\begin{figure}[!h]
		\centering
		\includegraphics[width=\textwidth ]{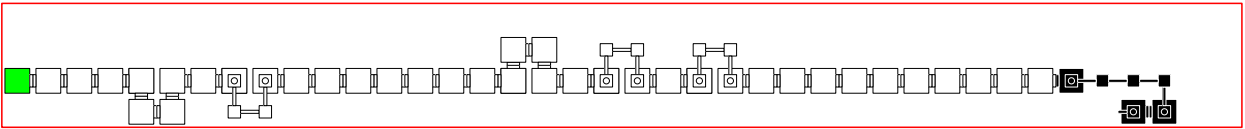}
		\caption{\label{fig:Initial_overview_stopper_after_odd_digit} A \texttt{Stopper\_after\_odd\_digit} gadget is shown here.
		The {\tt Stopper\_after\_odd\_digit} gadgets being created here serve the same purpose as the {\tt Stopper\_after\_odd\_digit} gadgets that were created in Figure~\ref{fig:General_overview_stopper_after_odd_digit}, but here, we create separate {\tt Stopper\_after\_odd\_digit} gadgets for each digit region.
		In general, we create one \texttt{Stopper\_after\_odd\_digit} gadget for each digit region of the initial value, contributing $O(k)$ tile types.}
	\end{figure}

	\begin{figure}[!h]
		\centering
		\includegraphics[width=\textwidth ]{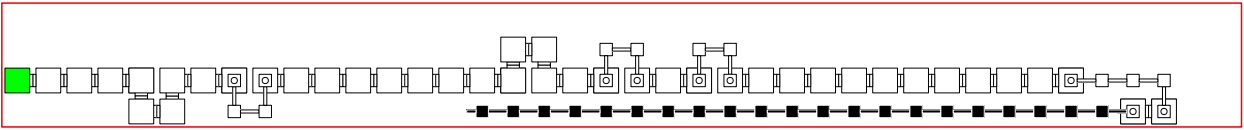}
		\caption{\label{fig:Initial_overview_single_tile_opposite_0} A path of \texttt{Single\_tile\_opposite} gadgets is shown here. Note that a {\tt Single\_tile\_opposite} gadget is comprised of a single tile whose input glue is always north-facing, and whose output glue is always south-facing. We create one \texttt{Single\_tile\_opposite} gadget for each location in the general version of the depicted path of length $6m + 9$, for each digit region of the initial value, contributing $O(k m)$ tile types.  }
	\end{figure}

	\begin{figure}[!h]
		\centering
		\includegraphics[width=\textwidth ]{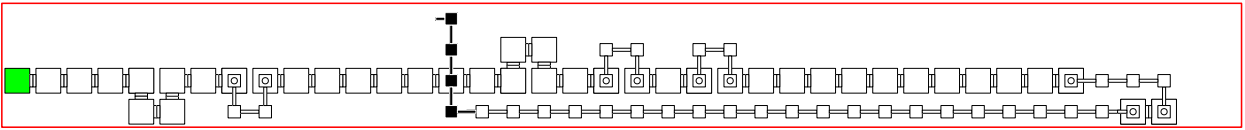}
		\caption{\label{fig:Initial_overview_between_digits} A \texttt{Between\_digits} gadget is shown here. In general, we create one \texttt{Between\_digits} gadget for each digit region of the initial value, contributing $O(k)$ tile types. }
	\end{figure}

	\begin{figure}[!h]
		\centering
		\includegraphics[width=\textwidth ]{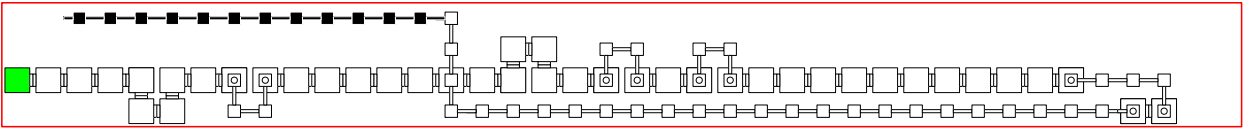}
		\caption{\label{fig:Initial_overview_single_tile_opposite_1}  A path of \texttt{Single\_tile\_opposite} gadgets is shown here. We create one  \texttt{Single\_tile\_opposite} gadget for each location in the general version of the depicted path of length $3m+ 6$, for each digit region of the initial value, contributing $O(k m)$ tile types. Note that, if the most significant digit indicator bit is 1 (it is 0 in the depicted example), then the {\tt Reset\_turn\_corner} gadget created in Figure~\ref{fig:General_overview_return_turn_corner} would attach to the last {\tt Single\_tile\_opposite} gadget in the depicted path.}
	\end{figure}

	\begin{figure}[!h]
		\centering
		\includegraphics[width=\textwidth ]{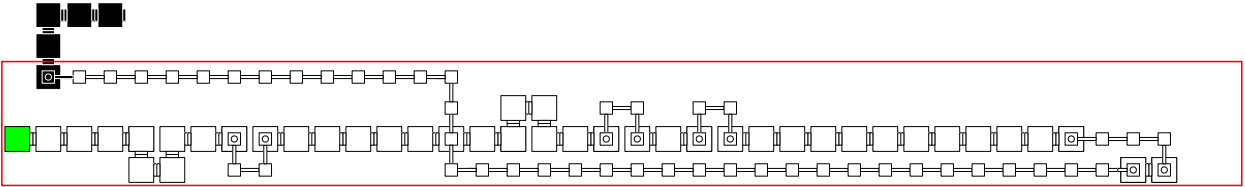}
		\caption{\label{fig:Initial_overview_seed_to_next_significant_digit_region} A \texttt{Seed\_to\_next\_significant\_digit\_region} gadget is shown here. In this example, the digit region does not contain the most significant digit, so a \texttt{Seed\_to\_next\_significant\_digit\_region} gadget self-assembles into the digit region in which the next two most significant digits are contained. The next gadget to self-assemble after a {\tt Seed\_to\_next\_significant\_digit\_region} would be a {\tt Write\_even\_digit} gadget (see Figure~\ref{fig:Initial_overview_write_even_digits}). In general, we create one  \texttt{Seed\_to\_next\_digit\_region} gadget for each digit region that does not contain the most significant digit, contributing $O(k)$ tile types.       }
	\end{figure}


	\clearpage
	
\subsection{All gadgets for $k \mod 4 = 2$}
\label{sec:appendix-upper-bound-special}

%
%
We will now consider the case where $k \mod 4 = 2$. For this case, it suffices to encode the most significant counter digit using only two rows. To that end, we will use a \emph{special} case digit region, which is a digit region whose dimensions are two rows by $l$ columns, that contains one (most significant) even digit. Figure~\ref{fig:Special_digit_regions_dimensions} shows a high-level overview of how the digits (that comprise a value) of the counter are partitioned into digit regions when $k \mod 4 = 2$.

\begin{figure}[!h]
	\centering
	\includegraphics[width=.6 \textwidth ]{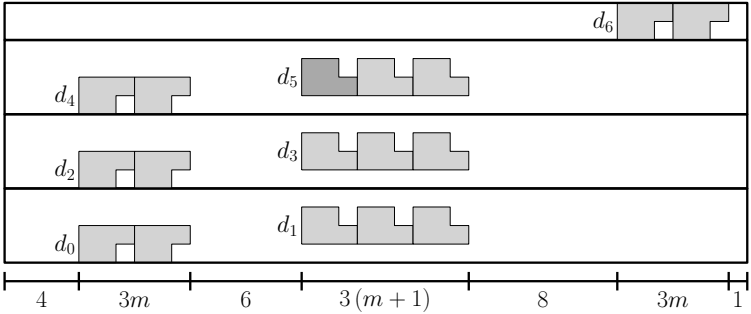}			\caption{\label{fig:Special_digit_regions_dimensions}  This example shows how the digits (that comprise a value) of the counter are partitioned into digit regions when $k \mod 4 = 2$. Recall that we include an ``extra'' $\Theta\left( m \right)$ columns in a general digit region (see Figure~\ref{fig:Overview_digit_regions_dimensions}). We do this to accommodate the most significant (even) digit of the counter in a special case digit region. Notice that we set the least significant indicator bit of the second-most significant digit to 1.}
\end{figure}

Assume the existence of all the gadgets that were created in Figures~\ref{fig:Initial_overview_seed_start} through~\ref{fig:Initial_overview_seed_to_next_significant_digit_region} and Figure~\ref{fig:General_overview_return_read_even_digit}. In Figures~\ref{fig:Special_initial_overview_seed_to_most_significant_digit_region} through~\ref{fig:Special_initial_overview_return_single_tile}, we create the gadgets that self-assemble the initial value of the counter, when $k \mod 4 = 2$. Figures~\ref{fig:Special_initial_overview_seed_to_most_significant_digit_region} through~\ref{fig:Special_initial_overview_return_single_tile} also show an example assembly sequence, where, unless specified otherwise, each figure continues the sequence from the resulting assembly in the previously-numbered figure. A fully assembled example of the initial value of the counter, when $k \mod 4 = 2$, is shown in Figure~\ref{fig:Special_initial_overview_full_initial_value}.

\begin{figure}[!h]
	\centering
	\includegraphics[width=\textwidth]{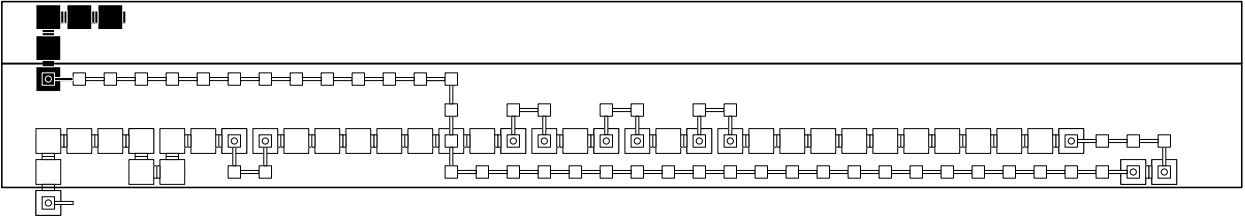}
	\caption{\label{fig:Special_initial_overview_seed_to_most_significant_digit_region}  A {\tt Seed\_to\_next\_significant\_digit\_region} gadget is shown here. Note that this is the same general gadget initiated in Figure~\ref{fig:Initial_overview_seed_to_next_significant_digit_region}, just with different glues. In general, we create one {\tt Seed\_to\_next\_significant\_digit\_region} gadget, replacing the {\tt Reset\_turn\_corner} from Figure~\ref{fig:General_overview_return_turn_corner} and contributing $O(1)$ tile types.}
\end{figure}

\begin{figure}[!h]
	\centering
	\includegraphics[width=\textwidth]{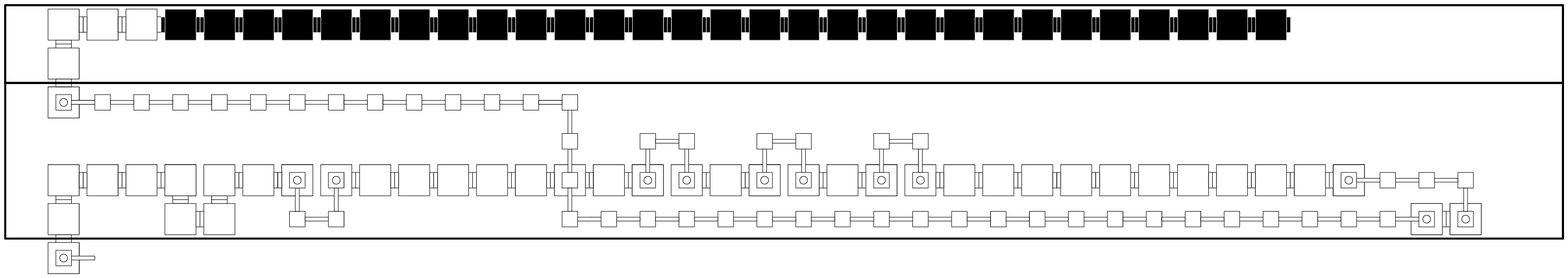}
	\caption{\label{fig:Special_initial_overview_single_tile_0} A path of \texttt{Single\_tile} gadgets is shown here. We create one \texttt{Single\_tile} gadget for each location in the general version of the depicted path of length $3m + 6 + 3\left( m +1 \right) + 8$, contributing $O(m)$ tile types.  }
\end{figure}

\begin{figure}[!h]
	\centering
	\includegraphics[width=\textwidth]{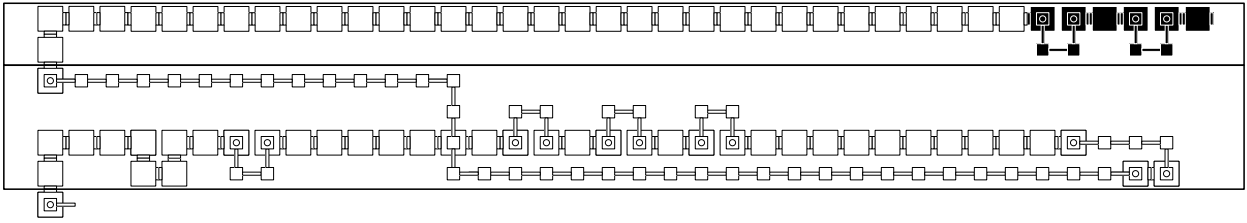}
	\caption{\label{fig:Special_initial_overview_write_even_digit}  A series of two {\tt Write\_even\_digit} gadgets is shown here. In general, we create one {\tt Write\_even\_digit} gadget for each bit of the digit $d_{w - 1}$, where $w - 1$ is an even number, contributing $O(m)$ tile types. }
\end{figure}

\begin{figure}[!h]
	\centering
	\includegraphics[width=\textwidth]{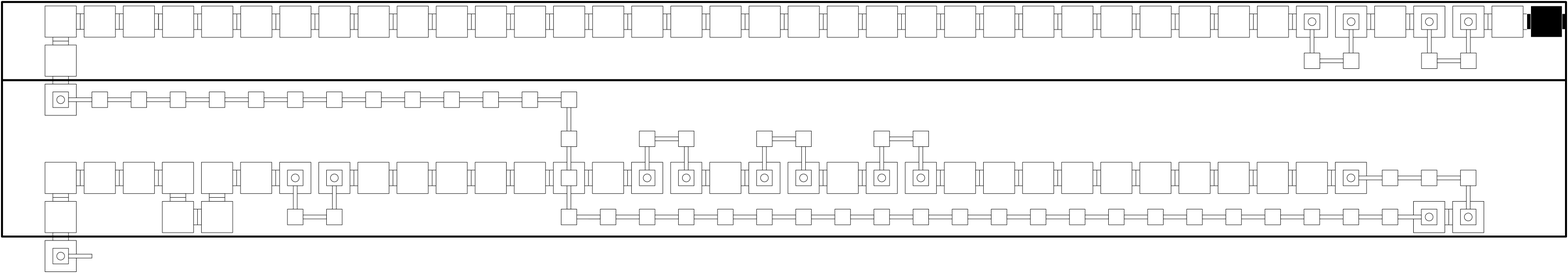}
	\par\bigskip
	\includegraphics[width=\textwidth]{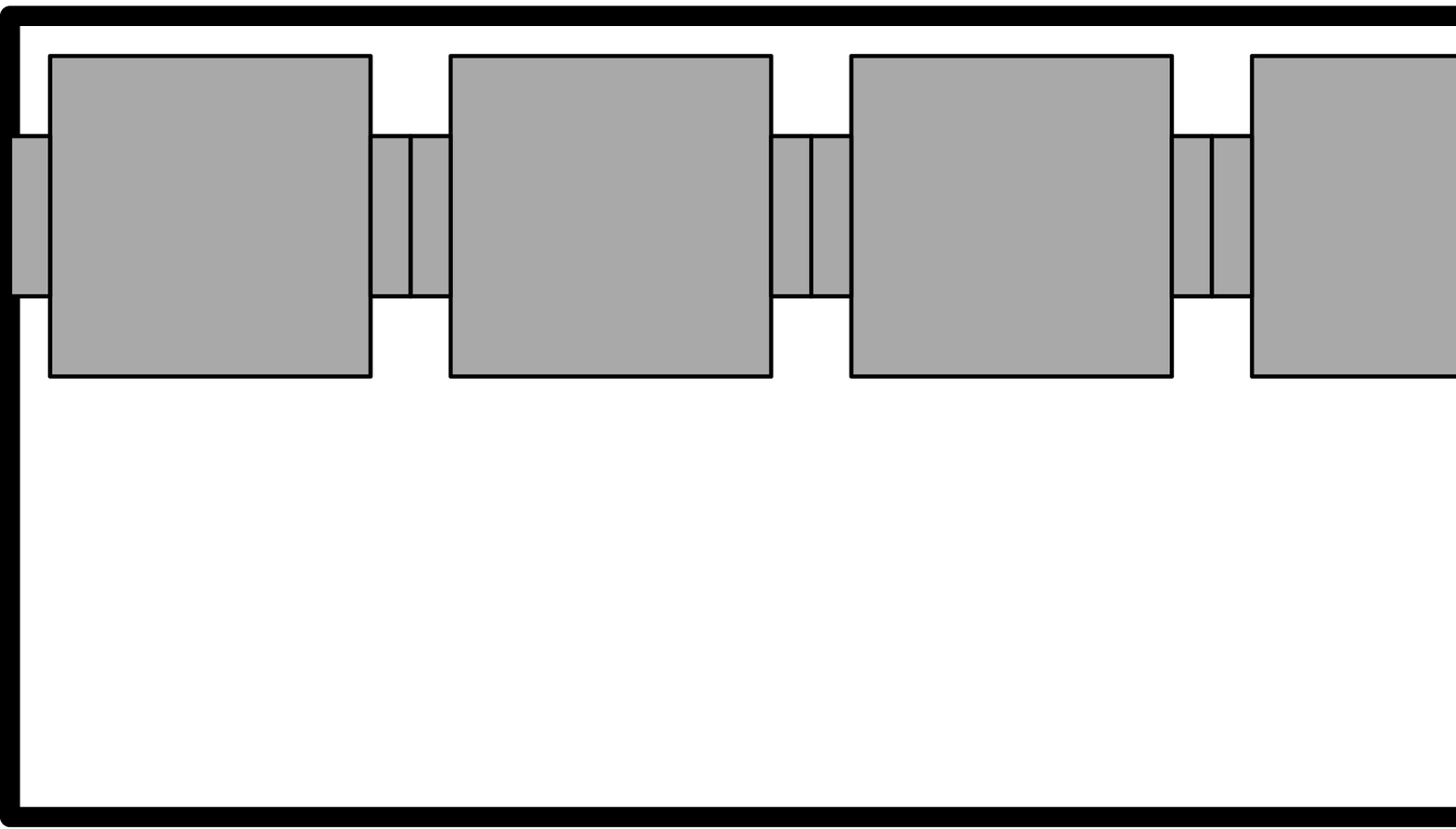}
	\caption{\label{fig:Special_initial_overview_single_tile_1}  A path of \texttt{Single\_tile} gadgets is shown here. The path starts in the current special case digit region (top) and terminates in the corresponding special case digit region (bottom). We create one \texttt{Single\_tile} gadget for each location in the general version of the depicted path of length $1+4+3m + 6 + 3\left( m+1 \right)+3$, contributing $O(m)$ tile types.
	}
\end{figure}

\begin{figure}[!h]
	\centering
	\includegraphics[width=\textwidth]{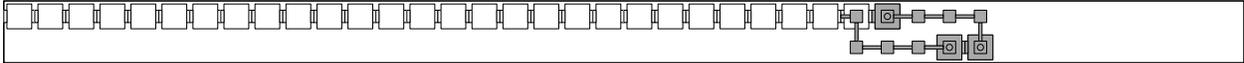}
	\caption{\label{fig:Special_initial_overview_special_stopper}  A {\tt Special\_stopper} gadget is shown here. It is used exclusively within a special case digit region for hindering a repeating path of tiles propagating the value of the most significant digit when $k \mod 4 = 2$. In general, we create one {\tt Special\_stopper} gadget, contributing $O(1)$ tile types.}
\end{figure}

\begin{figure}[!h]
	\centering
	\includegraphics[width=\textwidth]{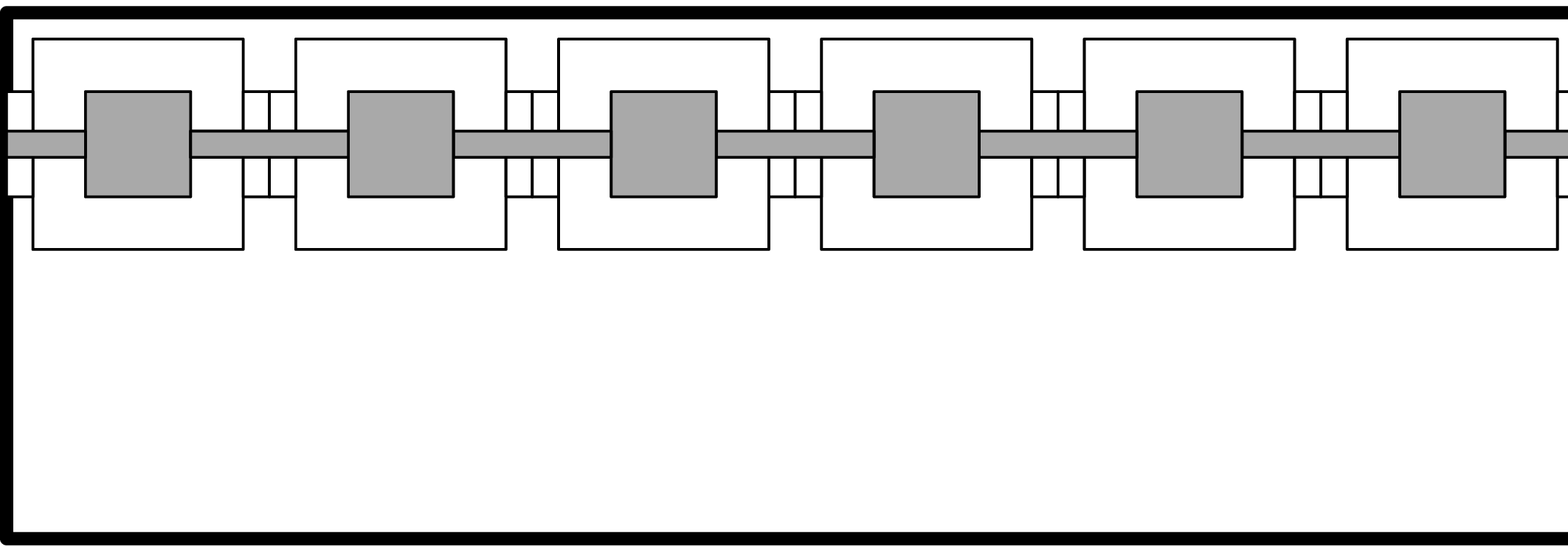}
	\caption{\label{fig:Special_initial_overview_single_tile_opposite_0}  A path of {\tt Single\_tile\_opposite} gadgets is shown here. We create one \texttt{Single\_tile\_opposite} gadget for each location in the general version of the depicted path of length $4+3m + 6+ 3\left( m +1\right)+2$, contributing $O(m)$ tile types.}
\end{figure}

\begin{figure}[!h]
	\centering
	\includegraphics[width=\textwidth]{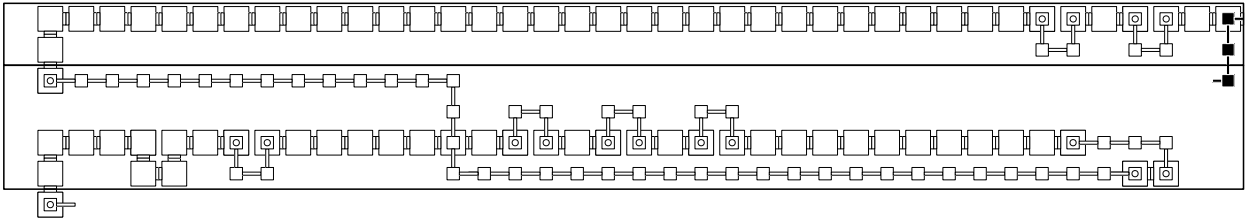}
	\caption{\label{fig:Special_initial_overview_special_at_MSB_of_most_significant_digit}  A {\tt Special\_at\_MSB\_of\_most\_significant\_digit} gadget is shown here. This is one example of a gadget that is used exclusively for the self-assembly of a special case digit region that self-assembles in the digit region in which the next least significant digits, relative to the current special digit region, are contained. The east-facing input glue of the {\tt Special\_at\_MSB\_of\_most\_significant\_digit} gadget binds to the west-facing output glue of the last {\tt Single\_tile\_opposite} gadget to attach in the path from Figure~\ref{fig:Special_initial_overview_single_tile_opposite_0}. In general, we create one {\tt Special\_at\_MSB\_of\_most\_significant\_digit} gadget, contributing $O(1)$ tile types.}
\end{figure}

\begin{figure}[!h]
	\centering
	\includegraphics[width=\textwidth]{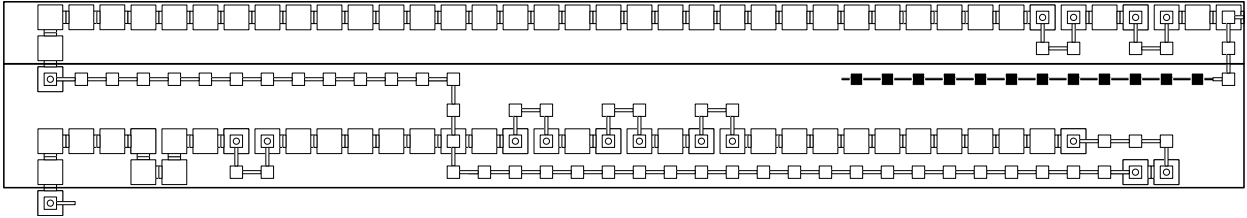}
	\caption{\label{fig:Special_initial_overview_single_tile_opposite_1}  A path of {\tt Single\_tile\_opposite} gadgets is shown here. These gadgets are examples of gadgets that are used exclusively for the self-assembly of a special case digit region that self-assembles in the digit region in which the next least significant digits, relative to the current digit region, are contained. We create one \texttt{Single\_tile\_opposite} gadget for each location in the general version of the depicted path of length $6 + 3 m$, contributing $O(m)$ tile types.}
\end{figure}

\begin{figure}[!h]
	\centering
	\includegraphics[width=\textwidth]{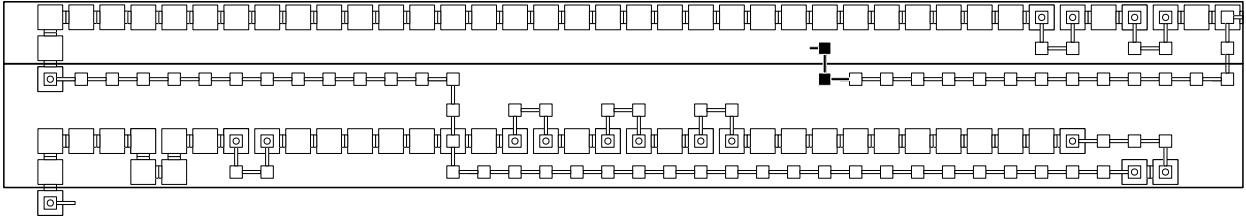}
	\caption{\label{fig:Special_initial_overview_at_MSB_of_odd_digit}  An {\tt At\_MSB\_of\_odd\_digit} gadget is shown here. This is the last example of a gadget that is used exclusively for the self-assembly of a special case digit region that self-assembles in the digit region in which the next least significant digits, relative to the current digit region, are contained. In general, we create one {\tt At\_MSB\_of\_odd\_digit}, contributing $O(1)$ tile types. }
\end{figure}

\begin{figure}[!h]
	\centering
	\includegraphics[width=\textwidth]{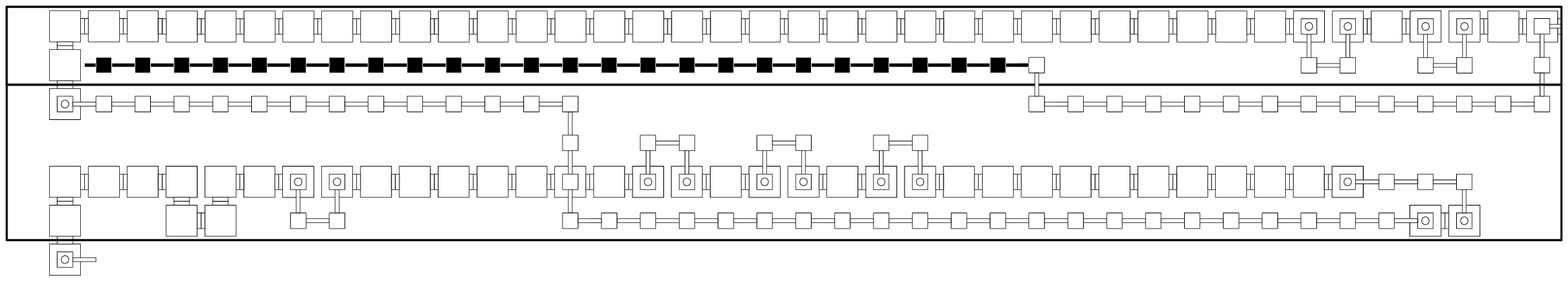}
	\caption{\label{fig:Special_initial_overview_single_tile_opposite_2}  A path of {\tt Single\_tile\_opposite} gadgets is shown here. We create one \texttt{Single\_tile\_opposite} gadget for each location in the general version of the depicted path of length $2+3 m+ 6 + 3\left( m + 1\right) +1$, contributing $O(m)$ tile types.}
\end{figure}

\begin{figure}[!h]
	\centering
	\includegraphics[width=\textwidth]{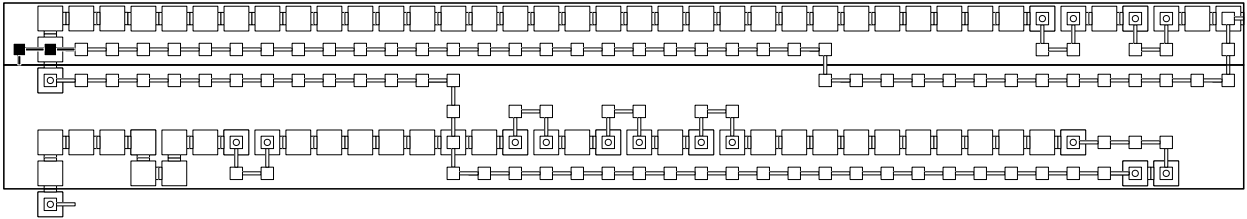}
	\caption{\label{fig:Special_initial_overview_return_turn_corner}  A {\tt Reset\_turn\_corner} gadget is shown here.  The gadget being created here replaces the gadget being created in Figure~\ref{fig:General_overview_return_turn_corner}. In general, we create one {\tt Reset\_turn\_corner} gadget, contributing $O(1)$ tile types.}
\end{figure}

\begin{figure}[!h]
	\centering
	\includegraphics[width=\textwidth]{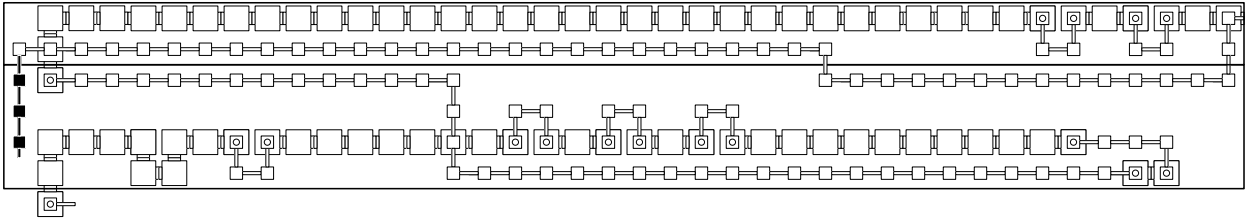}
	\caption{\label{fig:Special_initial_overview_return_single_tile}  A (beginning portion of a) path of {\tt Reset\_single\_tile} gadgets is shown here. The gadgets being created here replace the gadgets that were created in Figure~\ref{fig:General_overview_return_single_tile}. The {\tt Reset\_read\_even\_digit} gadget created in Figure~\ref{fig:General_overview_return_read_even_digit} attaches to the last {\tt Reset\_single\_tile} gadget in the series being created here. We create one {\tt Reset\_single\_tile} gadget for each location in the general version of the depicted path of length $k - 3$, contributing $O(k)$ tile types. }
\end{figure}

\begin{figure}[!th]
	\centering
	\includegraphics[width=\textwidth]{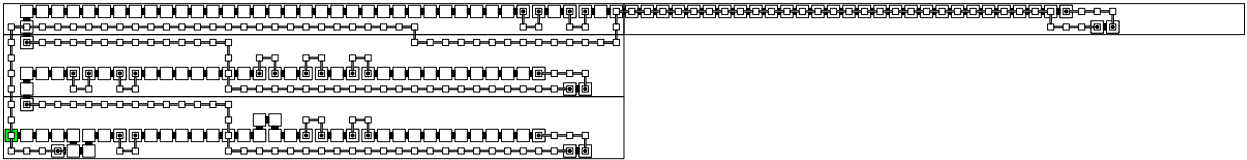}
	\caption{\label{fig:Special_initial_overview_full_initial_value} A fully assembled example of the initial value of the counter, when $k \mod 4 = 2$, showing the current and corresponding special case digit regions. }
\end{figure}

\clearpage

%
%
%
%
%

In Figures~\ref{fig:Special_overview_single_tile_opposite} through~\ref{fig:Special_overview_write_even_digit}, we create the gadgets that implement the self-assembly algorithm that increments the value of the counter, when $k \mod 4 = 2$. Figures~\ref{fig:Special_overview_single_tile_opposite} through~\ref{fig:Special_overview_write_even_digit} also show an example assembly sequence, where, unless specified otherwise, each figure continues the sequence from the resulting assembly in the previously-numbered figure.

\begin{figure}[!h]
	\centering
	\includegraphics[width=\textwidth]{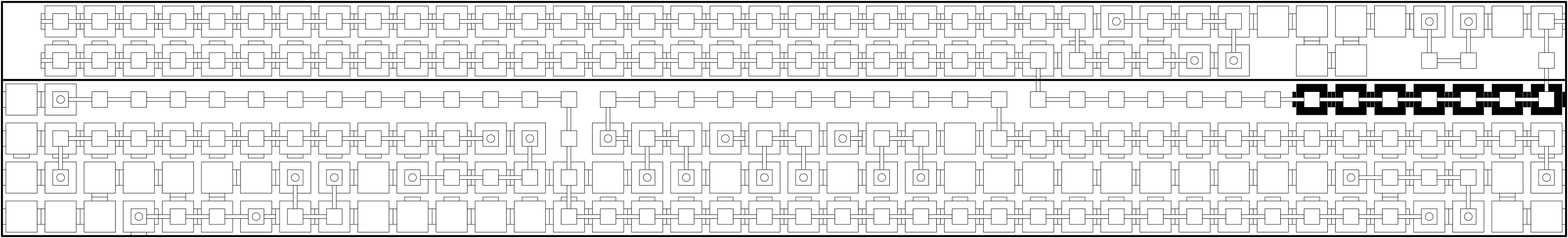}
	\par\bigskip
	\includegraphics[width=\textwidth]{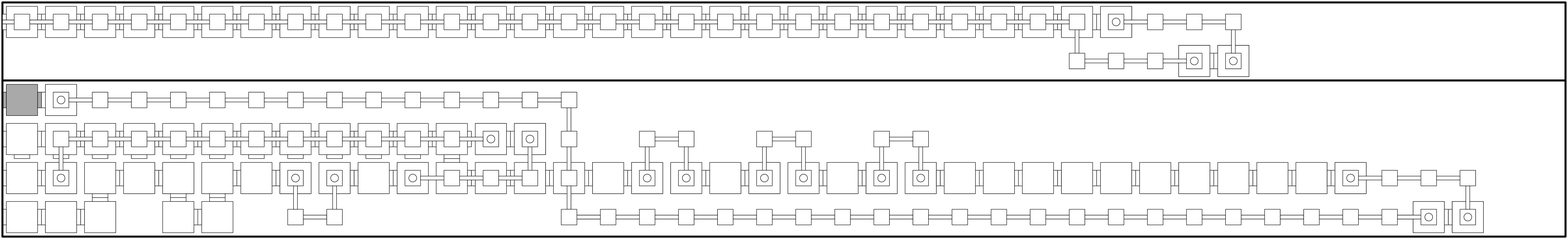}
	\caption{\label{fig:Special_overview_single_tile_opposite}  A path of {\tt Single\_tile\_opposite} gadgets is shown here, originating in the corresponding digit region (bottom) and terminating in the current digit region (top). If $k \mod 4 = 2$, then we would not create the {\tt Reset\_turn\_corner} gadget in Figure~\ref{fig:Special_initial_overview_return_turn_corner}. Instead, we would create the {\tt Z1\_to\_z0} gadget in Figure~\ref{fig:General_overview_z_1_to_z_0}, even though the odd digit would have its indicator bit set to 1, to which the first gadget in the general version of the depicted path of {\tt Single\_tile\_opposite} gadgets would attach. We create $O(1)$ {\tt Single\_tile\_opposite} gadgets for each location in the general version of the depicted path of length $3m + 2$, contributing $O(m)$ tile types. }
\end{figure}

\begin{figure}[!h]
	\centering
	\includegraphics[width=\textwidth]{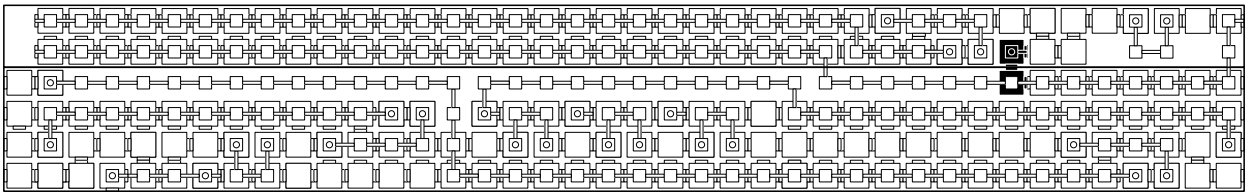}
	\caption{\label{fig:Special_overview_start_read_most_significant_even_digit}  A {\tt Start\_digit\_region} gadget is shown here. In general, we create $O(1)$ {\tt Start\_digit\_region} gadgets, contributing $O(1)$ tile types. }
\end{figure}

\begin{figure}[!h]
	\centering
	\includegraphics[width=\textwidth]{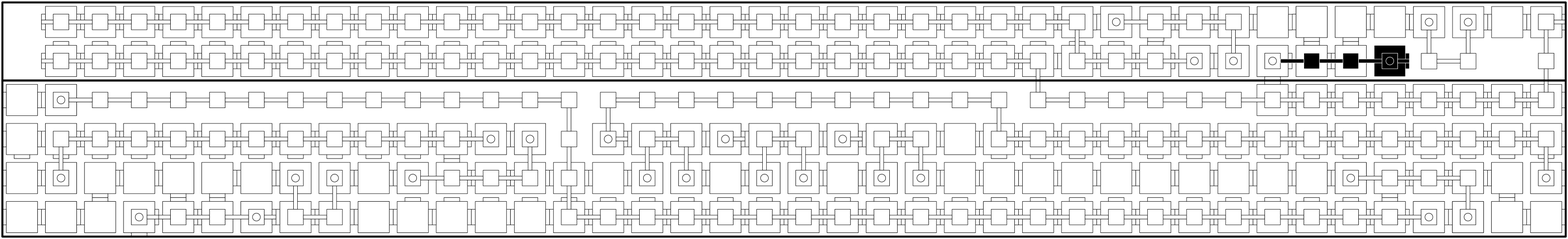}
	\caption{\label{fig:Special_overview_read_non_MSB}  A {\tt Read\_non\_MSB} gadget is shown here. In general, we create $O(M)$ {\tt Read\_non\_MSB} gadgets, contributing $O(M)$ tile types.
	}
\end{figure}

\begin{figure}[!h]
	\centering
	\includegraphics[width=\textwidth]{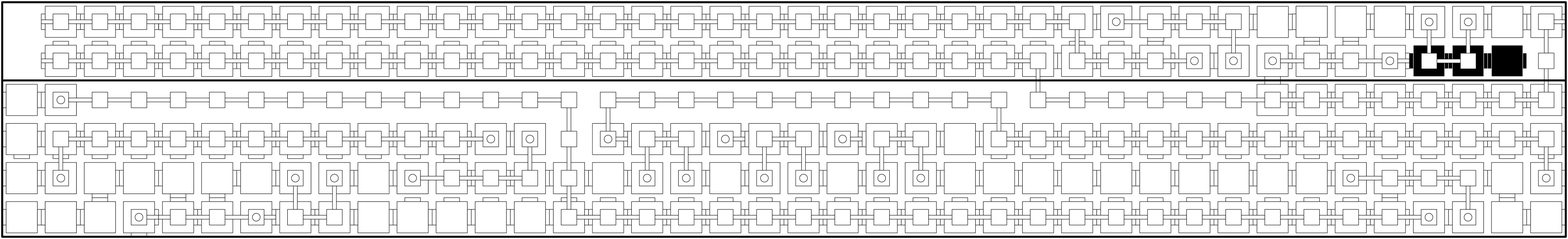}
	\caption{\label{fig:Special_overview_read_MSB}  A {\tt Read\_MSB} gadget is shown here. In general, we create $O(M)$ {\tt Read\_MSB} gadgets, contributing $O(M)$ tile types. }
\end{figure}

\begin{figure}[!h]
	\centering
	\includegraphics[width=\textwidth]{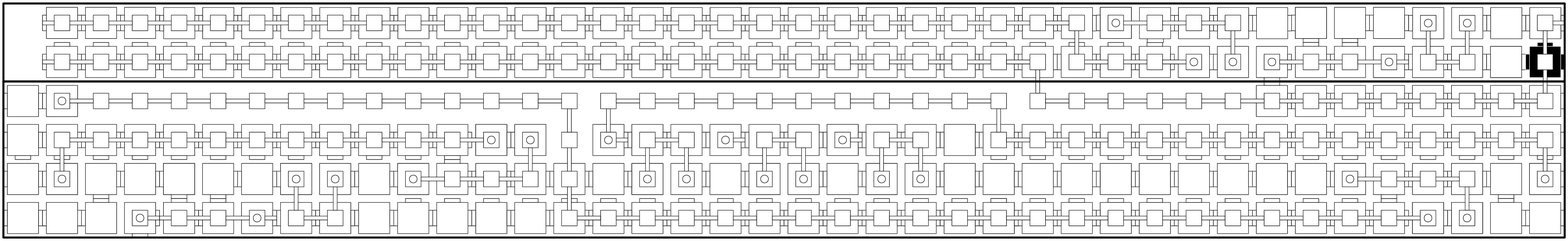}
	\par\bigskip
	\includegraphics[width=\textwidth]{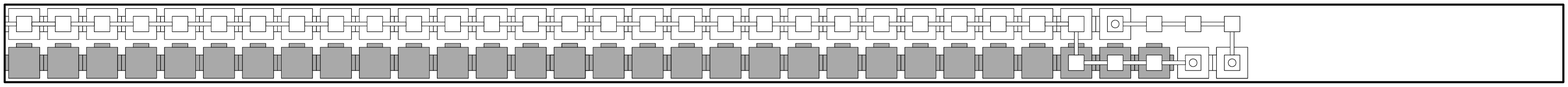}

	\caption{\label{fig:Special_overview_repeating_after_even_digit}  A path of {\tt Repeating\_after\_even\_digit} gadgets is shown here. We propagate the result of reading the most significant digit along a path of {\tt Repeating\_after\_even\_digit} tiles, starting in the current special case digit region (top) and terminating, by being hindered by the {\tt Special\_stopper} gadget, in the corresponding special case digit region (bottom). In general, we create $O(M)$ {\tt Repeating\_after\_even\_digit} gadgets, contributing $O(M)$ tile types.}
\end{figure}

\begin{figure}[!h]
	\centering
	\includegraphics[width=\textwidth]{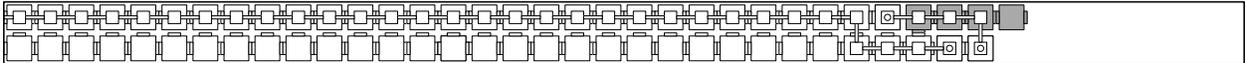}
	\caption{\label{fig:Special_overview_at_special_stopper}  An {\tt At\_special\_stopper} gadget is shown here. It has a fixed size. The north-facing glue of the last {\tt Repeating\_after\_even\_digit} gadget to attach in the path shown in Figure~\ref{fig:Special_overview_repeating_after_even_digit} will initiate the self-assembly of an {\tt At\_special\_stopper} gadget. If $x \in \{0,1\}^{m}$ and $c \in \{0,1\}$ are contained in the output glue of the former, where $c=1$ indicates the presence of an arithmetic carry and $c=0$ otherwise, then the output glue of the latter will contain the $m$-bit binary representation of $\left( x + c \right) \mod M$. If $\left( x + c \right) \mod M = 0$, then $c=1$ is contained in the output glue of the gadgets being created here, in which case the counter has rolled over to 0. In general, we create $O(M)$ {\tt At\_special\_stopper} gadgets, contributing $O(M)$ tile types.}
\end{figure}

\begin{figure}[!h]
	\centering
	\includegraphics[width=\textwidth]{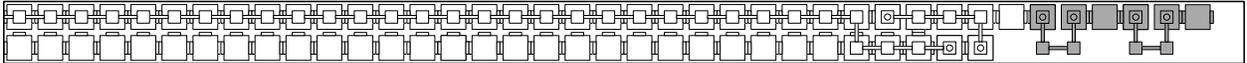}
	\caption{\label{fig:Special_overview_write_even_digit}  A series of {\tt Write\_even\_digit} gadgets is shown here. After the last {\tt Write\_even\_digit} gadget self-assembles, depending on whether the counter has rolled over to 0, the gadgets created in Figures~\ref{fig:Special_initial_overview_single_tile_1},~\ref{fig:Special_initial_overview_special_stopper},~\ref{fig:Special_initial_overview_single_tile_opposite_0},~\ref{fig:Special_initial_overview_special_at_MSB_of_most_significant_digit},~\ref{fig:Special_initial_overview_single_tile_opposite_1},~\ref{fig:Special_initial_overview_at_MSB_of_odd_digit},~\ref{fig:Special_initial_overview_single_tile_opposite_2},~\ref{fig:Special_initial_overview_return_turn_corner},and~\ref{fig:Special_initial_overview_return_single_tile} may self-assemble. In general, we create $O(M)$ {\tt Write\_even\_digit} gadgets, contributing $O(M)$ tile types.}
\end{figure}

\clearpage

\subsection{Full details}
\label{sec:appendix-upper-bound-full}

In this section, we give the full details of our construction.

We say that a gadget is \emph{general} if its input and output glues are undefined. If {\tt Gadget} is a general gadget, then we use the notation ${\tt Gadget}({\tt a}, {\tt b})$ to represent the creation of the \emph{specific gadget}, or simply \emph{gadget}, referred to as {\tt Gadget}, with input glue label {\tt a} and output glue label {\tt b} (all positive glue strengths are $1$). If a gadget has two possible output glues, then we will use the notation ${\tt Gadget}({\tt a}, {\tt b}, {\tt c})$ to denote the specific version of {\tt Gadget}, where {\tt a} is the input glue and {\tt b} and {\tt c} are the two possible output glues, listed in the order north, east, south and west, with all of the $z=0$ output glues listed before the $z=1$ output glues. If a gadget has only one output glue (and no input glue), like a gadget that contains the seed, or if a gadget has only one input glue (and no output glue), then we will use the notation ${\tt Gadget}({\tt a})$. We use the notation $\langle \cdot \rangle$ to denote some standard encoding of the concatenation of a list of symbols.
We group the general gadgets that we use in our construction into eight groups named Write (Figure~\ref{fig:Write_gadgets}), Read (Figure~\ref{fig:Read_gadgets}), Seed (Figure~\ref{fig:Seed_gadgets}), Hardcoded-length spacer (Figure~\ref{fig:Hardcoded_length_gadgets}), Blocking-based spacer (Figure~\ref{fig:Blocking_based_gadgets}), Transition (Figure~\ref{fig:Transition_gadgets}), Reset (Figure~\ref{fig:Reset_for_next_increment_gadgets}), and Special case (Figure~\ref{fig:Special_case_gadgets}). 
%

%
%
\begin{figure}[!ht]
    \centering
    \begin{subfigure}[t]{0.45\textwidth}
        \centering
        \includegraphics[width=38.75px]{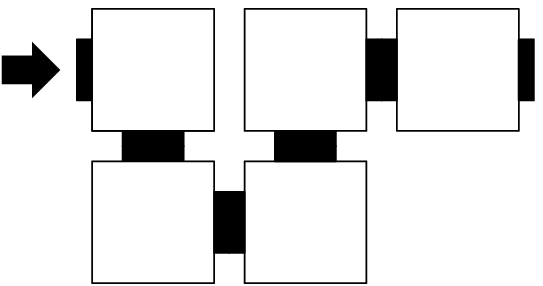}
        \caption{\label{fig:Gadget_write_even_digit_0} {\tt Write\_even\_digit\_0}}
    \end{subfigure}%
    \begin{subfigure}[t]{0.45\textwidth}
        \centering
        \includegraphics[width=38.75px]{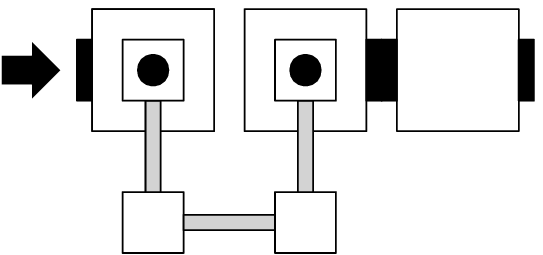}
        \caption{\label{fig:Gadget_write_even_digit_1} {\tt Write\_even\_digit\_1}}
    \end{subfigure} 
    
    \vspace{10pt}

    \begin{subfigure}[t]{0.45\textwidth}
        \centering
        \includegraphics[width=38.75px]{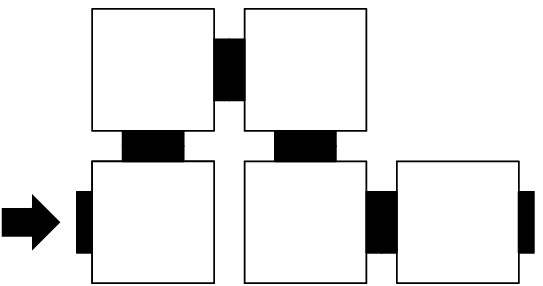}
        \caption{\label{fig:Gadget_write_odd_digit_0} {\tt Write\_odd\_digit\_0}}
    \end{subfigure}%
    \begin{subfigure}[t]{0.45\textwidth}
        \centering
        \includegraphics[width=38.75px]{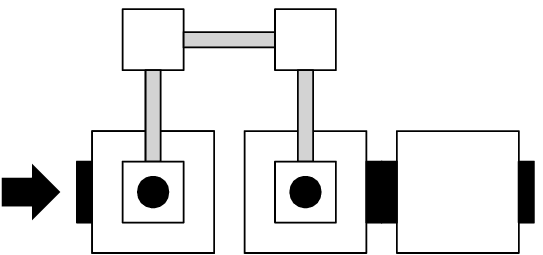}
        \caption{\label{fig:Gadget_write_odd_digit_1} {\tt Write\_odd\_digit\_1}}
    \end{subfigure}%
    \caption{\label{fig:Write_gadgets}
        The ``Write'' gadgets.
        These are the gadgets used in both the initial and all subsequent values of the counter to encode bits of a digit.
        Regardless of the parity of the digits, bits with a value of 0 are encoded using the $z = 0$ plane, and bits with a value of 1 are encoded using the $z = 1$ plane.
    }
\end{figure}

%
%
\begin{figure}[!ht]
    \centering

    \begin{subfigure}[t]{0.47\textwidth}
        \centering
        \includegraphics[width=38.75px]{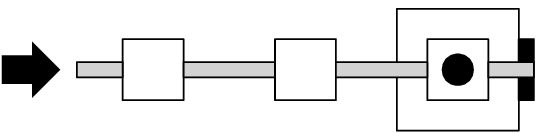}
        \caption{\label{fig:Gadget_read_non_MSB_0} {\tt Read\_non\_MSB\_0}}
    \end{subfigure}%
    \begin{subfigure}[t]{0.47\textwidth}
        \centering
        \includegraphics[width=38.75px]{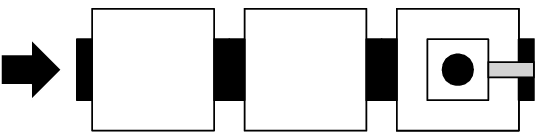}
        \caption{\label{fig:Gadget_read_non_MSB_1} {\tt Read\_non\_MSB\_1}}
    \end{subfigure}%
    
     \vspace{10pt}

    \begin{subfigure}[t]{0.47\textwidth}
        \centering
        \includegraphics[width=38.75px]{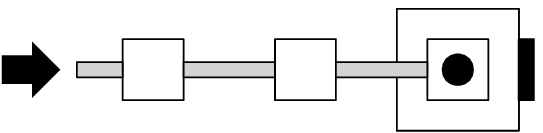}
        \caption{\label{fig:Gadget_read_MSB_0} {\tt Read\_MSB\_0}}
    \end{subfigure}%
    \begin{subfigure}[t]{0.47\textwidth}
        \centering
        \includegraphics[width=38.75px]{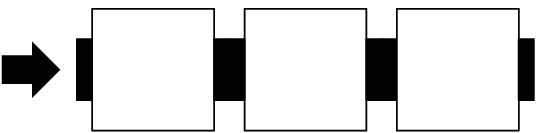}
        \caption{\label{fig:Gadget_read_MSB_1} {\tt Read\_MSB\_1}}
    \end{subfigure}%
    \caption{\label{fig:Read_gadgets}%
        The ``Read'' gadgets.
        These gadgets are used by the counter to read the bits of a digit. 
        Since a bit with a value of 0 is encoded using the $z = 0$ plane, a gadget that reads a 0 begins self-assembling in the $z = 1$ plane.
        Similarly, a bit with a value of 1 is read by a gadget that begins self-assembling in the $z = 0$ plane.
    }
\end{figure}

%
%
\begin{figure}[!ht]
    \centering
    \begin{subfigure}[t]{0.24\textwidth}
        \centering
        \includegraphics[width=43.0px]{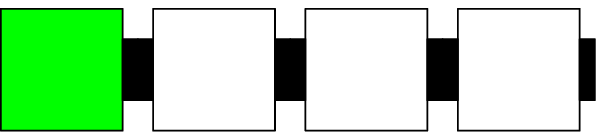}
        \caption{\label{fig:Gadget_seed_start} {\tt Seed\_start}}
    \end{subfigure}%
    \begin{subfigure}[t]{0.31\textwidth}
        \centering
        \includegraphics[width=71.5px]{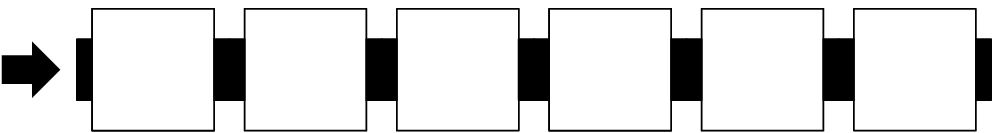}
        \caption{\label{fig:Gadget_seed_even_digit_to_odd_digit} {\tt Seed\_even\_digit\_to\_odd\_digit}}
    \end{subfigure}\hspace{0.05\textwidth}%
    \begin{subfigure}[t]{0.4\textwidth}
        \centering
        \includegraphics[width=32.0px]{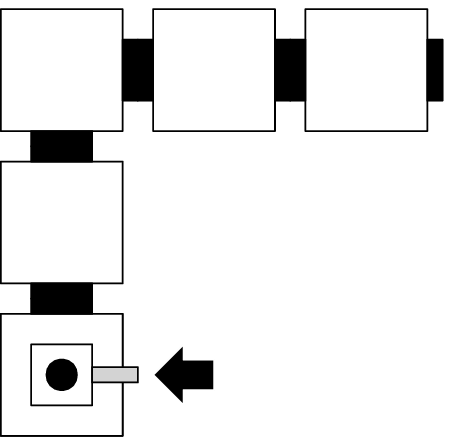}
        \caption{\label{fig:Gadget_seed_to_next_significant_digit_region} {\tt Seed\_to\_next\_significant\_digit\_region}}
    \end{subfigure}%
    \caption{\label{fig:Seed_gadgets}
        The ``Seed'' or ``initial value'' gadgets.
        These gadgets are used exclusively in the self-assembly of the initial value.
    }
\end{figure}

%
%
\begin{figure}[!ht]
\centering
    \begin{subfigure}[t]{0.45\textwidth}
        \centering
        \includegraphics[width=16.75px]{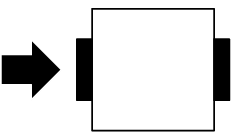}
        \caption{\label{fig:Gadget_single_tile} {\tt Single\_tile}}
    \end{subfigure}%
    \begin{subfigure}[t]{0.45\textwidth}
        \centering
        \includegraphics[width=16.75px]{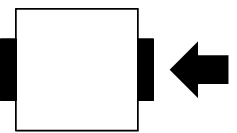}
        \caption{\label{fig:Gadget_single_tile_opposite} {\tt Single\_tile\_opposite}}
    \end{subfigure}%
    \caption{\label{fig:Hardcoded_length_gadgets}
        The ``Hardcoded-length spacer'' gadgets.
        These gadgets are single tile gadgets used throughout the construction.
        Except when used in the initial value, these gadgets never carry information about the bits of the counter, which is key because these gadgets self-assemble in a path whose length depends on $m$.
    }
\end{figure}

%
%
\begin{figure}[!ht]
    \centering
    \begin{subfigure}[t]{0.47\textwidth}
        \centering
        \includegraphics[width=16.75px]{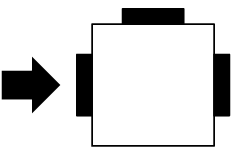}
        \caption{\label{fig:Gadget_repeating_after_even_digit} {\tt Repeating\_after\_even\_digit}}
    \end{subfigure}%
    \begin{subfigure}[t]{0.47\textwidth}
        \centering
        \includegraphics[width=48.5px]{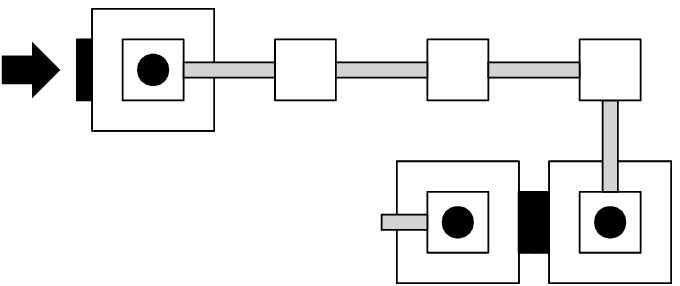}
        \caption{\label{fig:Gadget_stopper_after_odd_digit} {\tt Stopper\_after\_odd\_digit}}
    \end{subfigure}%

	 \vspace{10pt}
	 
    \begin{subfigure}[t]{0.47\textwidth}
        \centering
        \includegraphics[width=16.75px]{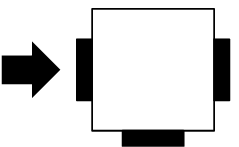}
        \caption{\label{fig:Gadget_repeating_after_odd_digit} {\tt Repeating\_after\_odd\_digit}}
    \end{subfigure}%
    \begin{subfigure}[t]{0.47\textwidth}
        \centering
        \includegraphics[width=48.5px]{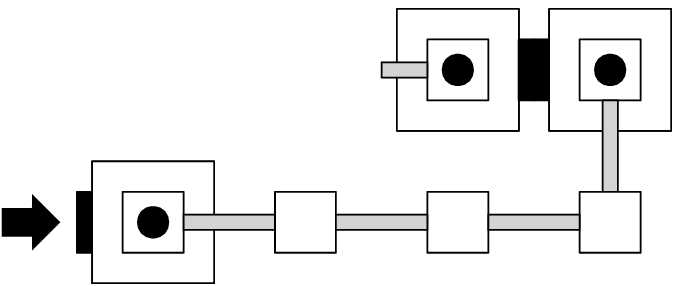}
        \caption{\label{fig:Gadget_stopper_after_even_digit} {\tt Stopper\_after\_even\_digit}}
    \end{subfigure}%
    
     \vspace{10pt}

    \begin{subfigure}[t]{0.47\textwidth}
        \centering
        \includegraphics[width=97.75px]{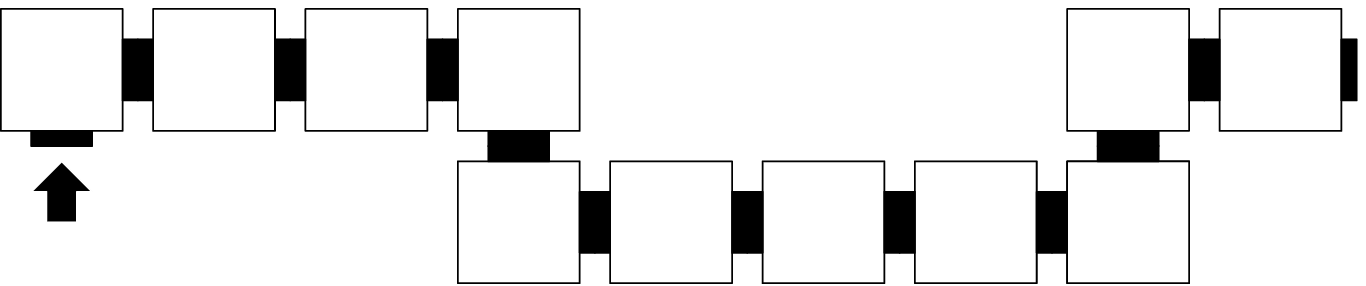}
        \caption{\label{fig:Gadget_at_stopper_after_odd_digit} {\tt At\_stopper\_after\_odd\_digit}}
    \end{subfigure}%
    \begin{subfigure}[t]{0.47\textwidth}
        \centering
        \includegraphics[width=54.0px]{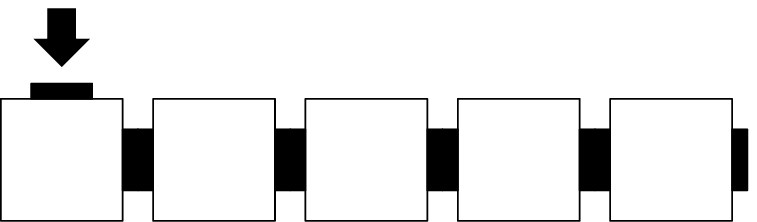}
        \caption{\label{fig:Gadget_at_stopper_after_even_digit} {\tt At\_stopper\_after\_even\_digit}}
    \end{subfigure}%
    \caption{\label{fig:Blocking_based_gadgets}
        The ``Blocking-based spacer'' gadgets.
        These gadgets work together so that the gadgets in Figures~\ref{sub@fig:Gadget_repeating_after_even_digit}~and~\ref{sub@fig:Gadget_repeating_after_odd_digit} can self-assemble into arbitrary length paths until they are eventually blocked by the gadgets in Figures~\ref{sub@fig:Gadget_stopper_after_odd_digit}~and~\ref{sub@fig:Gadget_stopper_after_even_digit}, respectively.
        Once the gadgets in Figures~\ref{sub@fig:Gadget_repeating_after_even_digit} and~\ref{sub@fig:Gadget_repeating_after_odd_digit} are blocked by the gadgets in Figures~\ref{sub@fig:Gadget_stopper_after_odd_digit}~and~\ref{sub@fig:Gadget_stopper_after_even_digit}, the gadgets in Figures~\ref{sub@fig:Gadget_at_stopper_after_odd_digit}~and~\ref{sub@fig:Gadget_at_stopper_after_even_digit} can self-assemble, respectively.
    }
\end{figure}

%
%
\begin{figure}[!ht]
    \centering
    \begin{subfigure}[t]{0.47\textwidth}
        \centering
        \includegraphics[width=16.75px]{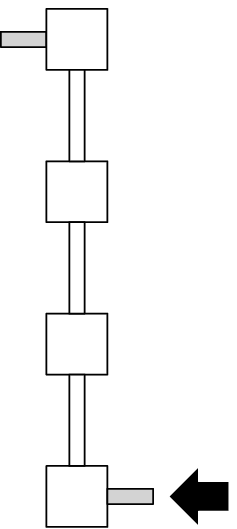}
        \caption{\label{fig:Gadget_between_digits} {\tt Between\_digits}}
    \end{subfigure}%
    \begin{subfigure}[t]{0.47\textwidth}
        \centering
        \includegraphics[width=38.75px]{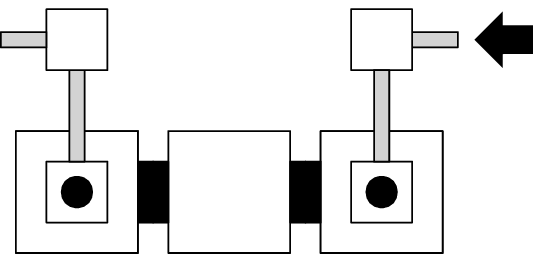}
        \caption{\label{fig:Gadget_between_digit_regions} {\tt Between\_digit\_regions}}
    \end{subfigure}%
    
     \vspace{10pt}

    \begin{subfigure}[t]{0.47\textwidth}
        \centering
        \includegraphics[width=16.75px]{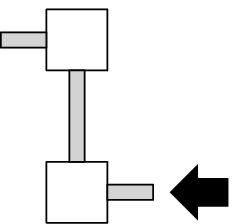}
        \caption{\label{fig:Gadget_at_MSB_of_odd_digit} {\tt At\_MSB\_of\_odd\_digit}}
    \end{subfigure}%
    \begin{subfigure}[t]{0.47\textwidth}
        \centering
        \includegraphics[width=15.5px]{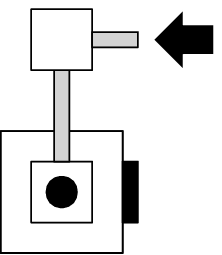}
        \caption{\label{fig:Gadget_start_read_odd_digit} {\tt Start\_read\_odd\_digit}}
    \end{subfigure}%
    
     \vspace{10pt}

    \begin{subfigure}[t]{0.47\textwidth}
        \centering
        \includegraphics[width=15.5px]{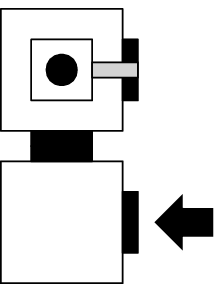}
        \caption{\label{fig:Gadget_start_digit_region} {\tt Start\_digit\_region}}
    \end{subfigure}%
    \begin{subfigure}[t]{0.47\textwidth}
        \centering
        \includegraphics[width=16.75px]{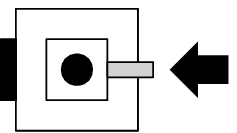}
        \caption{\label{fig:Gadget_z_1_to_z_0} {\tt Z1\_to\_z0}}
    \end{subfigure}
    \caption{\label{fig:Transition_gadgets}
        The ``Transition'' gadgets.
        These gadgets are used by the counter to transition after reading/writing one digit to begin reading/writing the next digit, within the same value.
    }
\end{figure}

%
%
\begin{figure}[!ht]
    \centering
    \begin{subfigure}[t]{0.32\textwidth}
        \centering
        \includegraphics[width=24.25px]{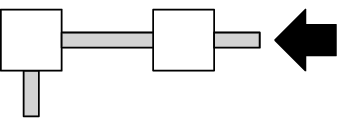}
        \caption{\label{fig:Gadget_return_turn_corner} {\tt Reset\_turn\_corner}}
    \end{subfigure}%
    \begin{subfigure}[t]{0.32\textwidth}
        \centering
        \includegraphics[width=4.5px]{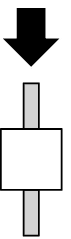}
        \caption{\label{fig:Gadget_return_single_tile} {\tt Reset\_single\_tile}}
    \end{subfigure}%
    \begin{subfigure}[t]{0.32\textwidth}
        \centering
        \includegraphics[width=40.75px]{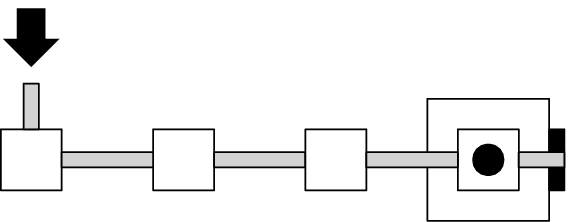}
        \caption{\label{fig:Gadget_return_read_even_digit} {\tt Reset\_read\_even\_digit}}
    \end{subfigure}%
    \caption{\label{fig:Reset_for_next_increment_gadgets}
        The ``Reset'' gadgets.
        These gadgets reset the counter to begin the next increment step. 
    }
\end{figure}

\clearpage

%
%
\begin{figure}[!t]
    \centering
    \begin{subfigure}[t]{0.22\textwidth}
        \centering
        \includegraphics[width=43.0px]{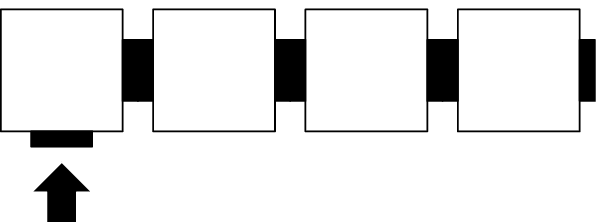}
        \caption{\label{fig:Gadget_at_special_stopper} {\tt At\_special\_stopper}}
    \end{subfigure}%
    \begin{subfigure}[t]{0.28\textwidth}
        \centering
        \includegraphics[width=59.5px]{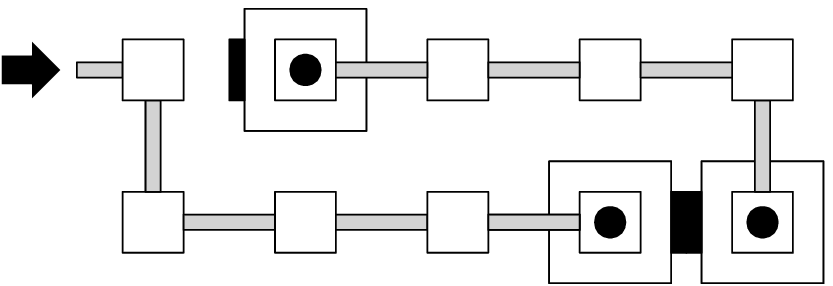}
        \caption{\label{fig:Gadget_special_stopper} {\tt Special\_stopper}}
    \end{subfigure}%
    \begin{subfigure}[t]{0.42\textwidth}
        \centering
        \includegraphics[width=16.75px]{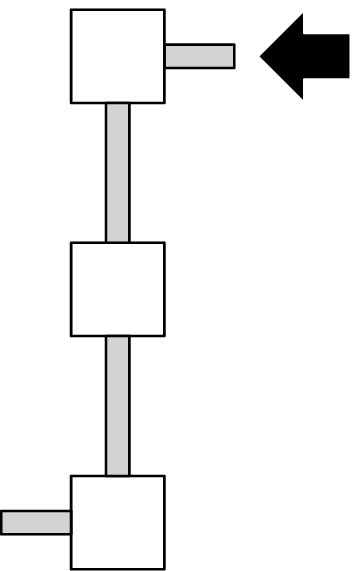}
        \caption{\label{fig:Gadget_special_at_MSB_of_most_significant_digit} {\tt Special\_at\_MSB\_of\_most\_significant\_digit}}
    \end{subfigure}%
    \caption{\label{fig:Special_case_gadgets}
        The ``Special case'' gadgets.
        These gadgets are used only if $k \mod 4 \in \{2, 3\}$, as they are specific to a special case digit region.
    }
\end{figure}

We now create the tile types for our construction. What follows is a list of ``Create'' statements in which specific gadgets are instantiated from the general gadgets in Figures~\ref{fig:Write_gadgets} through 
\ref{fig:Special_case_gadgets}. 

%
%
Create
\begin{align*}
        \texttt{Seed\_start}( & \! \left< {\tt seed\_write\_even\_digit}, 0, 0 \right> )
\end{align*}
from the general gadget shown in Figure~\ref{fig:Gadget_seed_start}.
This step creates the gadget shown in Figure~\ref{fig:Initial_overview_seed_start}.
A single gadget is created by this step.
%

%
%
For each $i = 0, \ldots, \left \lfloor \frac{w}{2} \right \rfloor - 1$, where $i$ ranges over indices of the digit regions,
\begin{itemize}
    \item For each $j = 0, \ldots, m - 2$, where $j$ ranges over the indices of a digit's bits (except for the most significant bit), create
    \begin{align*}
        {\tt Write\_even\_digit\_0}( & \! \left< {\tt seed\_write\_even\_digit}, 2i, j  \right> \!, \\
                                     & \! \left< {\tt seed\_write\_even\_digit}, 2i, j + 1 \right> )
    \end{align*}
    from the general gadget shown in Figure~\ref{fig:Gadget_write_even_digit_0}, if the $j^{th}$ bit of $d_{2i}$ (starting with $j=0$ for the least significant bit) is 0, otherwise create
    \begin{align*}
        {\tt Write\_even\_digit\_1}( & \! \left< {\tt seed\_write\_even\_digit}, 2i, j  \right> \!, \\
                                     & \! \left< {\tt seed\_write\_even\_digit}, 2i, j + 1  \right> )
    \end{align*}
    from the general gadget shown in Figure~\ref{fig:Gadget_write_even_digit_1}.
    This step creates gadgets that correspond to all but the last gadget to self-assemble in Figure~\ref{fig:Initial_overview_write_even_digits}.
    These are the non-most significant bits of the most significant even digit.
    The total number of gadgets created by this step is
    $O(k m)$.

    \item Create
    \begin{align*}
        {\tt Write\_even\_digit\_0}( & \! \left< {\tt seed\_write\_even\_digit},                2i, m - 1 \right> \!, \\
                                     & \! \left< {\tt seed\_even\_digit\_to\_odd\_digit}, 2i                    \right> )
    \end{align*}
    from the general gadget shown in Figure~\ref{fig:Gadget_write_even_digit_0}, if the most significant bit of $d_{2i}$ is 0, otherwise create
    \begin{align*}
        {\tt Write\_even\_digit\_1}( & \! \left< {\tt seed\_write\_even\_digit},                2i, m - 1 \right> \!, \\
                                     & \! \left< {\tt seed\_even\_digit\_to\_odd\_digit}, 2i                    \right> )
    \end{align*}
    from the general gadget shown in Figure~\ref{fig:Gadget_write_even_digit_1}.
    This step creates a gadget that corresponds to the last gadget to self-assemble in Figure~\ref{fig:Initial_overview_write_even_digits}.
    These are the most significant bits of the even digits.
    The total number of gadgets created by this step is
    $O(k)$.
\end{itemize}

%
%
For each $i = 0, \ldots, \left \lfloor \frac{w}{2} \right \rfloor - 1$, create
\begin{align*}
    {\tt Seed\_even\_digit\_to\_odd\_digit}( & \! \left< {\tt seed\_even\_digit\_to\_odd\_digit}, 2i                      \right> \!, \\
                                             & \! \left< {\tt seed\_write\_odd\_digit},                 2i + 1,  -1 \right> )
\end{align*}
from the general gadget shown in Figure~\ref{fig:Gadget_seed_even_digit_to_odd_digit}.
This step creates gadgets that correspond to the gadget shown in Figure~\ref{fig:Initial_overview_seed_even_digit_to_odd_digit}.
The total number of gadgets created by this step is
$O(k)$.
%

%
%
For each $i = 0, \ldots, \left \lfloor \frac{w}{2} \right \rfloor - 2$, create
\begin{align*}
    {\tt Write\_odd\_digit\_0}( & \! \left< {\tt seed\_write\_odd\_digit}, 2i + 1, -1 \right> \!, \\
                                & \! \left< {\tt seed\_write\_odd\_digit}, 2i + 1, 0 \right> )
\end{align*}
from the general gadget shown in Figure~\ref{fig:Gadget_write_odd_digit_0}.
This step creates gadgets that correspond to the first gadget to self-assemble in Figure~\ref{fig:Initial_overview_write_odd_digits}.
These are the indicator bits for the non-most significant odd digits.
The total number of gadgets created by this step is
$O(k)$.
%

%
%
Create
\begin{align*}
    {\tt Write\_odd\_digit\_1} & \Bigg( \Bigg. \! \! \left< {\tt seed\_write\_odd\_digit}, 2\left\lfloor \frac{w}{2} \right \rfloor -1,  -1 \right>  \!, \\
                               & \;                  \left< {\tt seed\_write\_odd\_digit}, 2\left\lfloor \frac{w}{2} \right \rfloor -1,   0 \right> \! \! \Bigg. \Bigg)
\end{align*}
from the general gadget shown in Figure~\ref{fig:Gadget_write_odd_digit_1}.
This step creates the gadget that corresponds to the first gadget to self-assemble in Figure~\ref{fig:Initial_overview_write_odd_digits}, if the current digit region is the most significant (general) one, or the second most significant digit, if $k \mod 4 \in \{2,3\}$.
This is the indicator bit for the most significant odd digit.
A single gadget is created by this step.

%
%
For each $i = 0, \ldots, \left \lfloor \frac{w}{2} \right \rfloor - 1$:
\begin{itemize}
    \item For each $j = 0,\ldots, m - 2$: create
    \begin{align*}
        {\tt Write\_odd\_digit\_0}( & \! \left< {\tt seed\_write\_odd\_digit}, 2i + 1, j    \right> \!, \\
                                    & \! \left< {\tt seed\_write\_odd\_digit}, 2i + 1, j + 1 \right> )
    \end{align*}
    from the general gadget shown in Figure~\ref{fig:Gadget_write_odd_digit_0}, if the $j^{th}$ bit of $d_{2i + 1}$ (starting with $j=0$ for the least significant bit) is 0, otherwise create
    \begin{align*}
        {\tt Write\_odd\_digit\_1}( & \! \left< {\tt seed\_write\_odd\_digit}, 2i + 1, j     \right> \!, \\
                                    & \! \left< {\tt seed\_write\_odd\_digit}, 2i + 1, j + 1 \right> )
    \end{align*}
    from the general gadget shown in Figure~\ref{fig:Gadget_write_odd_digit_1}.
    This step creates gadgets that correspond to all but the first and last gadgets to self-assemble in Figure~\ref{fig:Initial_overview_write_odd_digits}.
    These are the non-most significant bits of the odd digits.
    The total number of gadgets created by this step is
    $O(k m)$.
    \item Create
    \begin{align*}
        {\tt Write\_odd\_digit\_0}( & \! \left< {\tt seed\_write\_odd\_digit}, 2i + 1, m - 1 \right> \!, \\
                                    & \! \left< {\tt seed\_single\_tile\_0},      2i + 1, 0  \right> )
    \end{align*}
    from the general gadget shown in Figure~\ref{fig:Gadget_write_odd_digit_0}, if the most significant bit of $d_{2i + 1}$ is 0, otherwise create
    \begin{align*}
        {\tt Write\_odd\_digit\_1}( & \! \left< {\tt seed\_write\_odd\_digit}, 2i + 1, m - 1 \right> \!, \\
                                    & \! \left< {\tt seed\_single\_tile\_0},      2i + 1, 0 \right> )
    \end{align*}
    from the general gadget shown in Figure~\ref{fig:Gadget_write_odd_digit_1}.
    This step creates a gadget that corresponds to the last gadget to self-assemble in Figure~\ref{fig:Initial_overview_write_odd_digits}.
    These are the most significant bits of the odd digits.
    The total number of gadgets created by this step is
    $O(k)$.
\end{itemize}

%
%
For each $i = 0, \ldots, \left \lfloor \frac{w}{2} \right \rfloor - 1$:
\begin{itemize}
    \item For each $j = 0, \ldots,\left(\left(8 + 3m + 1\right) - (2 + 4)\right) - 2$, create
    \begin{align*}
        {\tt Single\_tile}( & \! \left< {\tt seed\_single\_tile\_0}, 2i + 1, j   \right> \!, \\
                            & \! \left< {\tt seed\_single\_tile\_0}, 2i + 1, j + 1  \right> )
    \end{align*}
    from the general gadget shown in Figure~\ref{fig:Gadget_single_tile}.
    This step creates gadgets that correspond to all but the last gadget to self-assemble in Figure~\ref{fig:Initial_overview_single_tile_0}.
    The total number of gadgets created by this step is
    $O(k m)$.
    \item Create
    \begin{align*}
        {\tt Single\_tile}( & \! \left< {\tt seed\_single\_tile\_0},            2i + 1, \left(\left(8 + 3m + 1\right) - (2 + 4)\right) - 1 \right> \!, \\
                            & \! \left< {\tt seed\_stopper\_after\_odd\_digit}, 2i + 1                                                     \right> )
    \end{align*}
    from the general gadget shown in Figure~\ref{fig:Gadget_single_tile}.
    This step creates a gadget that corresponds to the last gadget to self-assemble in Figure~\ref{fig:Initial_overview_single_tile_0}.
    The total number of gadgets created by this step is
    $O(k)$.
    \item Create
    \begin{align*}
        {\tt Stopper\_after\_odd\_digit}( & \! \left< {\tt seed\_stopper\_after\_odd\_digit}, 2i + 1  \right> \!, \\
                                          & \! \left< {\tt seed\_single\_tile\_opposite\_0},  2i + 1, 0  \right> )
    \end{align*}
    from the general gadget shown in Figure~\ref{fig:Gadget_stopper_after_odd_digit}.
    This step creates a gadget that corresponds to the gadget shown in Figure~\ref{fig:Initial_overview_stopper_after_odd_digit}.
    The total number of gadgets created by this step is
    $O(k)$.
    \item For each $j = 0, \ldots, \left(6m + 9\right) - 2$, create
    \begin{align*}
        {\tt Single\_tile\_opposite}( & \! \left< {\tt seed\_single\_tile\_opposite\_0}, 2i + 1, j  \right> \!, \\
                                      & \! \left< {\tt seed\_single\_tile\_opposite\_0}, 2i + 1, j + 1  \right> )
    \end{align*}
    from the general gadget shown in Figure~\ref{fig:Gadget_single_tile_opposite}.
    This step creates gadgets that correspond to all but the last gadget to self-assemble in Figure~\ref{fig:Initial_overview_single_tile_opposite_0}.
    The total number of gadgets created by this step is
    $O(k m)$.
    \item Create
    \begin{align*}
        {\tt Single\_tile\_opposite}( & \! \left< {\tt seed\_single\_tile\_opposite\_0}, 2i + 1, \left(6 m + 9\right) - 1  \right> \!, \\
                                      & \! \left< {\tt seed\_between\_digits},           2i + 1                                       \right> )
    \end{align*}
    from the general gadget shown in Figure~\ref{fig:Gadget_single_tile_opposite}.
    This step creates a gadget that corresponds to the last gadget to self-assemble in Figure~\ref{fig:Initial_overview_single_tile_opposite_0}.
    The total number of gadgets created by this step is
    $O(k)$.
    \item Create
    \begin{align*}
        {\tt Between\_digits}( & \! \left< {\tt seed\_between\_digits},           2i + 1                \right> \!, \\
                               & \! \left< {\tt seed\_single\_tile\_opposite\_1}, 2i + 1, 0 \right> )
    \end{align*}
    from the general gadget shown in Figure~\ref{fig:Gadget_between_digits}.
    This step create a gadget that corresponds to the gadget shown in Figure~\ref{fig:Initial_overview_between_digits}.
    The total number of gadgets created by this step is
    $O(k)$.
    \item For each $j = 0, \ldots, \left(3m + 6\right) - 2$, create
    \begin{align*}
        {\tt Single\_tile\_opposite}( & \! \left< {\tt seed\_single\_tile\_opposite\_1}, 2i + 1, j  \right> \!, \\
                                      & \! \left< {\tt seed\_single\_tile\_opposite\_1}, 2i + 1, j + 1  \right> )
    \end{align*}
    from the general gadget shown in Figure~\ref{fig:Gadget_single_tile_opposite}.
    This step creates gadgets that correspond to all but the last gadget to self-assemble in Figure~\ref{fig:Initial_overview_single_tile_opposite_1}.
    The total number of gadgets created by this step is
    $O(k m)$.
\end{itemize}


%
%

For each $i = 0, \ldots, \left\lfloor \frac{w}{2} \right \rfloor - 2$:
\begin{itemize}
    \item Create
    \begin{align*}
        {\tt Single\_tile\_opposite}( & \! \left< {\tt seed\_single\_tile\_opposite\_1},                  2i + 1, \left(3m + 6 \right) - 1   \right> \!, \\
                                      & \! \left< {\tt seed\_to\_next\_significant\_digit\_region}, 2i + 1                                        \right> )
    \end{align*}
    from the general gadget shown in Figure~\ref{fig:Gadget_single_tile_opposite}.
    This step creates a gadget that corresponds to the last gadget to self-assemble in Figure~\ref{fig:Initial_overview_single_tile_opposite_1}.
    The total number of gadgets created by this step is
    $O(k)$.
    \item Create
    \begin{align*}
        {\tt Seed\_to\_next\_significant\_digit\_region}( & \! \left< {\tt seed\_to\_next\_significant\_digit\_region}, 2i + 1      \right> \!, \\
                                                          & \! \left< {\tt seed\_write\_even\_digit},                         2(i + 1), 0 \right> )
    \end{align*}
    from the general gadget shown in Figure~\ref{fig:Gadget_seed_to_next_significant_digit_region}.
    This step creates a gadget that corresponds to the gadget shown in Figure~\ref{fig:Initial_overview_seed_to_next_significant_digit_region}.
    The total number of gadgets created by this step is
    $O(k)$.
\end{itemize}

%
%
Create
\begin{align*}
    {\tt Single\_tile\_opposite}( & \! \left< {\tt seed\_single\_tile\_opposite\_1}, w - 1, \left(3m + 6 \right) - 1\right> \!, \\
                                  & \! \left< {\tt reset\_turn\_corner}                                                   \right> )
\end{align*}
from the general gadget shown in Figure~\ref{fig:Gadget_single_tile_opposite}.
This step creates the gadget that corresponds to the gadget from which the gadget shown in Figure~\ref{fig:General_overview_return_turn_corner} self-assembles. A single gadget is created by this step.
%

%
%
If $k \mod 4 \in \{0,1\}$, create
\begin{align*}
    {\tt Reset\_turn\_corner}( & \! \left< {\tt reset\_turn\_corner}    \right> \!, \\
                                & \! \left< {\tt reset\_single\_tile}, 0 \right> )
\end{align*}
from the general gadget shown in Figure~\ref{fig:Gadget_return_turn_corner}. 
This step creates the gadget that corresponds to the gadget shown in Figure~\ref{fig:General_overview_return_turn_corner}. This step is conditional because we create a special {\tt Reset\_turn\_corner} gadget when $k \mod 4 \in \{2,3\}$.  A single gadget is created by this step.
If $k \mod 4 \in \{0,1\}$, create
\begin{align*}
    {\tt Reset\_single\_tile}( & \! \left< {\tt reset\_single\_tile}, 0    \right> \!, \\
                                & \! \left< {\tt reset\_single\_tile}, 1 \right> )
\end{align*}
from the general gadget shown in Figure~\ref{fig:Gadget_return_single_tile}. 
This step creates the gadget that corresponds to the first gadget to self-assemble in Figure~\ref{fig:General_overview_return_single_tile}. This step is conditional because, when $k \mod 4 \in \{2,3\}$, this gadget is not used. A single gadget is created by this step.
For each $j = 1, \ldots, (k - 2) - 2$, create
\begin{align*}
    {\tt Reset\_single\_tile}( & \! \left< {\tt reset\_single\_tile}, j     \right> \!, \\
                                & \! \left< {\tt reset\_single\_tile}, j + 1 \right> )
\end{align*}
from the general gadget shown in Figure~\ref{fig:Gadget_return_single_tile}.
This step creates gadgets that correspond to all but the last gadget to self-assemble in Figure~\ref{fig:General_overview_return_single_tile}.
The total number of gadgets created by this step is
$O(k)$.
Create
\begin{align*}
    {\tt Reset\_single\_tile}( & \! \left< {\tt reset\_single\_tile}, k - 3 \right> \!, \\
                                & \! \left< {\tt reset\_read\_even\_digit}   \right> )
\end{align*}
from the general gadget shown in Figure~\ref{fig:Gadget_return_single_tile}.
This step creates the gadget that corresponds to the last gadget to self-assemble in Figure~\ref{fig:General_overview_return_single_tile}.
A single gadget is created by this step.
If $M = 2$, create
\begin{align*}
    {\tt Reset\_read\_even\_digit}( & \! \left< {\tt reset\_read\_even\_digit} \right> \!, \\
                                     & \! \left< {\tt read\_MSB}, 1,   c, p      \right> \!, \\
                                     & \! \left< {\tt read\_MSB}, 0,   c, p      \right> )
\end{align*}
from the general gadget shown in Figure~\ref{fig:Gadget_return_read_even_digit}, otherwise create
\begin{align*}
    {\tt Reset\_read\_even\_digit}( & \! \left< {\tt reset\_read\_even\_digit} \right> \!, \\
                                     & \! \left< {\tt read\_non\_MSB}, 1,   c, p \right> \!, \\
                                     & \! \left< {\tt read\_non\_MSB}, 0,   c, p \right> )
\end{align*}
from the general gadget shown in Figure~\ref{fig:Gadget_return_read_even_digit}, where $c = 1$ is a value indicating that there is an incoming arithmetic carry and $p=0$ is the parity of the digit being read.
This step creates a gadget that corresponds to the gadget shown in Figure~\ref{fig:General_overview_return_read_even_digit}.
The total number of gadgets created by this step is
$O \left( 1 \right)$.
We will now create the gadgets that self-assemble in a general digit region.
%

%
%
For each $c \in \{0, 1\}$:
\begin{itemize}
    \item For each $x \in \{0, 1\}^{i}$, for $1 \leq i < m - 1$, create
    \begin{align*}
        {\tt Read\_non\_MSB\_0}( & \! \left< {\tt read\_non\_MSB},  x, c, 0 \right> \!, \\
                                 & \! \left< {\tt read\_non\_MSB}, x1, c, 0 \right> \!, \\ 
                                 & \! \left< {\tt read\_non\_MSB}, x0, c, 0 \right> )      
    \end{align*}
    from the general gadget shown in Figure~\ref{fig:Gadget_read_non_MSB_0} if $x$ ends with $0$, otherwise create
    \begin{align*}
        {\tt Read\_non\_MSB\_1}( & \! \left< {\tt read\_non\_MSB},  x, c, 0 \right> \!, \\
                                 & \! \left< {\tt read\_non\_MSB}, x1, c, 0 \right> \!, \\ 
                                 & \! \left< {\tt read\_non\_MSB}, x0, c, 0 \right> )      
    \end{align*}
    from the general gadget shown in Figure~\ref{fig:Gadget_read_non_MSB_1}.
    This step creates gadgets that correspond to all but the last gadget to self-assemble in Figure~\ref{fig:General_overview_read_non_MSB}.
    The total number of gadgets created by this step is
    $O(M)$.
    Note that our geometric scheme used for the digits (both even and odd) positions the bits in Little-Endian order, i.e., with the least significant bit to the left.
    So, once a digit has been completely read by {\tt Read\_non\_MSB} and {\tt Read\_MSB} gadgets, since each bit is appended to the right of the bits that were already read, the end result is a binary string that preserves the original order of the bits, i.e., the bits stay in Little-Endian.
    %

    \item For each $x \in \{0, 1\}^{m - 1}$, create
    \begin{align*}
        {\tt Read\_non\_MSB\_0}( & \! \left< {\tt read\_non\_MSB}, x,  c, 0 \right> \!, \\
                                 & \! \left< {\tt read\_MSB},     x1,  c, 0 \right> \!, \\ 
                                 & \! \left< {\tt read\_MSB},     x0,  c, 0 \right> )      
    \end{align*}
    from the general gadget shown in Figure~\ref{fig:Gadget_read_non_MSB_0} if $x$ ends with $0$, otherwise create
    \begin{align*}
        {\tt Read\_non\_MSB\_1}( & \! \left< {\tt read\_non\_MSB}, x, c, 0 \right> \!, \\
                                 & \! \left< {\tt read\_MSB},     x1, c, 0 \right> \!, \\ 
                                 & \! \left< {\tt read\_MSB},     x0, c, 0 \right> )      
    \end{align*}
    from the general gadget shown in Figure~\ref{fig:Gadget_read_non_MSB_1}.
    This step creates gadgets that correspond to the last gadget to self-assemble in Figure~\ref{fig:General_overview_read_non_MSB}.
    The total number of gadgets created by this step is
    $O(M)$.
    %

    \item For each $x \in \{0, 1\}^{m}$, create
    \begin{align*}
        {\tt Read\_MSB\_0}( & \! \left< {\tt read\_MSB},                     x, c, 0 \right> \!, \\
                            & \! \left< {\tt repeating\_after\_even\_digit}, x, c    \right> )
    \end{align*}
    from the general gadget shown in Figure~\ref{fig:Gadget_read_MSB_0} if $x$ ends with $0$, otherwise create
    \begin{align*}
        {\tt Read\_MSB\_1}( & \! \left< {\tt read\_MSB},                     x, c, 0 \right> \!, \\
                            & \! \left< {\tt repeating\_after\_even\_digit}, x, c    \right> )
    \end{align*}
    from the general gadget shown in Figure~\ref{fig:Gadget_read_MSB_1}.
    This step creates gadgets that correspond to the gadget shown in Figure~\ref{fig:General_overview_read_MSB}.
    The total number of gadgets created by this step is
    $O(M)$.
\end{itemize}

For each $c \in \{0, 1\}$:
\begin{itemize}
    \item For each $x \in \{0, 1\}^{m}$, create
    \begin{align*}
        {\tt Repeating\_after\_even\_digit}( & \! \left< {\tt repeating\_after\_even\_digit},  x, c \right> \!, \\
                                             & \! \left< {\tt at\_stopper\_after\_odd\_digit}, x, c \right> \!, \\ 
                                             & \! \left< {\tt repeating\_after\_even\_digit},  x, c \right> )      
    \end{align*}
    from the general gadget shown in Figure~\ref{fig:Gadget_repeating_after_even_digit}.
    This step creates gadgets that correspond to the gadget shown in Figure~\ref{fig:General_overview_repeating_after_even_digit}.
    The total number of gadgets created by this step is
    $O(M)$.
\end{itemize}

%
%
For each $x \in \{0, 1\}^{m}$:
\begin{itemize}
    \item Create
    \begin{align*}
        {\tt At\_stopper\_after\_odd\_digit}( & \! \left< {\tt at\_stopper\_after\_odd\_digit}, x, 0 \right> \!, \\
                                              & \! \left< {\tt write\_even\_digit},             x, 0 \right> )
    \end{align*}
    from the general gadget shown in Figure~\ref{fig:Gadget_at_stopper_after_odd_digit}. Note that the last argument in the encodings for the input and output glues corresponds to the value of $c$ from the previous {\tt Repeating\_after\_even\_digit} gadget. 
    This step creates a gadget that corresponds to the gadget shown in Figure~\ref{fig:General_overview_at_stopper_after_odd_digit}.
    The total number of gadgets created by this step is
    $O(M)$.
    \item When $(x + 1) \mod M = 0$, create
    \begin{align*}
        {\tt At\_stopper\_after\_odd\_digit}( & \! \left< {\tt at\_stopper\_after\_odd\_digit}, x, 1 \right> \!, \\
                                              & \! \left< {\tt write\_even\_digit},         0^{m}, 1 \right> )
    \end{align*}
    from the general gadget shown in Figure~\ref{fig:Gadget_at_stopper_after_odd_digit}.
    Otherwise, create
    \begin{align*}
        {\tt At\_stopper\_after\_odd\_digit}( & \! \left< {\tt at\_stopper\_after\_odd\_digit},  x, 1 \right> \!, \\
                                              & \! \left< {\tt write\_even\_digit},              z, 0 \right> )
    \end{align*}
    from the general gadget shown in Figure~\ref{fig:Gadget_at_stopper_after_odd_digit}, where $z \in \{0, 1\}^m$ is the zero-padded binary representation of the value $x + 1$.
    This step creates a gadget that corresponds to the gadget shown in Figure~\ref{fig:General_overview_at_stopper_after_odd_digit}.
    The total number of gadgets created by this step is
    $O(M)$.
\end{itemize}

%
%
For each $c \in \{0, 1\}$:
\begin{itemize}

    %
    %
    \item For each $x \in \{0, 1\}^{i}$, for $1 \leq i < m$, create
    \begin{align*}
        {\tt Write\_even\_digit\_0}( & \! \left< {\tt write\_even\_digit}, 0x, c \right> \!, \\
                                     & \! \left< {\tt write\_even\_digit},  x, c \right> )
    \end{align*}
    from the general gadget shown in Figure~\ref{fig:Gadget_write_even_digit_0} and create
    \begin{align*}
        {\tt Write\_even\_digit\_1}( & \! \left< {\tt write\_even\_digit}, 1x, c \right> \!, \\
                                     & \! \left< {\tt write\_even\_digit},  x, c \right> )
    \end{align*}
    from the general gadget shown in Figure~\ref{fig:Gadget_write_even_digit_1}.
    This step creates gadgets that correspond to all but the last gadget to self-assemble in Figure~\ref{fig:General_overview_write_even_digit}.
    The total number of gadgets created by this step is
    $O(M)$.
    %

    %
    %
    \item Create
    \begin{align*}
        {\tt Write\_even\_digit\_0}( & \! \left< {\tt write\_even\_digit},        0, c \right> \!, \\
                                     & \! \left< {\tt stopper\_after\_even\_digit},  c \right> )
    \end{align*}
    from the general gadget shown in Figure~\ref{fig:Gadget_write_even_digit_0}.
    This step creates a gadget that corresponds to the last gadget to self-assemble in Figure~\ref{fig:General_overview_write_even_digit}, which is the most significant bit.
    The total number of gadgets created by this step is
    $O \left( 1 \right)$.
    %

    %
    %
    \item Create \begin{align*}
        {\tt Write\_even\_digit\_1}( & \! \left< {\tt write\_even\_digit},       1, c \right> \!, \\
                                     & \! \left< {\tt stopper\_after\_even\_digit}, c \right> )
    \end{align*}
    from the general gadget shown in Figure~\ref{fig:Gadget_write_even_digit_1}.
    This step creates a gadget that corresponds to the last gadget to self-assemble in Figure~\ref{fig:General_overview_write_even_digit}, which is the most significant bit.
    The total number of gadgets created by this step is
    $O \left( 1 \right)$.
\end{itemize}

%
%
For each $c \in \{0, 1\}$:
\begin{itemize}
    %
    %
    \item Create
    \begin{align*}
        {\tt Stopper\_after\_even\_digit}( & \! \left< {\tt stopper\_after\_even\_digit}, c    \right> \!, \\
                                           & \! \left< {\tt single\_tile\_opposite\_0},   c, 0 \right> )
    \end{align*}
    from the general gadget shown in Figure~\ref{fig:Gadget_stopper_after_even_digit}.
    This step creates gadgets that correspond to the gadget shown in Figure~\ref{fig:General_overview_stopper_after_even_digit}.
    The total number of gadgets created by this step is
    $O \left( 1 \right)$.
    %

    \item For each $i = 0,\dots,(3m + 4) - 2$, create
    \begin{align*}
        {\tt Single\_tile\_opposite}( & \! \left< {\tt single\_tile\_opposite\_0}, c, i     \right> \!, \\
                                      & \! \left< {\tt single\_tile\_opposite\_0}, c, i + 1 \right> )
    \end{align*}
    from the general gadget shown in Figure~\ref{fig:Gadget_single_tile_opposite}.
    This step creates gadgets that correspond to all but the last gadget to self-assemble in Figure~\ref{fig:General_overview_single_tile_opposite_0}.
    The total number of gadgets created by this step is
    $O(m)$.
    %

    \item Create
    \begin{align*}
        {\tt Single\_tile\_opposite}( & \! \left< {\tt single\_tile\_opposite\_0}, c, (3m + 4) - 1 \right> \!, \\
                                      & \! \left< {\tt between\_digit\_regions},   c               \right> )
    \end{align*}
    from the general gadget shown in Figure~\ref{fig:Gadget_single_tile_opposite}.
    This step creates a gadget that corresponds to the last gadget to self-assemble in Figure~\ref{fig:General_overview_single_tile_opposite_0}.
    The total number of gadgets created by this step is
    $O \left( 1 \right)$.
    %

    \item Create
    \begin{align*}
        {\tt Between\_digit\_regions}( & \! \left< {\tt between\_digit\_regions},   c    \right> \!, \\
                                       & \! \left< {\tt single\_tile\_opposite\_1}, c, 0 \right> )
    \end{align*}
    from the general gadget shown in Figure~\ref{fig:Gadget_between_digit_regions}.
    This step creates a gadget that corresponds to the gadget shown in Figure~\ref{fig:General_overview_between_digit_regions}.
    The total number of gadgets created by this step is
    $O \left( 1 \right)$.
    %

    \item For each $i = 0,\dots,(3m + 7) - 2$, create
    \begin{align*}
        {\tt Single\_tile\_opposite}( & \! \left< {\tt single\_tile\_opposite\_1}, c, i     \right> \!, \\
                                      & \! \left< {\tt single\_tile\_opposite\_1}, c, i + 1 \right> )
    \end{align*}
    from the general gadget shown in Figure~\ref{fig:Gadget_single_tile_opposite}.
    This step creates gadgets that correspond to all but the last gadget to self-assemble in Figure~\ref{fig:General_overview_single_tile_opposite_1}.
    The total number of gadgets created by this step is
    $O(m)$.
    %

    \item Create
    \begin{align*}
        {\tt Single\_tile\_opposite}( & \! \left< {\tt single\_tile\_opposite\_1}, c, (3m + 7) - 1 \right> \!, \\
                                      & \! \left< {\tt at\_MSB\_of\_odd\_digit},   c               \right> )
    \end{align*}
    from the general gadget shown in Figure~\ref{fig:Gadget_single_tile_opposite}.
    This step creates a gadget that corresponds to the last gadget to self-assemble in Figure~\ref{fig:General_overview_single_tile_opposite_1}.
    The total number of gadgets created by this step is
    $O \left( 1 \right)$.
    %

    \item Create
    \begin{align*}
        {\tt At\_MSB\_of\_odd\_digit}( & \! \left< {\tt at\_MSB\_of\_odd\_digit},   c    \right> \!, \\
                                       & \! \left< {\tt single\_tile\_opposite\_2}, c, 0 \right> )
    \end{align*}
    from the general gadget shown in Figure~\ref{fig:Gadget_at_MSB_of_odd_digit}.
    This step creates a gadget that corresponds to the gadget  in Figure~\ref{fig:General_overview_at_MSB_of_odd_digit}
    The total number of gadgets created by this step is
    $O \left( 1 \right)$.
    %

    \item For each $i = 0,\dots,3(m + 1) - 2$, create
    \begin{align*}
        {\tt Single\_tile\_opposite}( & \! \left< {\tt single\_tile\_opposite\_2},  c, i     \right> \!, \\
                                      & \! \left< {\tt single\_tile\_opposite\_2},  c, i + 1 \right> )
    \end{align*}
    from the general gadget shown in Figure~\ref{fig:Gadget_single_tile_opposite}.
    This step creates gadgets that correspond to all but the last gadget to self-assemble in Figure~\ref{fig:General_overview_single_tile_opposite_2}.
    The total number of gadgets created by this step is
    $O(m)$.
    %

    \item Create
    \begin{align*}
        {\tt Single\_tile\_opposite}( & \! \left< {\tt single\_tile\_opposite\_2},  c, 3(m + 1) - 1 \right> \!, \\
                                      & \! \left< {\tt start\_read\_odd\_digit},    c               \right> )
    \end{align*}
    from the general gadget shown in Figure~\ref{fig:Gadget_single_tile_opposite}.
    This step creates a gadget that corresponds to the last gadget to self-assemble in Figure~\ref{fig:General_overview_single_tile_opposite_2}.
    The total number of gadgets created by this step is $O \left( 1 \right)$.
\end{itemize}

%
%
For each $c \in \{0, 1\}$:
\begin{itemize}
    \item Create
    \begin{align*}
        {\tt Start\_read\_odd\_digit}( & \! \left< {\tt start\_read\_odd\_digit}, c    \right> \!, \\
                                       & \! \left< {\tt read\_non\_MSB},       1, c, 1 \right> \!, \\
                                       & \! \left< {\tt read\_non\_MSB},       0, c, 1 \right> )
    \end{align*}
    from the general gadget shown in Figure~\ref{fig:Gadget_start_read_odd_digit}.
    This step creates a gadget that corresponds to the gadget shown in Figure~\ref{fig:General_overview_start_read_odd_digit}.
    The total number of gadgets created by this step is
    $O \left( 1 \right)$.
    %

    \item For each $x \in \{0, 1\}^{i}$, for $1 \leq i < m$, create
    \begin{align*}
        {\tt Read\_non\_MSB\_0}( & \! \left< {\tt read\_non\_MSB},  x, c, 1 \right> \!, \\
                                 & \! \left< {\tt read\_non\_MSB}, x1, c, 1 \right> \!, \\ 
                                 & \! \left< {\tt read\_non\_MSB}, x0, c, 1 \right> )      
    \end{align*}
    from the general gadget shown in Figure~\ref{fig:Gadget_read_non_MSB_0} if $x$ ends with $0$, otherwise create
    \begin{align*}
        {\tt Read\_non\_MSB\_1}( & \! \left< {\tt read\_non\_MSB},  x, c, 1 \right> \!, \\
                                 & \! \left< {\tt read\_non\_MSB}, x1, c, 1 \right> \!, \\ 
                                 & \! \left< {\tt read\_non\_MSB}, x0, c, 1 \right> )      
    \end{align*} from the general gadget shown in Figure~\ref{fig:Gadget_read_non_MSB_1}.
    This step creates gadgets that correspond to gadgets that are similar to all but the last gadget to self-assemble in Figure~\ref{fig:General_overview_read_non_MSB}, but the gadgets being created here are for odd digits.
    The total number of gadgets created by this step is
    $O(M)$.
    \item For each $x \in \{0, 1\}^{m}$, create
    \begin{align*}
        {\tt Read\_non\_MSB\_0}( & \! \left< {\tt read\_non\_MSB}, x,  c, 1 \right> \!, \\
                                 & \! \left< {\tt read\_MSB},     x1,  c, 1 \right> \!, \\ 
                                 & \! \left< {\tt read\_MSB},     x0,  c, 1 \right> )      
    \end{align*}
    from the general gadget shown in Figure~\ref{fig:Gadget_read_non_MSB_0} if $x$ ends with $0$, otherwise create
    \begin{align*}
        {\tt Read\_non\_MSB\_1}( & \! \left< {\tt read\_non\_MSB}, x,  c, 1 \right> \!, \\
                                 & \! \left< {\tt read\_MSB},     x1,  c, 1 \right> \!, \\ 
                                 & \! \left< {\tt read\_MSB},     x0,  c, 1 \right> )      
    \end{align*}
    from the general gadget shown in Figure~\ref{fig:Gadget_read_non_MSB_1}.
    This step creates gadgets that correspond to gadgets that are similar to the last gadget to self-assemble in Figure~\ref{fig:General_overview_read_non_MSB}, but the gadgets being created here are for odd digits.
    The total number of gadgets created by this step is
    $O(M)$.
    %

    \item For each $x \in \{0, 1\}^{m + 1}$, create
    \begin{align*}
        {\tt Read\_MSB\_0}( & \! \left< {\tt read\_MSB},                    x, c, 1 \right> \!, \\
                            & \! \left< {\tt repeating\_after\_odd\_digit}, x, c    \right> )
    \end{align*}
    from the general gadget shown in Figure~\ref{fig:Gadget_read_MSB_0} if $x$ ends with $0$, otherwise create
    \begin{align*}
        {\tt Read\_MSB\_1}( & \! \left< {\tt read\_MSB},                    x, c, 1 \right> \!, \\
                            & \! \left< {\tt repeating\_after\_odd\_digit}, x, c    \right> )
    \end{align*}
    from the general gadget shown in Figure~\ref{fig:Gadget_read_MSB_1}.
    This step creates gadgets that correspond to gadgets that are similar to the gadget shown in Figure~\ref{fig:General_overview_read_MSB}, but the gadgets being created here are for odd digits.
    The total number of gadgets created by this step is
    $O(M)$.
\end{itemize}

For each $c \in \{0, 1\}$:
\begin{itemize}
    \item For each $x \in \{0, 1\}^{m + 1}$, create
    \begin{align*}
        {\tt Repeating\_after\_odd\_digit}( & \! \left< {\tt repeating\_after\_odd\_digit},    x, c \right> \!, \\
                                            & \! \left< {\tt repeating\_after\_odd\_digit},    x, c \right> \!, \\ 
                                            & \! \left< {\tt at\_stopper\_after\_even\_digit}, x, c \right> )      
    \end{align*}
    from the general gadget shown in Figure~\ref{fig:Gadget_repeating_after_odd_digit}.
    This step creates gadgets that correspond to the gadgets shown in Figure~\ref{fig:General_overview_repeating_after_odd_digit}.
    The total number of gadgets created by this step is
    $O(M)$.
\end{itemize}

%
%
For each $x \in \{0, 1\}^{m}$ and each $b \in \{0, 1\}$, where $b$ corresponds to the indicator bit of an odd digit:
\begin{itemize}
    \item Create
    \begin{align*}
        {\tt At\_stopper\_after\_even\_digit}( & \! \left< {\tt at\_stopper\_after\_even\_digit}, bx, 0 \right> \!, \\
                                               & \! \left< {\tt write\_odd\_digit},               bx, 0 \right> )
    \end{align*}
    from the general gadget shown in Figure~\ref{fig:Gadget_at_stopper_after_even_digit}. Note that the last argument in the encodings for the input and output glues corresponds to the value of $c$ from the previous {\tt Repeating\_after\_odd\_digit} gadget. 
    This step creates a gadget that corresponds to the gadget shown in Figure~\ref{fig:General_overview_at_stopper_after_even_digit}.
    The total number of gadgets created by this step is
    $O(M)$.
    %

    \item When $(x + 1) \mod M = 0$, create
    \begin{align*}
        {\tt At\_stopper\_after\_even\_digit}( & \! \left< {\tt at\_stopper\_after\_even\_digit}, bx, 1 \right> \!, \\
                                               & \! \left< {\tt write\_odd\_digit},           b 0^{m}, 1 \right> )
    \end{align*}
    from the general gadget shown in Figure~\ref{fig:Gadget_stopper_after_even_digit}.
    Otherwise, create
    \begin{align*}
        {\tt At\_stopper\_after\_even\_digit}( & \! \left< {\tt at\_stopper\_after\_even\_digit}, bx, 1 \right> \!, \\
                                               & \! \left< {\tt write\_odd\_digit},               bz, 0 \right> )
    \end{align*}
    from the general gadget shown in Figure~\ref{fig:Gadget_stopper_after_even_digit}, where $z \in \{0, 1\}^m$ is the zero-padded binary representation of the value $x + 1$.
    This step creates a gadget that corresponds to the gadget shown in Figure~\ref{fig:General_overview_at_stopper_after_even_digit}.
    The total number of gadgets created by this step is
    $O(M)$.
\end{itemize}

%
%
For each $c \in \{0, 1\}$:
\begin{itemize}
    \item For each $x \in \{0, 1\}^{m}$, create
    \begin{align*}
        {\tt Write\_odd\_digit\_0}( & \! \left< {\tt write\_odd\_digit}, 0x, c    \right> \!, \\
                                    & \! \left< {\tt write\_odd\_digit},  x, c, 0 \right> )
    \end{align*}
    from the general gadget shown in Figure~\ref{fig:Gadget_write_odd_digit_0} and create
    \begin{align*}
        {\tt Write\_odd\_digit\_1}( & \! \left< {\tt write\_odd\_digit}, 1x, c    \right> \!, \\
                                    & \! \left< {\tt write\_odd\_digit},  x, c, 1 \right> )
    \end{align*}
    from the general gadget shown in Figure~\ref{fig:Gadget_write_odd_digit_1}.
    This step creates gadgets that correspond to the first gadget to self-assemble in Figure~\ref{fig:General_overview_write_odd_digit}.
    The total number of gadgets created by this step is $O(M)$.
    Here we introduce an additional value to the output glues of these gadgets, indicating whether this digit is the most significant digit.
    We use a $1$ to indicate that it is the most significant digit and a $0$ otherwise.
\end{itemize}

%
%
For each $c \in \{0, 1\}$ and each $d \in \{0, 1\}$, where $d$ is the most significant digit indicator that was introduced in the previous step:
\begin{itemize}
    %
    %
    \item For each $x \in \{0, 1\}^{i}$, for $1 \leq i < m$, create
    \begin{align*}
        {\tt Write\_odd\_digit\_0}( & \! \left< {\tt write\_odd\_digit}, 0x, c, d \right> \!, \\
                                    & \! \left< {\tt write\_odd\_digit},  x, c, d \right> )
    \end{align*}
    from the general gadget shown in Figure~\ref{fig:Gadget_write_odd_digit_0} and create
    \begin{align*}
        {\tt Write\_odd\_digit\_1}( & \! \left< {\tt write\_odd\_digit}, 1x, c, d \right> \!, \\
                                    & \! \left< {\tt write\_odd\_digit},  x, c, d \right> )
    \end{align*}
    from the general gadget shown in Figure~\ref{fig:Gadget_write_odd_digit_1}.
    This step creates gadgets that correspond to all but the first and last gadgets to self-assemble in Figure~\ref{fig:General_overview_write_odd_digit}.
    The total number of gadgets created by this step is $O(M)$.
    %

    %
    %
    \item Create
    \begin{align*}
        {\tt Write\_odd\_digit\_0}( & \! \left< {\tt write\_odd\_digit},  0, c, d    \right> \!, \\
                                    & \! \left< {\tt single\_tile\_0},       c, d, 0 \right> )
    \end{align*}
    from the general gadget shown in Figure~\ref{fig:Gadget_write_odd_digit_0} and create
    \begin{align*}
        {\tt Write\_odd\_digit\_1}( & \! \left< {\tt write\_odd\_digit}, 1, c, d    \right> \!, \\
                                    & \! \left< {\tt single\_tile\_0},      c, d, 0 \right> )
    \end{align*}
    from the general gadget shown in Figure~\ref{fig:Gadget_write_odd_digit_1}.
    This step creates a gadget that corresponds to the last gadget to self-assemble in Figure~\ref{fig:General_overview_write_odd_digit}, which is the most significant bit.
    The total number of gadgets created by this step is $O \left( 1 \right)$.
    %

    %
    %
    %
    %
    %
    %
    %
    %
    %
    %
    %
    %
    %
    %
    %

    \item For each $i = 0,\dots,(3m + 3) - 2$, create
    \begin{align*}
        {\tt Single\_tile}( & \! \left< {\tt single\_tile\_0}, c, d, i     \right> \!, \\
                            & \! \left< {\tt single\_tile\_0}, c, d, i + 1 \right> )
    \end{align*}
    from the general gadget shown in Figure~\ref{fig:Gadget_single_tile}.
    This step creates gadgets that correspond to all but the last gadget to self-assemble in Figure~\ref{fig:General_overview_single_tile}.
    The total number of gadgets created by this step is $O(m)$.
    %

    \item Create
    \begin{align*}
        {\tt Single\_tile}( & \! \left< {\tt single\_tile\_0},            c, d, (3m + 3) - 1 \right> \!, \\
                            & \! \left< {\tt stopper\_after\_odd\_digit}, c, d               \right> )
    \end{align*}
    from the general gadget shown in Figure~\ref{fig:Gadget_single_tile}.
    This step creates a gadget that corresponds to the last gadget to self-assemble in Figure~\ref{fig:General_overview_single_tile}.
    The total number of gadgets created by this step is $O \left( 1 \right)$.
\end{itemize}

For each $c \in \{0, 1\}$ and each $d \in \{0, 1\}$:
\begin{itemize}
    \item Create
    \begin{align*}
        {\tt Stopper\_after\_odd\_digit}( & \! \left< {\tt stopper\_after\_odd\_digit}, c, d    \right> \!, \\
                                          & \! \left< {\tt single\_tile\_opposite\_3},  c, d, 0 \right> )
    \end{align*}
    from the general gadget shown in Figure~\ref{fig:Gadget_stopper_after_odd_digit}.
    This step creates a gadget that corresponds to the gadget shown in Figure~\ref{fig:General_overview_stopper_after_odd_digit}.
    The total number of gadgets created by this step is $O \left( 1 \right)$.
    %

    \item For each $i = 0,\dots,\left(1 + 3\left(m + 1\right) + \left(8 + 3m - 4\right) + 1\right) - 2$, create
    \begin{align*}
        {\tt Single\_tile\_opposite}( & \! \left< {\tt single\_tile\_opposite\_3}, c, d, i     \right> \!, \\
                                      & \! \left< {\tt single\_tile\_opposite\_3}, c, d, i + 1 \right> )
    \end{align*}
    from the general gadget shown in Figure~\ref{fig:Gadget_single_tile_opposite}.
    This step creates gadgets that correspond to all but the last gadget to self-assemble in Figure~\ref{fig:General_overview_single_tile_opposite_3}.
    The total number of gadgets created by this step is $O(m)$.
    %

    \item Create
    \begin{align*}
        {\tt Single\_tile\_opposite}( & \! \left< {\tt single\_tile\_opposite\_3}, c, d, \left(1 + 3\left(m + 1\right) + \left(8 + 3m - 4\right) + 1\right) - 1 \right> \!, \\
                                      & \! \left< {\tt between\_digits},           c, d                                                                         \right> )
    \end{align*}
    from the general gadget shown in Figure~\ref{fig:Gadget_single_tile_opposite}.
    This step creates a gadget that corresponds to the last gadget to self-assemble in Figure~\ref{fig:General_overview_single_tile_opposite_3}.
    The total number of gadgets created by this step is $O \left( 1 \right)$.
    %

    \item Create
    \begin{align*}
        {\tt Between\_digits}( & \! \left< {\tt between\_digits},           c, d    \right> \!, \\
                               & \! \left< {\tt single\_tile\_opposite\_4}, c, d, 0 \right> )
    \end{align*}
    from the general gadget shown in Figure~\ref{fig:Gadget_between_digits}.
    This step creates a gadget that corresponds to the gadget shown in Figure~\ref{fig:General_overview_between_digits}.
    The total number of gadgets created by this step is $O \left( 1 \right)$.
    %

    \item For each $i = 0,\dots,(3m + 6) - 2$, create
    \begin{align*}
        {\tt Single\_tile\_opposite}( & \! \left< {\tt single\_tile\_opposite\_4}, c, d, i     \right> \!, \\
                                      & \! \left< {\tt single\_tile\_opposite\_4}, c, d, i + 1 \right> )
    \end{align*}
    from the general gadget shown in Figure~\ref{fig:Gadget_single_tile_opposite}.
    This step creates gadgets that correspond to all but the last gadget to self-assemble in Figure~\ref{fig:General_overview_single_tile_opposite_4}.
    The total number of gadgets created by this step is $O(m)$.
\end{itemize}

For each $c \in \{0, 1\}$:
\begin{itemize}
    \item Create
    \begin{align*}
        {\tt Single\_tile\_opposite}( & \! \left< {\tt single\_tile\_opposite\_4}, c, 0, (3m + 6) - 1 \right> \!, \\
                                      & \! \left< {\tt z1\_to\_z0},                c                  \right> )
    \end{align*}
    from the general gadget shown in Figure~\ref{fig:Gadget_single_tile_opposite}.
    This step creates a gadget that corresponds to the last gadget to self-assemble in Figure~\ref{fig:General_overview_single_tile_opposite_4}.
    The total number of gadgets created by this step is $O \left( 1 \right)$.
    %


    \item Create
    \begin{align*}
        {\tt Z1\_to\_z0}( & \! \left< {\tt z1\_to\_z0},                c    \right> \!, \\
                          & \! \left< {\tt single\_tile\_opposite\_5}, c, 0 \right> )
    \end{align*}
    from the general gadget shown in Figure~\ref{fig:Gadget_z_1_to_z_0}.
    This step creates a gadget that corresponds to the gadget shown in Figure~\ref{fig:General_overview_z_1_to_z_0}.
    The total number of gadgets created by this step is $O \left( 1 \right)$.
    %

    \item For each $i = 0,\dots,\left(3m + 6 + 3\left(m + 1\right) + 8 + 3m + 1 + 1\right) - 2$, create
    \begin{align*}
        {\tt Single\_tile\_opposite}( & \! \left< {\tt single\_tile\_opposite\_5}, c, i     \right> \!, \\
                                      & \! \left< {\tt single\_tile\_opposite\_5}, c, i + 1 \right> )
    \end{align*}
    from the general gadget shown in Figure~\ref{fig:Gadget_single_tile_opposite}.
    This step creates gadgets that correspond to all but the last gadget to self-assemble in Figure~\ref{fig:General_overview_single_tile_opposite_5}.
    The total number of gadgets created by this step is $O(m)$.
    %

    \item Create
    \begin{align*}
        {\tt Single\_tile\_opposite}( & \! \left< {\tt single\_tile\_opposite\_5}, c, \left(3m + 6 + 3\left(m + 1\right) + 8 + 3m + 1 + 1\right) - 1 \right> \!, \\
                                      & \! \left< {\tt start\_digit\_region},      c                                                                 \right> )
    \end{align*}
    from the general gadget shown in Figure~\ref{fig:Gadget_single_tile_opposite}.
    This step creates a gadget that corresponds to the last gadget to self-assemble in Figure~\ref{fig:General_overview_single_tile_opposite_5}.
    The total number of gadgets created by this step is $O \left( 1 \right)$.
    %

    \item If $M = 2$, create
    \begin{align*}
            {\tt Start\_digit\_region}( & \! \left< {\tt start\_digit\_region}, c    \right> \!, \\
                                        & \! \left< {\tt read\_MSB}, 1,         c, 0 \right> \!, \\
                                        & \! \left< {\tt read\_MSB}, 0,         c, 0 \right> )
    \end{align*}
    from the general gadget shown in Figure~\ref{fig:Gadget_start_digit_region}, otherwise create
    \begin{align*}
        {\tt Start\_digit\_region}( & \! \left< {\tt start\_digit\_region}, c    \right> \!, \\
                                    & \! \left< {\tt read\_non\_MSB}, 1,    c, 0 \right> \!, \\
                                    & \! \left< {\tt read\_non\_MSB}, 0,    c, 0 \right> )
    \end{align*}
    from the general gadget shown in Figure~\ref{fig:Gadget_start_digit_region}.
    This step creates gadgets that correspond to the gadget shown in Figure~\ref{fig:General_overview_start_digit_region}.
    The total number of gadgets created by this step is
    $O \left( 1 \right)$.
\end{itemize}

Here we create the {\tt Single\_tile\_opposite} gadgets that correspond to the last gadget to attach in Figure~\ref{fig:General_overview_single_tile_opposite_4}, and to which a {\tt Reset\_turn\_corner} gadget that corresponds to the gadget shown in Figure~\ref{fig:General_overview_return_turn_corner} attaches. In the gadgets being created here, the value of an incoming arithmetic carry (the second argument in the encoding of the input glue) is 0 and the value of the most significant digit indicator bit (the third argument in the encoding of the input glue) is 1.
If $k \mod 4 \in \{0, 1\}$, create
\begin{align*}
    {\tt Single\_tile\_opposite}( & \! \left< {\tt single\_tile\_opposite\_4}, 0, 1, (3m + 6) - 1 \right> \!, \\
                                  & \! \left< {\tt reset\_turn\_corner}                          \right>  )
\end{align*}
from the general gadget shown in Figure~\ref{fig:Gadget_single_tile_opposite}.
Note that, if $c = 0$, then the counter should start self-assembling back towards the least significant and initiate the next increment step.
This step creates the gadget that corresponds to the last gadget to self-assemble in Figure~\ref{fig:General_overview_single_tile_opposite_4}.
A single gadget was created by this step.
If $k \mod 4 \in \{0, 1\}$, create
\begin{align*}
    {\tt Single\_tile\_opposite}( & \! \left< {\tt single\_tile\_opposite\_4}, 1, 1, (3m + 6) - 1 \right> \!, \\
                                  & \! \left< {\tt purple\_monkey\_dishwasher}                                          \right>  )
\end{align*}
from the general gadget shown in Figure~\ref{fig:Gadget_single_tile_opposite}.
In this case, an arithmetic carry propagated through the most significant digit, which means this gadget will have an output glue that does not match any other input glue, terminating the assembly (or initiating filler tiles).
This step creates the gadget that corresponds to the last gadget to self-assemble in Figure~\ref{fig:General_overview_single_tile_opposite_4}.
A single gadget was created by this step.
If $k \mod 4 \in \{2, 3\}$, then for each $c \in \{0, 1\}$, create
\begin{align*}
    {\tt Single\_tile\_opposite}( & \! \left< {\tt single\_tile\_opposite\_4}, c, 1, (3m + 6) - 1 \right> \!, \\
                                  & \! \left< {\tt special\_z1\_to\_z0},       c                  \right> )
\end{align*}
from the general gadget shown in Figure~\ref{fig:Gadget_single_tile_opposite}.
Since $k \mod 4 \in \{2, 3\}$, this gadget self-assembles after writing the most significant odd digit, with the value of the arithmetic carry, $c$, propagating into the special case digit region in which the most significant digit is contained. 
This step creates a gadget that corresponds to the last gadget to self-assemble in Figure~\ref{fig:General_overview_single_tile_opposite_4}.
The total number of gadgets created by this step is $O \left( 1 \right)$.

The following steps create gadgets for the special case, i.e., in each step it is assumed that $k \mod 4 \in \{2, 3\}$.

Create
\begin{align*}
    {\tt Single\_tile\_opposite}( & \! \left< {\tt seed\_single\_tile\_opposite\_1}, w - 2, (3m + 6) - 1 \right> \!, \\
                                  & \! \left< {\tt seed\_to\_next\_significant\_digit\_region},       w - 2                  \right> )
\end{align*}
from the general gadget shown in Figure~\ref{fig:Gadget_single_tile_opposite}.
This step creates a gadget that corresponds to the last gadget to self-assemble in Figure~\ref{fig:General_overview_single_tile_opposite_4}.
A single gadget is created by this step.
%

Create
\begin{align*}
    {\tt Seed\_to\_next\_digit\_region}( & \! \left< {\tt seed\_to\_next\_significant\_digit\_region}, w - 2                 \right> \!, \\
                                         & \! \left< {\tt seed\_single\_tile\_0},                            w - 1, 0  \right>  )
\end{align*}
from the general gadget shown in Figure~\ref{fig:Gadget_seed_to_next_significant_digit_region}.
This step creates the gadget that corresponds to the gadget shown in Figure~\ref{fig:Special_initial_overview_seed_to_most_significant_digit_region}.
A single gadget is created by this step.
%

For each $i = 0,\ldots,\left(3m + 6 + 3(m + 1) + 8\right)- 2$, create
\begin{align*}
    {\tt Single\_tile}( & \! \left< {\tt seed\_single\_tile\_0}, w - 1, i \right> \!, \\
                        & \! \left< {\tt seed\_single\_tile\_0}, w - 1, i + 1 \right>  )
\end{align*}
from the general gadget shown in Figure~\ref{fig:Gadget_single_tile}.
This step creates gadgets that correspond to all but the last gadget to self-assemble in Figure~\ref{fig:Special_initial_overview_single_tile_0}.
The total number of gadgets created by this step is $O(m)$.
%

Create
\begin{align*}
    {\tt Single\_tile}( & \! \left< {\tt seed\_single\_tile\_0}, w - 1, \left(3m + 6 + 3(m + 1) + 8\right) - 1  \right> \!, \\
                        & \! \left< {\tt seed\_single\_tile\_0}, w - 1, 0                                \right>  )
\end{align*}
from the general gadget shown in Figure~\ref{fig:Gadget_single_tile}.
This step creates the gadget that corresponds to the last gadget to self-assemble in Figure~\ref{fig:Special_initial_overview_single_tile_0}.
A single gadget is created by this step.
%

%
%
For each $j = 0,\ldots, m - 2$, create
\begin{align*}
    {\tt Write\_even\_digit\_0}( & \! \left< {\tt seed\_write\_even\_digit}, w - 1, j  \right> \!, \\
                                 & \! \left< {\tt seed\_write\_even\_digit}, w - 1, j + 1 \right> )
\end{align*}
from the general gadget shown in Figure~\ref{fig:Gadget_write_even_digit_0}, if the $j^{th}$ bit of $d_{w - 1}$ is 0, otherwise create
\begin{align*}
    {\tt Write\_even\_digit\_1}( & \! \left< {\tt seed\_write\_even\_digit}, w - 1, j  \right> \!, \\
                                 & \! \left< {\tt seed\_write\_even\_digit}, w - 1, j + 1   \right> )
\end{align*}
from the general gadget shown in Figure~\ref{fig:Gadget_write_even_digit_1}.
This step creates gadgets that corresponds to all but the last gadget to self-assemble in Figure~\ref{fig:Special_overview_write_even_digit}.
These are the non-most significant bits of the most significant even digit.
The total number of gadgets created by this step is $O(m)$.
Create
\begin{align*}
    {\tt Write\_even\_digit\_0}( & \! \left< {\tt seed\_write\_even\_digit},       w - 1, m - 1 \right> \!, \\
                                 & \! \left< {\tt special\_single\_tile\_1}, 0, 0                     \right> )
\end{align*}
from the general gadget shown in Figure~\ref{fig:Gadget_write_even_digit_0}, if the most significant bit of $d_{w - 1}$ is 0, otherwise create
\begin{align*}
    {\tt Write\_even\_digit\_1}( & \! \left< {\tt seed\_write\_even\_digit},       w - 1, m - 1 \right> \!, \\
                                 & \! \left< {\tt special\_single\_tile\_1}, 0, 0                     \right> )
\end{align*}
from the general gadget shown in Figure~\ref{fig:Gadget_write_even_digit_1}.
This step creates the gadget that corresponds to the last gadget to self-assemble in Figure~\ref{fig:Special_overview_write_even_digit}.
These are the most significant bits of the even digits.
A single gadget is created by this step.
%

%

For each $i = 0,\ldots,\left(1 + 4 + 3m + 6 + 3(m + 1) + 3\right) - 2$, create
\begin{align*}
    {\tt Single\_tile}( & \! \left< {\tt special\_single\_tile\_1}, i     \right> \!, \\
                        & \! \left< {\tt special\_single\_tile\_1}, i + 1 \right>  )
\end{align*}
from the general gadget shown in Figure~\ref{fig:Gadget_single_tile}.
This step creates gadgets that correspond to all but the last gadget to self-assemble in Figure~\ref{fig:Special_initial_overview_single_tile_1}.
The total number of gadgets created by this step is $O(m)$.
%

Create
\begin{align*}
    {\tt Single\_tile}( & \! \left< {\tt special\_single\_tile\_1}, \left(1 + 4 + 3m + 6 + 3(m + 1) + 3\right) - 1 \right> \!, \\
                        & \! \left< {\tt special\_stopper}                                                         \right>  )
\end{align*}
from the general gadget shown in Figure~\ref{fig:Gadget_single_tile}.
This step creates a gadget that corresponds to the last gadget to self-assemble in Figure~\ref{fig:Special_initial_overview_single_tile_1}.
A single gadget is created by this step.
%

Create
\begin{align*}
    {\tt Special\_stopper}( & \! \left< {\tt special\_stopper}                      \right> \!, \\
                            & \! \left< {\tt special\_single\_tile\_opposite\_3}, 0 \right>  )
\end{align*}
from the general gadget shown in Figure~\ref{fig:Gadget_special_stopper}.
This step creates a gadget that corresponds to the last gadget to self-assemble in Figure~\ref{fig:Special_initial_overview_special_stopper}.
A single gadget is created by this step.
%

For each $i = 0,\ldots,\left(4 + 3m + 6 + 3(m + 1) + 2\right) - 2$, create
\begin{align*}
    {\tt Single\_tile\_opposite}( & \! \left< {\tt special\_single\_tile\_opposite\_3}, i     \right> \!, \\
                                    & \! \left< {\tt special\_single\_tile\_opposite\_3},  i + 1 \right>  )
\end{align*}
from the general gadget shown in Figure~\ref{fig:Gadget_single_tile_opposite}.
This step creates all but the last gadget to self-assemble in Figure~\ref{fig:Special_initial_overview_single_tile_opposite_0}.
The total number of gadgets created by this step is $O(m)$.
%

Create
\begin{align*}
    {\tt Single\_tile\_opposite}( & \! \left< {\tt special\_single\_tile\_opposite\_3},              \left(4 + 3m + 6 + 3(m + 1) + 2\right) - 1 \right> \!, \\
                                    & \! \left< {\tt special\_at\_MSB\_of\_most\_significant\_digit}                                              \right>  )
\end{align*}
from the general gadget shown in Figure~\ref{fig:Gadget_single_tile_opposite}.
This step creates a gadget that corresponds to the last gadget to self-assemble in Figure~\ref{fig:Special_initial_overview_single_tile_opposite_0}.
A single gadget is created by this step.
%

Create
\begin{align*}
    {\tt Special\_at\_MSB\_of\_most\_significant\_digit}( & \! \left< {\tt special\_at\_MSB\_of\_most\_significant\_digit}   \right> \!, \\
                                                          & \! \left< {\tt special\_single\_tile\_opposite\_1},            0 \right>  )
\end{align*}
from the general gadget shown in Figure~\ref{fig:Gadget_special_at_MSB_of_most_significant_digit}.
This step creates a gadget that corresponds to the last gadget to self-assemble in Figure~\ref{fig:Special_initial_overview_special_at_MSB_of_most_significant_digit}.
A single gadget is created by this step.
%

For each $i = 0,\ldots,\left(6 + 3m\right) - 2$, create
\begin{align*}
    {\tt Single\_tile\_opposite}( & \! \left< {\tt special\_single\_tile\_opposite\_1},   i     \right> \!, \\
                                    & \! \left< {\tt special\_single\_tile\_opposite\_1}, i + 1 \right>  )
\end{align*}
from the general gadget shown in Figure~\ref{fig:Gadget_single_tile_opposite}.
This step creates gadgets that correspond to all but the last gadget to self-assemble in Figure~\ref{fig:Special_initial_overview_single_tile_opposite_1}.
The total number of gadgets created by this step is $O(m)$.
%

Create
\begin{align*}
    {\tt Single\_tile\_opposite}( & \! \left< {\tt special\_single\_tile\_opposite\_1}, \left(6 + 3m\right) - 1 \right> \!, \\
                                  & \! \left< {\tt special\_at\_MSB\_of\_odd\_digit}                            \right>  )
\end{align*}
from the general gadget shown in Figure~\ref{fig:Gadget_single_tile_opposite}.
This step creates a gadget that corresponds to the last gadget to self-assemble in Figure~\ref{fig:Special_initial_overview_single_tile_opposite_1}.
A single gadget is created by this step.
%

Create
\begin{align*}
    {\tt At\_MSB\_of\_odd\_digit}( & \! \left< {\tt special\_at\_MSB\_of\_odd\_digit}       \right> \!, \\
                                   & \! \left< {\tt special\_single\_tile\_opposite\_2}, 0 \right>  )
\end{align*}
from the general gadget shown in Figure~\ref{fig:Gadget_at_MSB_of_odd_digit}.
This step creates a gadget that corresponds to the gadget shown in Figure~\ref{fig:Special_initial_overview_at_MSB_of_odd_digit}.
A single gadget is created by this step.
%

For each $i = 0,\ldots,\left(2 + 3m + 6 + 3(m + 1) + 1\right) - 2$, create
\begin{align*}
    {\tt Single\_tile\_opposite}( & \! \left< {\tt special\_single\_tile\_opposite\_2},   i     \right> \!, \\
                                  & \! \left< {\tt special\_single\_tile\_opposite\_2}, i + 1 \right>  )
\end{align*}
from the general gadget shown in Figure~\ref{fig:Gadget_single_tile_opposite}.
This step creates gadgets that correspond to all but the last gadget to self-assemble in Figure~\ref{fig:Special_initial_overview_single_tile_opposite_2}.
The total number of gadgets created by this step is $O(m)$.
Create
\begin{align*}
    {\tt Single\_tile\_opposite}( & \! \left< {\tt special\_single\_tile\_opposite\_2}, \left(2 + 3m + 6 + 3(m + 1) + 1\right) - 1 \right> \!, \\
                                  & \! \left< {\tt reset\_turn\_corner}                                                           \right>  )
\end{align*}
from the general gadget shown in Figure~\ref{fig:Gadget_single_tile_opposite}.
This step creates the gadget that corresponds to the last gadget to self-assemble in Figure~\ref{fig:Special_initial_overview_single_tile_opposite_2}.
A single gadget is created by this step.

Create
\begin{align*}
    {\tt Reset\_turn\_corner}( & \! \left< {\tt reset\_turn\_corner}    \right> \!, \\
                                & \! \left< {\tt reset\_single\_tile}, 1 \right> )
\end{align*}
from the general gadget shown in Figure~\ref{fig:Gadget_return_turn_corner}. The second argument in the encoding of the output glue is 1, which allows {\tt Reset\_single\_tile} gadgets that were previously created to self-assemble.
This step creates the gadget that corresponds to the gadget shown in Figure~\ref{fig:General_overview_return_turn_corner}.
A single gadget is created by this step.
For each $c \in \{0, 1\}$:
\begin{itemize}
    \item Create
    \begin{align*}
        {\tt Z1\_to\_z0}( & \! \left< {\tt special\_z1\_to\_z0},                c    \right> \!, \\
                          & \! \left< {\tt special\_single\_tile\_opposite\_0}, c, 0 \right> )
    \end{align*}
    from the general gadget shown in Figure~\ref{fig:Gadget_z_1_to_z_0}.
    This step creates a gadget that corresponds to the gadget shown in Figure~\ref{fig:General_overview_z_1_to_z_0}.
    The total number of gadgets created by this step is $O \left( 1 \right)$.

    \item For each $i = 0,\ldots,(3m + 2) - 2$, create
    \begin{align*}
        {\tt Single\_tile\_opposite}( & \! \left< {\tt special\_single\_tile\_opposite\_0}, c, i     \right> \!, \\
                                      & \! \left< {\tt special\_single\_tile\_opposite\_0}, c, i + 1 \right> )
    \end{align*}
    from the general gadget shown in Figure~\ref{fig:Gadget_single_tile_opposite}.
    This step creates gadgets that correspond to all but the last gadget to self-assemble in Figure~\ref{fig:Special_overview_single_tile_opposite}.
    The total number of gadgets created by this step is $O(m)$.

    \item Create
    \begin{align*}
        {\tt Single\_tile\_opposite}( & \! \left< {\tt special\_single\_tile\_opposite\_0}, c, (3m + 2) - 1 \right> \!, \\
                                      & \! \left< {\tt special\_start\_digit\_region},      c               \right> )
    \end{align*}
    from the general gadget shown in Figure~\ref{fig:Gadget_single_tile_opposite}.
    This step creates a gadget that corresponds to the last gadget to self-assemble in Figure~\ref{fig:Special_overview_single_tile_opposite}.
    The total number of gadgets created by this step is $O \left( 1 \right)$.
    %

    \item If $M = 2$, create
    \begin{align*}
            {\tt Start\_digit\_region}( & \! \left< {\tt special\_start\_digit\_region}, c    \right> \!, \\
                                        & \! \left< {\tt special\_read\_MSB}, 1,         c \right> \!, \\
                                        & \! \left< {\tt special\_read\_MSB}, 0,         c \right> )
    \end{align*}
    from the general gadget shown in Figure~\ref{fig:Gadget_start_digit_region}, otherwise create
    \begin{align*}
        {\tt Start\_digit\_region}( & \! \left< {\tt special\_start\_digit\_region}, c \right> \!, \\
                                    & \! \left< {\tt special\_read\_non\_MSB}, 1,    c \right> \!, \\
                                    & \! \left< {\tt special\_read\_non\_MSB}, 0,    c \right> )
    \end{align*}
    from the general gadget shown in Figure~\ref{fig:Gadget_start_digit_region}.
    This step creates a gadget that corresponds to the last gadget to self-assemble in Figure~\ref{fig:Special_overview_start_read_most_significant_even_digit}.
    The total number of gadgets created by this step is $O \left( 1 \right)$.
    %

    \item For each $x \in \{0, 1\}^{i}$, for $1 \leq i < m - 1$, create
    \begin{align*}
        {\tt Read\_non\_MSB\_0}( & \! \left< {\tt special\_read\_non\_MSB},  x, c \right> \!, \\
                                 & \! \left< {\tt special\_read\_non\_MSB}, x1, c \right> \!, \\ 
                                 & \! \left< {\tt special\_read\_non\_MSB}, x0, c \right> )      
    \end{align*}
    from the general gadget shown in Figure~\ref{fig:Gadget_read_non_MSB_0} if $x$ starts with $0$, otherwise create
    \begin{align*}
        {\tt Read\_non\_MSB\_1}( & \! \left< {\tt special\_read\_non\_MSB},  x, c \right> \!, \\
                                 & \! \left< {\tt special\_read\_non\_MSB}, x1, c \right> \!, \\ 
                                 & \! \left< {\tt special\_read\_non\_MSB}, x0, c \right> )      
    \end{align*}
    from the general gadget shown in Figure~\ref{fig:Gadget_read_non_MSB_1}.
    This step creates gadgets that correspond to all but the last gadget to self-assemble in Figure~\ref{fig:Special_overview_read_non_MSB}.
    The total number of gadgets created by this step is $O(M)$.
    %

    \item For each $x \in \{0, 1\}^{m - 1}$, create
    \begin{align*}
        {\tt Read\_non\_MSB\_0}( & \! \left< {\tt special\_read\_non\_MSB}, x,  c \right> \!, \\
                                 & \! \left< {\tt special\_read\_MSB},     x1,  c \right> \!, \\ 
                                 & \! \left< {\tt special\_read\_MSB},     x0,  c \right> )      
    \end{align*}
    from the general gadget shown in Figure~\ref{fig:Gadget_read_non_MSB_0} if $x$ starts with $0$, otherwise create
    \begin{align*}
        {\tt Read\_non\_MSB\_1}( & \! \left< {\tt special\_read\_non\_MSB}, x, c \right> \!, \\
                                 & \! \left< {\tt special\_read\_MSB},     x1, c \right> \!, \\ 
                                 & \! \left< {\tt special\_read\_MSB},     x0, c \right> )      
    \end{align*}
    from the general gadget shown in Figure~\ref{fig:Gadget_read_non_MSB_1}.
    This step creates gadgets that corresponds to the last gadget to self-assemble in Figure~\ref{fig:Special_overview_read_non_MSB}.
    The total number of gadgets created by this step is $O(M)$.
    %

    \item For each $x \in \{0, 1\}^{m}$, create
    \begin{align*}
        {\tt Read\_MSB\_0}( & \! \left< {\tt special\_read\_MSB},                     x, c \right> \!, \\
                            & \! \left< {\tt special\_repeating\_after\_even\_digit}, x, c \right> )
    \end{align*}
    from the general gadget shown in Figure~\ref{fig:Gadget_read_MSB_0} if $x$ starts with $0$, otherwise create
    \begin{align*}
        {\tt Read\_MSB\_1}( & \! \left< {\tt special\_read\_MSB},                     x, c \right> \!, \\
                            & \! \left< {\tt special\_repeating\_after\_even\_digit}, x, c \right> )
    \end{align*}
    from the general gadget shown in Figure~\ref{fig:Gadget_read_MSB_1}.
    This step creates gadgets that corresponds to the gadget shown in Figure~\ref{fig:Special_overview_read_MSB}.
    The total number of gadgets created by this step is $O(M)$.
    %

    \item For each $x \in \{0, 1\}^{m}$, create
    \begin{align*}
        {\tt Repeating\_after\_even\_digit}( & \! \left< {\tt special\_repeating\_after\_even\_digit}, x, c \right> \!, \\
                                             & \! \left< {\tt at\_special\_stopper},                   x, c \right> \!, \\ 
                                             & \! \left< {\tt special\_repeating\_after\_even\_digit}, x, c \right> )      
    \end{align*}
    from the general gadget shown in Figure~\ref{fig:Gadget_repeating_after_even_digit}.
    This step creates gadgets that correspond to gadget shown in Figure~\ref{fig:Special_overview_repeating_after_even_digit}.
    The total number of gadgets created by this step is $O(M)$.
\end{itemize}

%
%
For each $x \in \{0, 1\}^{m}$:
\begin{itemize}
    \item Create
    \begin{align*}
        {\tt At\_special\_stopper}( & \! \left< {\tt at\_special\_stopper},        x, 0 \right> \!, \\
                                    & \! \left< {\tt special\_write\_even\_digit}, x, 0 \right> )
    \end{align*}
    from the general gadget shown in Figure~\ref{fig:Gadget_at_special_stopper}. Note that the last argument in the encodings for the input and output glues corresponds to the value of $c$ from the previous {\tt Repeating\_after\_even\_digit} gadget.
    This step creates a gadget that corresponds to the gadget shown in Figure~\ref{fig:Special_overview_at_special_stopper}.
    The total number of gadgets created by this step is $O(M)$.
    \item When $(x + 1) \mod M = 0$, create
    \begin{align*}
        {\tt At\_special\_stopper}( & \! \left< {\tt at\_special\_stopper},            x, 1 \right> \!, \\
                                    & \! \left< {\tt special\_write\_even\_digit}, 0^{m}, 1 \right> )
    \end{align*}
    from the general gadget shown in Figure~\ref{fig:Gadget_at_special_stopper}.
    Otherwise, create
    \begin{align*}
        {\tt At\_special\_stopper}( & \! \left< {\tt at\_special\_stopper},        x, 1 \right> \!, \\
                                    & \! \left< {\tt special\_write\_even\_digit}, z, 0 \right> )
    \end{align*}
    from the general gadget shown in Figure~\ref{fig:Gadget_at_special_stopper}, where $z \in \{0, 1\}^m$ is the zero-padded binary representation of the value $x + 1$.
    This step creates a gadget that corresponds to the gadget shown in Figure~\ref{fig:Special_overview_at_special_stopper}.
    The total number of gadgets created by this step is $O(M)$.
\end{itemize}

%
%
For each $c \in \{0, 1\}$ and each $x \in \{0, 1\}^{i}$, for $1 \leq i < m$, create
\begin{align*}
    {\tt Write\_even\_digit\_0}( & \! \left< {\tt special\_write\_even\_digit}, 0x, c \right> \!, \\
                                 & \! \left< {\tt special\_write\_even\_digit},  x, c \right> )
\end{align*}
from the general gadget shown in Figure~\ref{fig:Gadget_write_even_digit_0} and create
\begin{align*}
    {\tt Write\_even\_digit\_1}( & \! \left< {\tt special\_write\_even\_digit}, 1x, c \right> \!, \\
                                 & \! \left< {\tt special\_write\_even\_digit},  x, c \right> )
\end{align*}
from the general gadget shown in Figure~\ref{fig:Gadget_write_even_digit_1}.
This step creates gadgets that correspond to all but the last gadget to self-assemble in Figure~\ref{fig:Special_overview_write_even_digit}.
The total number of gadgets created by this step is $O(M)$.
The following four steps create the gadgets that write the most significant bit of an even digit contained in the special case digit region.
In each of the following steps, the third argument of the input glue for each gadget is the value of the incoming arithmetic carry.
%

%
%
Create
\begin{align*}
    {\tt Write\_even\_digit\_0}( & \! \left< {\tt special\_write\_even\_digit}, 0, 0 \right> \!, \\
                                 & \! \left< {\tt special\_single\_tile\_1},    0    \right> )
\end{align*}
from the general gadget shown in Figure~\ref{fig:Gadget_write_even_digit_0}.
This step creates a gadget that corresponds to the last gadget to self-assemble in Figure~\ref{fig:Special_overview_write_even_digit}, when the most significant bit is 0 and the value of an incoming arithmetic carry is 0.
A single gadget was created in this step.
%

%
%
Create \begin{align*}
    {\tt Write\_even\_digit\_1}( & \! \left< {\tt special\_write\_even\_digit}, 1, 0 \right> \!, \\
                                 & \! \left< {\tt special\_single\_tile\_1},    0    \right> )
\end{align*}
from the general gadget shown in Figure~\ref{fig:Gadget_write_even_digit_1}.
This step creates a gadget that corresponds to the last gadget to self-assemble in Figure~\ref{fig:Special_overview_write_even_digit}, when the most significant bit is 1 and the value of an incoming arithmetic carry is 0.
A single gadget was created in this step.

%
%
Create
\begin{align*}
    {\tt Write\_even\_digit\_0}( & \! \left< {\tt special\_write\_even\_digit}, 0, 1 \right> \!, \\
                                 & \! \left< {\tt purple\_monkey\_dishwasher}        \right> )
\end{align*}
from the general gadget shown in Figure~\ref{fig:Gadget_write_even_digit_0}.
This step creates a gadget that corresponds to the last gadget to self-assemble in Figure~\ref{fig:Special_overview_write_even_digit}, when the most significant bit is 0 and the value of an incoming arithmetic carry is 1.
A single gadget was created in this step.
%

%
%
Create \begin{align*}
    {\tt Write\_even\_digit\_1}( & \! \left< {\tt special\_write\_even\_digit}, 1, 1 \right> \!, \\
                                 & \! \left< {\tt purple\_monkey\_dishwasher}        \right> )
\end{align*}
from the general gadget shown in Figure~\ref{fig:Gadget_write_even_digit_1}.
This step creates a gadget that corresponds to the last gadget to self-assemble in Figure~\ref{fig:Special_overview_write_even_digit}, when the most significant bit is 0 and the value of an incoming arithmetic carry is 1.
A single gadget was created in this step.

Note that the output glues of the gadgets created in the previous two steps have labels that do not match the label of any other glue.
%

%

%
Each gadget has a fixed size, so the total number of tile types in the tile set output by our construction is $O(M + km)$.

\end{document}